\documentclass[conference]{IEEEtran}
\usepackage[compatibility=false]{caption}

\usepackage{enumitem}
\usepackage{algorithm}
\usepackage{algpseudocode}
\usepackage{amsmath}
\usepackage{tikz}
\usetikzlibrary{shapes}
\usepackage{adjustbox} 
 \usepackage[subtle,tracking=normal]{savetrees}

\usepackage{subfiles}
\usepackage{amsthm}
\usepackage{amssymb}
\usepackage{mathtools}
\usetikzlibrary{shapes.geometric, arrows.meta, positioning, calc, backgrounds, fit}
\usetikzlibrary{decorations.pathreplacing}
\usepackage{multirow}
\usepackage{booktabs}
\usepackage{float}
\usepackage{graphicx}

\usepackage{subcaption}
\usepackage{url}
\newtheorem{theorem}{Theorem}
\newtheorem*{theorem*}{Theorem}
\newtheorem{remark}{Remark}
\newtheorem{lemma}{Lemma}
\newtheorem*{lemma*}{Lemma}

\newcommand{\poly}{\operatorname{poly}}

\newcommand{\IP}{\text{Pr}}
\newcommand{\IE}{\mathbb{E}}

\newcommand{\bY}{\mathbf{Y}}
\AtBeginDocument{%
  }

\ifCLASSINFOpdf
\else
\fi
\begin{document}
%
\title{Provably Secure Diffusion Watermarking Against Removal Adversaries with Batched Detector Access}

\author{
\IEEEauthorblockN{De Zhang Lee}
	\IEEEauthorblockA{National University of Singapore}
	\and
	\IEEEauthorblockN{Ee-Chien Chang}
	\IEEEauthorblockA{National University of Singapore}
        }

\maketitle

\begin{abstract}
 Recent latent-based watermarking methods can embed undetectable watermarks into AI-generated content. However, these approaches are vulnerable to removal attacks as their watermark detection boundaries are exposed, prompting the use of secret long-term transformations to hide these boundaries. We show that such defenses are insecure under a stronger threat model motivated by publicly reachable detectors: an adversary with batched detector access can  reconstruct the transformations, permanently re-exposing the decision boundary. To address this, we propose a watermarking scheme based on a hybrid of long-term and one-time secret transformations, which prevents the reuse of structure across samples and keeps the long-term secret  hidden. Besides undetectability, we show that this scheme is provably secure against removal adversaries with batched detector access, in the sense  that no probabilistic polynomial-time adversary can achieve a non-negligible advantage over an adversary restricted to adding isotropic white-noise with a $\sqrt{2}\times$ larger perturbation budget, and we show this $\sqrt{2}$ factor to be tight. Evaluations on Stable Diffusion 2.1 and XL demonstrate the attack against the existing defense, and show that our scheme preserves generation quality and withstands common perturbations and targeted removal attacks.
\end{abstract}


%
\IEEEpeerreviewmaketitle

\maketitle

\section{Introduction}

The remarkable quality of Latent Diffusion Models (LDMs) has raised significant provenance and misuse concerns, driving interest in watermarking AI-generated content (AIGC) \cite{zhao2025sok}. Notably, \cite{prc} proposed the notion of undetectability in latent-based watermarks, ensuring no polynomial-time algorithm can distinguish between watermarked and unwatermarked content.

However, recent work   \cite{ccs} demonstrated that watermark undetectability does not guarantee unremovability. Undetectability only ensures that watermarked and unwatermarked samples are indistinguishable, but it does not conceal the detector's decision boundary (e.g. the sign patterns of the individual latents in the PRC watermark \cite{prc}, exploited in \cite{ccs}). Thus, an adversary holding a known watermarked sample can perturb it directly toward this exposed boundary with small perturbations. In response, a boundary-hiding defense was proposed \cite{ccs} using a secret long-term transformation, with a proof that no polynomial-time removal attack outperforms adding random white-noise, under the assumption of a passive adversary, i.e. who observes watermarked samples but has no detector access.

The passive adversary assumption was reasonable when watermark detectors were not publicly reachable, which may not reflect current deployment settings. For example,  Google's SynthID
\cite{synthid_arxiv}, which watermarks Gemini's AIGC, already permits public detector
access \cite{synthid_website}, with an announced detection API
\cite{verge_synthid_c2pa} that would enable batched queries. Detectors may
also be reverse-engineered, e.g. \texttt{reverse-SynthID}
\cite{reverse_synthid_github}, a popular open-source approximation of the
SynthID detector. Nor is this an isolated development. Driven in part by the EU AI Act's transparency requirements~\cite{euaiact2024}, other providers are following suit; Anthropic, for example, has announced a detection API for its text watermarks~\cite{anthropic2026watermark}. This suggests that publicly reachable detectors will become the norm across modalities. 
Thus, a defense proved secure only against a passive adversary offers no guarantee once this access exists, regardless of how that access is provisioned (e.g. a public API, a reverse-engineered proxy, or a leaked detector).

In this work, we present an attack on the boundary-hiding defense under this
stronger, more realistic threat model, i.e., a watermark removal adversary with batched (non-adaptive) detector access, which we call REM2 (REM1 corresponds to the passive adversary in \cite{ccs}). We focus on batched rather than adaptive access for two reasons. First, batched queries reflect realistic deployment conditions: public detection APIs are typically rate-limited per account, but a non-adaptive query set can be parallelized across accounts and time, whereas an adaptive strategy is inherently sequential. Second, as we show, batched access already suffices to break the existing defense.  (We discuss the adaptive analogue, REM3, in Section \ref{subsection:rem3}).

\begin{figure*}[t]
    \centering
    \begin{adjustbox}{width=\textwidth}
        \begin{tikzpicture}[
            node distance=1.5cm and 1.2cm,
            proc/.style={rectangle, draw, fill=blue!5, minimum width=3.2cm, minimum height=1.5cm, align=center, font=\small},
            val/.style={ellipse, draw, fill=gray!10, minimum width=2cm, minimum height=1cm, align=center, font=\small\itshape},
            arrow/.style={-{Stealth[scale=1.2]}, thick}
        ]

        \node[val] (r) {Nonce $r$};
        \node[val] (k) [above=of r] {Keys\\$k_1, k_2, k_3$};
        \node[val] (x_in) [below=of r] {$X \sim \mathcal{N}(0, I_d)$};

        \node[proc] (prc) [right=of k, yshift=-0.5cm] {PRC Encoder \\ $C_1 \leftarrow PRC(k_1, r)$};
        \node[proc] (prf) [below=of prc] {PRF \\ $C_2 \leftarrow f_{k_3}(r)$};

        \node[val] (c1) [right=of prc] {$C_1$};
        \node[val] (c2) [right=of prf] {$C_2$};

        \node[proc] (embed) [right=1.8cm of $(c1)!0.5!(c2)$] {\textbf{EMBED($X, {C_1 || C_2}$)} };

        \node[proc, label={[font=\itshape, align=center]above:One-time\\Transformation}] (short_term) [right=1.2cm of embed] {$B \leftarrow \text{SRandOrthonormal}(r)$\\ $M=\begin{pmatrix} I_{d/2} & 0 \\ 0 & B \end{pmatrix}$};
        
        \node[proc, label={[font=\itshape, align=center]above:Long-Term Secret\\Transformation}] (long_term) [right=0.6cm of short_term] {$A \leftarrow \text{SRandOrthonormal}(k_2)$\\Matrix $A$};
        
        \node[val, fill=green!10] (output) [right=1cm of long_term] {Watermarked\\Sample\\$Y$};

        \draw[arrow] (k.east) -- ++(0.4,0) |- (prc.west);
        \draw[arrow] (k.east) -- ++(0.4,0) |- (prf.west);
        \draw[arrow] (r.east) -- ++(0.4,0) |- (prc.west);
        \draw[arrow] (r.east) -- ++(0.4,0) |- (prf.west);

        \draw[arrow] (prc) -- (c1);
        \draw[arrow] (prf) -- (c2);

        \draw[arrow] (c1.east) -- ++(0.5,0) |- ([yshift=5pt]embed.west);
        \draw[arrow] (c2.east) -- ++(0.5,0) |- ([yshift=-5pt]embed.west);

        \draw[arrow] (x_in.east) -| (embed.south);

        \draw[arrow] (embed) -- (short_term) ;
        \draw[arrow] (short_term) -- (long_term);
        \draw[arrow] (long_term) -- (output);

        \begin{scope}[on background layer]
            \node[draw, dashed, fill=orange!5, inner sep=25pt, 
                  label={[font=\bfseries]above:Hybrid Transformation $\Lambda = MA$}, 
                  fit=(short_term) (long_term)] {};
        \end{scope}

        \end{tikzpicture}
    \end{adjustbox}
    \caption{ Flowchart of  $\texttt{Sample}(\lambda, d, k)$ (Algorithm \ref{alg:sample}), for sampling watermarked latents.  The scheme leverages fresh randomness $r$ to derive a unique one-time transformation for each sample. In the hybrid transformation, this one-time transformation dynamically obfuscates the detector’s decision boundary, while the long-term secret transformation $A$ securely hides the per-sample random nonce $r$. }
    \label{fig:watermark_sample}
\end{figure*}

Lifting REM1 to REM2 is not a hypothetical strengthening: we demonstrate how a REM2 adversary can fully break the boundary-hiding defense by recovering the secret long-term transformation, thereby re-exposing the detector's decision boundaries permanently. As the
transformation is long-term, this one-time recovery cost is amortized across
all future attacks, constituting a complete break of the defense. The extension from REM1 to REM2 is analogous in spirit to the upgrade from IND-CPA to IND-CCA1 in encryption: an adversary gains oracle access to a secret-keyed algorithm, and an IND-CPA-secure scheme is not necessarily IND-CCA1-secure.

We approach watermark removal from the standpoint of provable security. Similar to certified robustness against adversarial attacks \cite{cohen2019certified},  we seek guarantees that hold against an entire class of adversaries rather than fixed attacks.
Since our adversary queries the
detector to learn the hidden decision boundary, security is a question of boundary
secrecy rather than geometry, and is naturally studied through a
cryptographic lens. Concretely, we (i) formalize detection and removal adversaries via explicit
security games (IND0/IND1 and REM1/REM2, respectively); (ii) prove an
information-theoretic guarantee for an idealized instantiation
(Theorem~\ref{thm:rem2-secure}); and (iii) lift it to a computational
guarantee for the deployed construction via standard cryptographic primitives
(Theorem~\ref{thm:computational_rem2}). Unlike prior schemes, which assume a
passive adversary \cite{ccs} or guarantee only undetectability
\cite{gs,prc}, we formally prove removal security against an adversary with batched
detector access, via cryptographic reduction rather than a certified-radius
argument.

Our construction combines a long-term and a one-time transformation. Inspired
by hybrid encryption, where a long-term public-key scheme exchanges a
one-time session key, our scheme generates a fresh one-time transformation
per sample. This dynamically obfuscates the detector's decision boundary, while
the long-term transformation hides the per-sample nonce and is used during
detection to retrieve it (Figure~\ref{fig:watermark_sample}). Because the
effective transformation is fresh per sample and its orientation hidden, any
adversarial perturbation is rendered independent of the transformation. Our worst-case security analysis proves that an adversary can at best match
an isotropic white-noise adversary with a $\sqrt{2}\times$ larger budget, and
we show this constant is tight.

Our empirical evaluation focuses on three key aspects. First, we demonstrate that our proposed attack enables an adversary with batched detector access to successfully recover the secret transformation, thereby re-exposing the detection boundary. Second, we show that our latent-based watermark integrates seamlessly into existing LDM pipelines while preserving high image quality. Third, we verify our scheme's robustness against common benign perturbations, such as Gaussian noise and JPEG compression, regeneration attacks \cite{zhao2024invisible}, as well as against the targeted removal attack specifically designed for undetectable latent-based watermarks proposed in \cite{ccs}.

Our central contribution is a provably secure watermarking scheme, developed as follows:
\begin{enumerate}
 \item A formal threat model for latent-based watermarks is introduced, unifying detectability and removability notions that prior work treats only implicitly (Section \ref{subsection:threat_model}).
 
     \item An attack against boundary hiding \cite{ccs} is demonstrated under the REM2 threat model, showing how an adversary can successfully recover the underlying long-term secret transformation employed in the existing  boundary-hiding defense (Section \ref{subsection:attack_on_defense}).
     
    \item A provably secure latent-based watermarking construction. Under the REM2 threat model,  no PPT adversary removes the watermark with non-negligible advantage over a white-noise baseline granted a $\sqrt{2}\times $ larger perturbation budget (Theorems \ref{thm:rem2-secure} \& \ref{thm:computational_rem2}).

    \item Empirical evaluations demonstrate the proposed attack, verify that the new watermarking scheme preserves high image quality, and measure the overall robustness of the framework (Section \ref{section:evals}).
\end{enumerate}

\section{Background}
\subsection{Latent Diffusion Model}

A typical latent diffusion model (LDM) (e.g. Stable Diffusion \cite{rombach2021highresolution}), consists of two main components: (1) denoising model (e.g., DDIM \cite{ddim}, DDPM \cite{ddpm}), and (2) variational autoencoder (VAE) \cite{vae}. The model takes as input: (1) a text prompt, (2) generation hyperparameters (e.g., the number of timesteps), and (3) a random latent vector sampled from the VAE latent space.

At a high level, the generative process is as follows: First, the input text prompt is converted into a text embedding using a text encoder (e.g. CLIP \cite{clip}). Next, a randomly sampled starting point is iteratively denoised by the diffusion model, conditional on the text embedding and generation hyperparameters. Finally, the denoised latent representation is decoded by the VAE decoder into the pixel space.

For analysis, let us define the LDM as a deterministic function
$L : \mathcal{C} \times \mathcal{L}_v  \rightarrow \mathcal{L}_p$ 
that  takes as input the following: (1) a condition $c \in \mathcal{C}$, and (2) a starting point $s \in \mathcal{L}_v$, sampled from some distribution $\mathcal{D}$. The LDM outputs the generated image  in the pixel  space $\mathcal{L}_p$.  That is, conditional on condition $c$ (i.e. prompt and meta-parameters) and a randomly sampled starting point $s$, the output of the LDM is 
$
L(c, s) \in \mathcal{L}_p.
$
In Stable Diffusion (SD), $\mathcal{D} = \mathcal{N}(0, I_d)$. 
The output space $\mathcal{L}_v$ is a real valued $d$-dimensional  vector space, where $d= 4\times 64\times 64= 16,384$ and $d = 4 \times 128 \times 128 = 65,536$ in SD1.5/2.1 and SDXL respectively. 

Furthermore, the $L$ typically admits an approximate inversion function
$
L^{-1} : \mathcal{L}_p \times \mathcal{C} \rightarrow \mathcal{L}_v.
$
That is, given a condition $c \in \mathcal{C}$ and an image $I = L(s, c) \in \mathcal{L}_p$ in the pixel domain, the inverse mapping $\hat{s} = L^{-1}(I, c) \in \mathcal{L}_v$ outputs an approximation of the original latent $s$. 
Recent work (e.g. \cite{cosda, roar}) has focused on improving the quality of such inversion procedures, showing that $\hat{s}$ can be constructed with small reconstruction error even when the exact $c$ is unknown. 

\subsection{Latent-Based Watermarking}

Latent-based watermarking schemes typically involve two distributions: an underlying distribution $\mathcal{D}$ and a keyed watermarked distribution $\mathcal{D}_k$, where $k$ is a secret key known only to the watermark owner. The procedures for generating watermarked samples and detecting watermarks are as follows.

\paragraph{Generation}
Instead of sampling the initial latent from $\mathcal{D}$, the starting point $s$ is sampled from the keyed watermarked distribution $\mathcal{D}_k$. That is, given a condition $c \in \mathcal{C}$, the watermarked output is
$
I_w = L(s, c),
$
where $s \sim \mathcal{D}_k$.

\paragraph{Detection}
Given an image $I$ generated by the LDM, an approximate latent starting point $\hat{s}$ is recovered via the inversion
$
\hat{s} = L^{-1}(I, c_*),
$
where $c_*$ is the inversion condition. In practical settings, the exact generation condition may be unknown; thus, one often sets $c_* = \emptyset$ (the null prompt). Empirical evidence from recent work suggests that even under the null prompt, $\hat{s}$ remains close to the original $s$.

\subsubsection*{Undetectability}

Some latent-based watermarks (e.g., Tree-Ring watermark \cite{treering}) are detectable, meaning an adversary with access to polynomial samples from the watermarked distribution $\mathcal{D}_k$ can reliably distinguish a new watermarked sample from the original distribution $\mathcal{D}$. There are various notions of undetectability, depending on the threat model, described in the IND0 and IND1 games in Section \ref{subsection:threat_model}.


\subsection{Pseudorandom Error Correcting Codes} \label{subsection:prc_code}
Pseudorandom Error-Correcting (PRC) codes \cite{christ2024pseudorandomerrorcorrectingcodes} are used as cryptographic primitives to construct undetectable watermarks.
It consists of two components: an encoder and decoder. 

\paragraph{Encoder}
The encoder $Enc : \{0,1\}^n \times \{0, 1\}^{n_1} \rightarrow \{0, 1\}^{n_2}$  is a probabilistic algorithm, which takes in a $n$-bit key $k$ and a $n_1$-bit message $m$
respectively, and outputs a codeword of length $n_2$. The output is pseudorandom in the sense that, for any $k$ and $m$, $Enc(k, m)$ is computationally indistinguishable from a truly random binary sequence for any PPT adversary who does not know $k$.

\paragraph{Decoder}
The decoder
$Dec : \{0,1\}^n \times \ \{0,1\}^{n_2} \rightarrow \{0, 1\}^{n_1} \cup \{ \bot \}$
takes in a $n$-bit key and $n_2$-bit codeword, and outputs the $m$-bit
watermark message, or $\bot$ if the decoding failed.
The scheme is error-correcting in the sense that if $c = Enc(m, k)$, there exists a $\tau$ s.t. 
\[
d_H(c, \hat{c}) \leq \tau \;\Rightarrow\; Dec(k, \hat{c}) = m,
\]
where $d_H$ denotes the Hamming distance, and $\tau$ depends on the parameters of the PRC scheme. That is, as long as the number flips between the PRC codeword $c$ and $\hat{c}$ does not exceed  $\tau$, the decoder correctly recovers the message $m$.

\subsubsection*{Relationship with PRC Watermarking Scheme \cite{prc}}

Consistent with existing implementations, let $\mathcal{D} = \mathcal{N}(0, I_d)$. Conditional on a secret key $k$ and watermark message $m$, the PRC latent-based watermarking scheme samples from $\mathcal{D}_k$ as follows:
\begin{enumerate}
    \item Sample $x = (x_1, \dots, x_d) \sim \mathcal{N}(0, I_d)$ and compute the codeword $c = (c_1, \dots, c_d) = Enc(k, m)$.
    
    \item For $i = 1, \dots, d$, set
    \[
    x_i =
    \begin{cases}
        |x_i|, & \text{if } c_i = 1,\\
        -|x_i|, & \text{if } c_i = 0.
    \end{cases}
    \]
    
    \item Output $x$.
\end{enumerate}

Similarly, to detect the PRC watermark in a sample $s = (s_1, \dots, s_d)$:
\begin{enumerate}
    \item Recover the codeword $c = (c_1, \dots, c_d)$, where $c_i = 1(s_i \geq 0)$
    
    \item Perform decoding and recover $\hat{m} = Dec(k, c)$.
    
    \item Declare $s$ as watermarked if and only if $\hat{m} = m$.
\end{enumerate}

It can be shown that, if  $c$ is pseudorandom (i.e.  computationally indistinguishable from a random binary sequence), then $x$ is  computationally indistinguishable from $\mathcal{N}(0,I_d)$. 

\subsection{Boundary-Hiding Defense} \label{subsection:bh}

In latent-based watermarking, \cite{ccs} showed that undetectability alone does not guarantee unremovability. Specifically, even if $\mathcal{D}$ and $\mathcal{D}_k$ are computationally indistinguishable, an adversary may still be able to deduce the decision boundary of the detector $\mathcal{O}$. 
 In a \emph{known-watermark attack}, an adversary perturbs a known watermarked latent $s \sim \mathcal{D}_k$ into an unwatermarked state $\tilde{s}$ such that:
\[
\mathcal{O}(\tilde{s}) = 0
\quad \text{and} \quad
d(s, \tilde{s}) \text{ is small},
\]
for some distance metric $d(\cdot,\cdot)$. 
To mitigate this, \cite{ccs} proposed a boundary-hiding defense that masks the decision boundary via a secret linear transformation. Specifically, consider the transformation
$
T(s) = sA,
$
where $A \in \mathbb{R}^{d \times d}$ is a secret, invertible orthonormal matrix. During detection, a received latent $s'$ is first inverted via $T^{-1}(s')$ before watermark decoding. Since the adversary does not know $A$, $\mathcal{O}$'s decision boundary 
is hidden. \cite{ccs} showed that a watermark removal adversary with a polynomial number of watermarked samples (without detector access) cannot outperform an adversary limited to adding white-noise. 

\subsection{Summary of Notations Used}
For convenience, notations introduced throughout this paper (both introduced earlier and later), are summarized in Table~\ref{table:notation}.
\begin{table}
\centering
\caption{Summary of notations used throughout the paper.}
\label{table:notation}
\begin{tabular}{@{}ll@{}}
\toprule
\textbf{Notation} & \textbf{Description} \\ \midrule
$L, L^{-1}$ & LDM generation and inversion functions \\
$\mathcal{C}, \mathcal{L}_v, \mathcal{L}_p$ & Condition, latent ($\mathbb{R}^d$), and pixel spaces \\
$\mathcal{D}, \mathcal{D}_k$ & Base and watermarked latent distributions \\
$\mathcal{O}(\cdot)$ & Watermark detector oracle \\
$A$ & Long-term secret transformation ($\mathbb{R}^{d \times d}$) \\
$B_i$ & One-time transformation ($\mathbb{R}^{\frac{d}{2} \times \frac{d}{2}}$) for sample $i$ \\
$\Lambda_i$ & Effective hybrid transformation for sample $i$ \\
PRC & Pseudorandom error-correcting code \\
$\tau_1, \tau_2$ & Error tolerance thresholds \\
$d_H(\cdot, \cdot)$ & Hamming distance metric \\
$\mathcal{A}, {Adv}$ & PPT adversary and its advantage \\
\bottomrule
\end{tabular}%
\end{table}

\section{Problem Setup}

Let $I = L(e, s)$ denote the output of a LDM $L$ with conditioning $e$ and initial latent $s$.
In latent-based watermarking, $s$ is sampled from a watermarked distribution $\mathcal{D}_k$ parameterized by a secret key $k$. There are 3 phases involved: \textit{Setup}, \textit{Sample}, \textit{Detect}.

\paragraph{Setup}
During the setup phase, the system generates the relevant keys $k$, which parameterize the watermarked distribution $\mathcal{D}_k$.

\paragraph{Sample}
To generate a watermarked image, an embedding $e$ and  starting point $s \sim \mathcal{D}_k$, is provided as input to $L$. The model then outputs the generated image $I = L(e, s)$.

\paragraph{Detect}
To detect watermark in an image, diffusion inversion is performed on the image to approximate $s$, denoted as $\hat{s}$, which is then passed into a watermark detector $\mathcal{O} : \mathbb{R}^d \to \{0,1\}$. The detector outputs $\mathcal{O}(\hat{s}) = 1(\hat{s} \sim \mathcal{D}_k)$
\subsection{Threat Model} \label{subsection:threat_model}
\begin{table*}[t]
\centering
\caption{Comparison of threat models and adversary capabilities between prior undetectable latent-based watermarking schemes (Gaussian Shading (GS) \cite{gs}, GS++ \cite{gspp},  PRC watermark \cite{prc}, and boundary-hiding (BH) \cite{ccs} defense) and our proposed construction. Compare to BH defense which achieves IND1-REM1 security,  our scheme achieves  IND1-REM2 security.}
\label{tab:threat-models}
\renewcommand{\arraystretch}{1.5} 
\begin{tabular}{|l|l|p{5.5cm}|c|c|c|}
\hline
\multirow{2}{*}{\textbf{Threat Model}} & \multirow{2}{*}{\textbf{Work}} & \multirow{2}{*}{\textbf{Goal}} & \multicolumn{3}{c|}{\textbf{Adversary's Capabilities}} \\
\cline{4-6}
& & & \textbf{1 Sample} & \textbf{$>1$ Samples} & \textbf{Detector Access} \\
\hline
IND0  & GS & Undetectability (Single Sample) & $\checkmark$ & $\times$ & $\times$ \\
\hline
IND1  & PRC, GS++, BH, Ours & Undetectability (Polynomial Samples) & $\checkmark$ & $\checkmark$ & $\times$ \\
\hline
REM1 & BH & Unremovability  (Polynomial Samples) & $\checkmark$ & $\checkmark$ & $\times$ \\
\hline
REM2  & Ours &  Unremovability  (Polynomial Samples \newline + Batched  Detector Access) & $\checkmark$ & $\checkmark$ & $\checkmark$ \\
\hline
\end{tabular}
\end{table*}


The adversary's primary goals are watermark detection and removal. While some prior work relies on implicit security assumptions, we formalize the adversary's capabilities into an explicit hierarchy of threat models to provide a unified evaluation framework. Higher indices denote stronger adversaries:
\begin{enumerate}
    \item \textbf{Watermark Detectability:} {IND0}, {IND1}
    \item \textbf{Watermark Removal:} {REM1}, {REM2}
\end{enumerate}
Table \ref{tab:threat-models} summarizes these threat models, which are formally defined here.
In this paper, we consider schemes which achieve at least {IND0}-security (single-sample indistinguishability), thus excluding detectable watermarking schemes (e.g. Tree-Ring \cite{treering}).

\subsubsection{IND0: Single Sample Indistinguishability}
 Given a watermark distribution $\mathcal{D}_k$ parameterized by a secret key $k$ and a target distribution $\mathcal{D}$:
\begin{enumerate}
    \item The challenger uniformly samples $b \leftarrow \{0, 1\}$.
    \item If $b = 0$, the challenger outputs a sample from $\mathcal{D}_k$. Else, the challenger outputs a sample from $\mathcal{D}$.
\end{enumerate}
After receiving the sample, a distinguisher guesses the value of $b$, outputs a guess $b'$, and wins if and only if $b' = b$. A scheme is IND0-secure if for all PPT distinguishers:
$$ \left|\Pr[b' = b] - \frac{1}{2}\right| \leq {negl}(\lambda) $$
where ${negl}(\cdot)$ is a negligible function.

\subsubsection{IND1: Multiple Sample Indistinguishability}
IND1 is an extension of IND0 from 1 sample to polynomial number of samples, described as follows.  Given a watermark distribution $\mathcal{D}_k$ parameterized by a secret key $k$ and a target distribution $\mathcal{D}$:
\begin{enumerate}
    \item The challenger uniformly samples $b \leftarrow \{0, 1\}$.
    \item The adversary submits an integer $t$ to the challenger, where $t$ is bounded by a polynomial in the security parameter $\lambda$.
    \item The challenger sends $t$ unwatermarked samples sampled independently from $\mathcal{D}$, and $t$ watermarked samples sampled independently from $\mathcal{D}_k$
    \item If $b = 0$, the challenger outputs a sample sampled independently $\mathcal{D}_k$. Else, the challenger outputs a sample sampled independently from $\mathcal{D}$.
\end{enumerate}
A scheme is IND1-secure, if for all PPT distinguishers:
$$ \left|\Pr[b' = b] - \frac{1}{2}\right| \leq {negl}(\lambda) $$
where ${negl}(\cdot)$ is a negligible function.

\subsubsection{REM1: Known-Watermark Attack}
Unlike the earlier games, REM1 deals with removal attacks. Consider the following security game:
\begin{enumerate}
    \item The adversary receives a collection of watermarked and unwatermarked samples $Y_w$ and $Y_{nw}$, sampled independently from $\mathcal{D}_k$ and $\mathcal{D}$ respectively. $|Y_w|$ and $|Y_{nw}|$ are polynomial in $\lambda$.
    \item The adversary is given a fresh sample $Y \sim \mathcal{D}_k$.
    \item The adversary computes a perturbed sample $Y' = Y + \eta$ subject to a distortion budget $\sigma$ and passes it to the challenger
    \item The challenger outputs $\mathcal{O}(Y')$
\end{enumerate}
Let $\mathcal{A}_w(Y, \sigma)$ denote a baseline algorithm that outputs a perturbed sample by adding randomly sampled white-noise of magnitude $\sigma$ to $Y$. Let $\mathcal{A}(Y, Y_w, Y_{nw}, \sigma)$ denote a PPT adversary that observes $Y_w$ and $Y_{nw}$, and outputs a perturbed sample by adding a computed perturbation of magnitude at most $\sigma$ to $Y$. A scheme is $c$-REM1-secure for constant $c \geq 1$ iff for all PPT adversaries $\mathcal{A}$ and all valid perturbation budgets $\sigma > 0$:
$$
\resizebox{\linewidth}{!}{
$\Pr[\mathcal{O}(\mathcal{A}(Y, Y_w, Y_{nw}, \sigma) = 0] - \Pr[\mathcal{O}(\mathcal{A}_w(Y, c\sigma)) = 0]  \le negl(\lambda) $
}
$$
where ${negl}(\cdot)$ is a negligible function.

\subsubsection{REM2: Chosen Noise Attack + Batch Detector Access} \label{subsubsection:rem2}
This threat model extends REM1 by granting the adversary batched oracle access to the detector $\mathcal{O}$.  In this scenario, the adversary has access to a polynomial number of watermarked samples $\mathbf{Y} = \{Y_1, \dots, Y_t\}$ and can issue a polynomial number of queries to the detector to observe its decision boundary. This is formalized in the following security game:
\begin{enumerate}
    \item The adversary receives a collection of watermarked and unwatermarked samples $Y_w$ and $Y_{nw}$, sampled independently from $\mathcal{D}_k$ and $\mathcal{D}$ respectively. $|Y_w|$ and $|Y_{nw}|$ are polynomial in $\lambda$.
    \item The adversary submits a sequence of $s$ queries $q_1, \dots, q_s$ to the detector oracle $\mathcal{O}$, where $s$ is bounded by a polynomial in $\lambda$.
    \item The challenger returns the detector outputs $\mathcal{O}(q_1), \dots, \mathcal{O}(q_s)$.
    \item The adversary is given a fresh watermarked sample $Y \sim \mathcal{D}_k$.
    \item The adversary outputs a perturbed sample $Y' = Y + \eta$
    \item Challenger outputs $\mathcal{O}(Y')$
\end{enumerate}
Let $\mathcal{A}_w(Y, \sigma)$ denote a baseline algorithm that outputs a perturbed sample by adding randomly sampled white-noise of magnitude $\sigma$ to $Y$, and let $\mathcal{A}^{\mathcal{O}}(Y, \mathbf{Y}, \sigma)$ denote a PPT adversary with oracle access to $\mathcal{O}$ that outputs a perturbed sample by adding a computed perturbation of magnitude at most $\sigma$ to $Y$. A watermarking scheme $c$-REM2-secure for constant $c \geq 1$, iff for all PPT adversaries $\mathcal{A}^{\mathcal{O}}$, $Y \sim \mathcal{D}_k$ and $\sigma > 0$:
$$ \Pr[\mathcal{O}(\mathcal{A}^{\mathcal{O}}(Y, \mathbf{Y}, \sigma)) = 0] -\Pr[\mathcal{O}(\mathcal{A}_w(Y, c\sigma)) =0]   \le {negl}(\lambda) $$
where ${negl}(\cdot)$ is a negligible function

\section{Attacks on Existing Undetectable Latent-based Watermarking Schemes}

\subsection{Attack on Gaussian Shading Watermark}
Gaussian Shading (GS) theoretically achieves IND1 security by employing a stream cipher initialized with a secret key and a random nonce. Provided this nonce is randomly sampled for every watermarked sample generation, the resulting sequence is computationally indistinguishable from $\mathcal{N}(0, I_d)$. 

However, to circumvent the prohibitive computational cost of exhaustive nonce searching during verification, practical GS deployments frequently rely on a restricted nonce pool or static session nonces. An adversary who intercepts a polynomial number of watermarked samples generated with reused nonces can successfully distinguish between watermarked and unwatermarked distributions  \cite{prc, clwe-mark}. This breaks IND1 security, and the scheme is  only IND0-secure. 

\subsection{Attack on PRC Watermark}  \label{subsection:attack_prc}
Recent work \cite{ccs} proves that IND1 security does not imply REM1 security by demonstrating a ``known-watermark attack'' against the PRC scheme. Specifically, the PRC watermark exposes its decision boundary to adversaries by directly embedding the codeword into the signs of the starting latent vector. Consequently, an adversary with a restricted perturbation budget $\sigma$ can strategically flip the signs of the latent components with the smallest magnitudes. This targeted approach achieves a non-negligible advantage over a baseline adversary restricted to adding  white-noise of equivalent magnitude $\sigma$.
This removal strategy also generalizes to GS++, as both schemes inadvertently leak their watermark detection boundaries through similar underlying detector mechanisms.

\subsection{Proposed Attack on Boundary-Hiding Defense} \label{subsection:attack_on_defense}

The boundary-hiding defense detailed in Section \ref{subsection:bh} is designed to obscure the watermark detector's decision boundary, which  mitigates the known-watermark attack outlined in Section \ref{subsection:attack_prc}. However, if an adversary recovers or sufficiently approximates the transformation matrix $A$, the decision boundary is re-exposed, which reduces back to the known-watermark attack setting.

Under the REM2 threat model, we describe an algorithm to approximate $A$. Let $Y = T(X) = XA$ be a watermarked sample, and let $\tau$ be the detector's threshold (as described in Section \ref{subsection:prc_code}). Because the decision boundary is an intersection of hyperplanes defined by the orthonormal rows of $A$, the attack aims to isolate the normal vectors to these bounding hyperplanes using filtered random noise.

\begin{enumerate}
    \item \textbf{Noise Magnitude Calibration.} The adversary first determines a target perturbation scale $\sigma$ designed to push the watermarked sample just past the decision boundary with a small, controlled probability of evasion. For a detector with a sign-flip threshold $\tau$, this target magnitude scales proportionally to $\tan(\pi \tau)$.
        \begin{itemize}
            \item For $G \sim \mathcal{N}(0, I_d)$ and $H \sim \mathcal{N}(0, \sigma^2 I_d)$, where both $G$ and $H$ are independent, it can be shown that setting $\sigma = \tan(\pi \tau)$ guarantees that the expected proportion of sign flips between $G$ and $G+H$ is exactly $\tau$.
            \item Because the latent dimension $d$ is typically very large, the empirical proportion of sign flips tightly concentrates around its expected value $\tau$ with high probability.
        \end{itemize}
    \item \textbf{Probing.} The adversary generates a massive batch of $N$ random noise vectors $\eta_i \sim \mathcal{N}(0, \sigma^2 I)$. The adversary adds these to the watermarked sample and queries the oracle for the entire batch simultaneously: $o_i = \mathcal{O}(Y + \eta_i)$.

\item \textbf{Boundary-Crossing Filtration.} The adversary retains only the perturbations that successfully evaded the detector,  which is the set
    $
    \mathcal{E} = \{ \eta_i \mid \mathcal{O}(Y + \eta_i) = 0 \}
    $
    Geometrically, $\mathcal{E}$ represents the collection of perturbations possessing sufficiently large projections onto at least $\lceil \tau d \rceil$ rows of $A$. Filtering out non-evading vectors physically truncates the Gaussian noise cloud, skewing its principal directions of variance outward along the normal vectors of these active hyperplanes.
    
\item \textbf{Independent Component Analysis (ICA).} The adversary applies ICA to the mean-centered matrix of $\mathcal{E}$ to compute 
    $
    \mathcal{V} = \{v_1, \dots, v_d\}.
    $
    By maximizing non-Gaussianity, ICA isolates the statistically independent components responsible for the large projections within $\mathcal{E}$. These extracted vectors $v_i$ serve as approximations of the individual orthonormal rows of the secret matrix $A$.
\end{enumerate}

\subsubsection{Number of Queries Needed} 
\label{subsubsection:no_operation_basic_attack}

In  Step 4, the adversary applies ICA to $\mathcal{E}$, thus the number of queries to $\mathcal{O}$ is lower bounded by $|\mathcal{E}|$.
We establish a lower bound of $|\mathcal{E}|$ by appealing to Theorem 2.4 of \cite{auddy2023largedimensionalindependentcomponent}, which gives the following information-theoretic result: Any PPT algorithm requires  $|\mathcal{E}| \in {\Omega}(d^2/ \poly \log(d))$ to successfully unmix the independent components of $A$. In other words, if $|\mathcal{E}| \in o(d^2 / \poly \log(d) )$, an adversary can at most recover $AR$, where $R \sim Haar(d)$ is a randomly sampled $d\times d$ orthonormal matrix.   Since $A$ and $AR$ share the same $Haar(d)$ distribution, the adversary gains no meaningful information regarding $A$. Thus,  $\tilde{\Omega}(d^2)$ queries are needed. 

Although this requirement makes the attack computationally prohibitive for typical values of $d$ (e.g., $16,384$ in SD2.1), we nonetheless demonstrate a polynomial-time adversary who significantly outperforms randomly guessing the transformation.




\section{Proposed Watermark Construction} \label{section:watermark_construction}

\subsection{Cryptographic Primitives}

Our watermark construction relies on the existence of the following coding and cryptographic primitives:

\begin{enumerate}
    \item \textbf{Pseudorandom Error-Correcting Code (PRC):} The details and requirements of the PRC scheme is described in Section \ref{subsection:prc_code}. 
    

    \item \textbf{Pseudorandom Function (PRF):} Given key space $\mathcal{K}$, let $\mathcal{F}=\{f_k:k \in \mathcal{K}\}$ denote a PRF family, where each member of the family is indexed by a key $k$. For a uniformly sampled $k$ and any input $r$, the function $f_k(r)$ is computationally indistinguishable from a truly random sequence.  
\end{enumerate}


\subsection{Watermark Scheme} \label{subsection:watermark}
We now formalize our watermarking scheme through three core algorithms: $\texttt{KeyGen}$, $\texttt{Sample}$, and $\texttt{Decode}$, described in Algorithms \ref{alg:keygen}, \ref{alg:sample}, and \ref{alg:decode_detect}. A high level description of these algorithms are as follows:
\begin{itemize}
\item $\mathsf{KeyGen}(\lambda, d)$:  Samples and outputs  secret key $k = (k_1, k_2, k_3)$, where $k_1$ is the  PRC key, $k_2$ is used to generate the secret fixed transformation, and $k_3$ is used as the PRF key.
\item $\mathsf{Sample}(\lambda, d, k)$: Generates a sample from the watermarked distribution $\mathcal{D}_k$.
\item $\mathsf{Decode}(\lambda, Y, k, t)$: Outputs  whether a sample $Y$ is sampled from the watermarked distribution $\mathcal{D}_k$. The parameter $t$ controls the false positive rate.
\end{itemize} 

Additionally, let us define the following functions, both used in the watermark $\texttt{Sample}$ and $\texttt{Decode}$ function
\begin{itemize}
    \item $\texttt{SRandOrthonormal}: d , \{0,1\}^\lambda \rightarrow \mathbb{R}^{d \times d}$: A function which outputs a uniformly sampled $d \times d$ orthonormal matrix conditional on a nonce uniformly sampled from $\{0,1\}^\lambda $. For a uniformly sampled $u$, $\texttt{SRandOrthonormal}(d, u)$ is computationally indistinguishable from a truly random $d \times d$ orthonormal matrix, its implementation is discussed in Appendix \ref{appendix_srandorth}.
    \item $f_k : \mathcal{M} \rightarrow \{0, 1\}^{d/2}$: A keyed PRF. Using a secret key $k$, this function takes any input $m \in \mathcal{M}$ (e.g. $\mathcal{M}$ is the set of all binary strings) and produces an output $f_k(m)$. Without $k$, $f_k(m)$ is computationally indistinguishable from a truly random sequence.
\end{itemize}
An illustration of the $\mathsf{Sample}(\lambda, d, k)$ algorithm is given in Figure \ref{fig:watermark_sample}.

\begin{algorithm}
\caption{Watermarking KeyGen}\label{alg:keygen}
\begin{algorithmic}[1]
\Procedure{KeyGen}{$\lambda$}
    \State Sample secret keys $k_1, k_2, k_3 \gets \{0, 1\}^\lambda$
    \State \Return $k = (k_1, k_2, k_3)$
\EndProcedure

\end{algorithmic}
\end{algorithm}

\begin{algorithm}
\caption{Watermark Sample} \label{alg:sample}
\begin{algorithmic}[1]

\Procedure{Sample}{$\lambda, d, k$}
    \State $k_1, k_2, k_3 \leftarrow k$
    \State $r \gets \{0,1\}^{n_r}$ \Comment{Sample random nonce}
    \State $\mathbf{X} \sim \mathcal{N}(0, \mathbf{I}_d)$ \Comment{Sample Gaussian vector}
    \State $\mathbf{A} \gets \mathsf{SRandOrthonormal}\left(d, k_2\right)$
    \State $\mathbf{B} \gets \mathsf{SRandOrthonormal}\left(\frac{d}{2}, r\right)$ \Comment{Sample orthonormal matrix}
    \State $C_1 \gets \mathsf{PRC}(k_1, r)$
    \State $C_2 \gets f_{k_3}(r)$ \label{alg_line_sample_prf}
    \State $C \gets (C_1, C_2)$ \Comment{Concatenate binary sequence} \label{line:codeword} \label{alg_line_random_bin}
    \State $\mathbf{X}' \gets \texttt{Embed}(\mathbf{X}, C)$ \label{alg_line_embed}
    \State Form block matrix $\mathbf{M} = \begin{pmatrix} 
    \mathbf{I}_{d/2} & \mathbf{0} \\ 
    \mathbf{0} & \mathbf{B} 
    \end{pmatrix}$
    \State $\mathbf{Y} \gets \mathbf{X}'\mathbf{M}\mathbf{A} $ \label{alg_sample_line_mult}
    \State \Return $\mathbf{Y}$
\EndProcedure
\Statex
\Procedure{Embed}{$\mathbf{X}, S$}
    \For{$i = 1, \dots, |\mathbf{X}|$}
        \State $\mathbf{X}[i] \gets |\mathbf{X}[i]| \times (-1)^{S[i]}$
    \EndFor
    \State \Return $\mathbf{X}$
\EndProcedure
\Statex

\end{algorithmic}
\end{algorithm}

\begin{algorithm}
\caption{Watermark Decode}
\label{alg:decode_detect}
\begin{algorithmic}[1]
\Procedure{Decode}{$\lambda, \mathbf{Y}, k, t$}
    \State $k_1, k_2, k_3 \leftarrow k$
    \State $\mathbf{A} \gets \mathsf{SRandOrthonormal}\left(d, k_2\right)$
    \State $\boldsymbol{\zeta} \gets  \mathbf{Y}\mathbf{A}^T$  \label{alg_decode_line_mult1}  \Comment{Reversing the global mask $A$}
    
    \Statex \Comment{Extract signs to recover $\hat{C}_1$}
    \For{$i = 1, \dots, d/2$}
        \State $\hat{C}_1[i] \gets \begin{cases} 
        0 & \text{if } \boldsymbol{\zeta}[i] \geq 0 \\ 
        1 & \text{if } \boldsymbol{\zeta}[i] < 0 
        \end{cases}$
    \EndFor
    
    \State $\hat{r} \gets \mathsf{PRC}^{-1}(k_1, \hat{C}_1)$ 
    \State $\hat{\mathbf{B}} \gets \mathsf{SRandOrthonormal}\left(\frac{d}{2}; \hat{r}\right)$ 
    \State Form transpose block matrix $\mathbf{M}^\top = \begin{pmatrix} 
    \mathbf{I}_{d/2} & \mathbf{0} \\ 
    \mathbf{0} & \hat{\mathbf{B}}^\top 
    \end{pmatrix}$
    \State $\hat{\mathbf{X}} \gets \boldsymbol{\zeta} \mathbf{M}^\top$ \label{alg_decode_line_mult2}
    
    \Statex \Comment{Extract signs to recover $\hat{C}_2$}
    \For{$j = 1, \dots, d/2$}
        \State $\hat{C}_2[j] \gets \begin{cases} 
        0 & \text{if } \hat{\mathbf{X}}[d/2 + j] \geq 0 \\ 
        1 & \text{if } \hat{\mathbf{X}}[d/2 + j] < 0 
        \end{cases}$
    \EndFor
    
        \Statex \Comment{$d_H(\cdot, \cdot)$ = Hamming distance}
\State \Return $1(d_{H}(\hat{C}_2, f_{k_3}(\hat{r})) < t)$
        \label{alg_decode_return}
\EndProcedure

\end{algorithmic}
\end{algorithm}

\subsubsection{Encoding Multi-bit Messages}
The construction described in Algorithms \ref{alg:keygen}, \ref{alg:sample}, and \ref{alg:decode_detect} embeds a single-bit watermark (i.e., a binary indicator of whether a sample is watermarked). We can easily extend this construction to accommodate multi-bit messages. Let $\mathcal{M}$ denote a set of allowable messages, and let $\texttt{ECC.Encode}$ and $\texttt{ECC.Decode}$ represent the encoding and decoding functions of an error-correcting code.
\paragraph{Sample}
To encode a message $m \in \mathcal{M}$, replace Line \ref{alg_line_sample_prf} in Algorithm \ref{alg:sample} with $C_2 \gets f_{k_3}(r) \oplus \texttt{ECC.Encode}(m)$. Because $f_{k_3}(r)$ is a pseudorandom sequence, this XOR operation effectively acts as a one-time pad. This encrypts the ECC codeword, ensuring that the embedded sequence remains pseudorandom and preserves the statistical properties of the baseline scheme.

\paragraph{Decode} To recover the message, replace Line \ref{alg_decode_return} of Algorithm \ref{alg:decode_detect} with return $\texttt{ECC.Decode}(\hat{C}_2 \oplus f_{k_3}(\hat{r}))$.

If a multi-bit message is used, the FPR rates will depend on the configuration of the error correcting code scheme.


\subsection{Per-sample Transformations}
As shown in Section \ref{subsection:attack_on_defense}, relying solely on the fixed transformation $A$ allows a REM2 adversary to reconstruct the transformation. To prevent the adversary from learning $A$, we enforce two requirements:
\begin{enumerate}
    \item \textbf{Per-Sample Transformations}: The applied transformation must remain strictly unique for each generated sample.
    \item \textbf{Obfuscated Fixed Transformation}: The adversary's interaction with the system should not reveal any information about the fixed transformation. 
\end{enumerate}
To analyze the adversary's capabilities, let us write $A$ as
\[
A=
\begin{pmatrix} A_1 & A_2 \\ A_3 & A_4 \end{pmatrix}
\] 
As per the $\texttt{Sample}(\lambda, d, k)$ function in Algorithm \ref{alg:sample},  watermarked sample $Y_i$ can be rewritten as $Y_i = X_i {\Lambda}_i$, where the effective orthonormal transformation matrix is:
\begin{equation} \label{eqn:mask}
{\Lambda}_i = \begin{pmatrix} {I}_{d/2} & \mathbf{0} \\ \mathbf{0} & {B}_i \end{pmatrix} \begin{pmatrix} A_1 & A_2 \\ A_3 & A_4 \end{pmatrix} = \begin{pmatrix} A_1 & A_2 \\ {B}_i A_3 & {B}_i A_4 \end{pmatrix}
\end{equation}

\subsubsection{Detector}
For any adversarial perturbation ${\eta}$ added to ${Y}_i$, the detector evaluates ${X}_i + {\eta} \Lambda_i^\top$. Define the detector as follows:
\begin{equation}
    \mathcal{O}(Y) = \texttt{Decode}(\lambda, Y, k, t) .
\end{equation}

Let $d_H(\cdot, \cdot)$ denote the Hamming distance, and the codewords $(C_1, C_2)$ be as defined in Algorithm \ref{alg:sample}. If $\widetilde{C}_1$ and $\widetilde{C}_2$ represent the codewords recovered from the perturbed sample ${Y}_i + \eta$, we can formally define the indicator variables $\alpha$ and $\beta$ as:
\begin{equation}
\alpha = 1\left[ \frac{d_H(\widetilde{C}1, C_1)}{|C_1|} > \tau_1 \right], \quad \beta = 1\left[ \frac{d_H(\widetilde{C}2, C_2)}{|C_2|} > \tau_2 \right],
\end{equation}
where $\alpha = 1$ denotes the event where more than $\tau_{1}$ proportion of signs in $C_1$ are flipped, and likewise for $\beta = 1$. Conceptually, we can view $\tau_1$ as the maximum proportion of bit flips the PRC codeword $C_1$ can sustain while still being successfully authenticated. Similarly, $\tau_2$ represents the tolerable error rate for the $C_2$ portion, beyond which the decoder fails and returns 0.
Therefore, we can express 
\begin{equation} \label{eqn:O_behavior}
    \mathcal{O}(Y) = ((\alpha = 0) \wedge (\beta = 0)).
\end{equation}

\subsubsection{Requirement on Error Thresholds}
We require the error thresholds $\tau_1$ and $\tau_2$ to satisfy:
\begin{equation} \label{eqn:threshold_tau}
    \tau_{1} - \tau_{2} > \sqrt{\frac{\ln d \ln \ln d}{d}},
\end{equation}
for sufficiently large $d$.
If $\Lambda_i$ is unknown to the adversary, any chosen perturbation $\eta$ results in an effective noise vector $\eta \Lambda_i^T$ that disperses evenly across the $d$-dimensional space, yielding uniform expected error rates across both codeword components. By standard concentration bounds, the statistical fluctuation between these components is strictly bounded by the threshold gap. Consequently, any undirected noise strong enough to trigger $\alpha = 1$ will simultaneously trigger $\beta = 1$ with all but negligible probability, preventing the adversary from isolating the state $\alpha=1$ and $\beta=0$. We formalize this intuition and apply this threshold in the security analysis detailed in Section \ref{subsection:secanalysis_non_adaptive}.

\section{Security Analysis} \label{section:sec_analysis}

In this section, we prove that our proposed watermarking scheme achieves IND1-REM2 security, as defined in Section~\ref{subsection:threat_model}. Throughout the following analysis, we assume the scheme is initialized by sampling a secret key via $\mathsf{KeyGen}(\lambda)$.

\subsection{Security Under IND1 (Undetectability) }
To formally establish watermark undetectability, we demonstrate that no PPT adversary can achieve a non-negligible advantage in the IND1 security game (Section \ref{tab:threat-models}). Consistent with standard diffusion models, we let $\mathcal{D} = \mathcal{N}(0, I_d)$.

The function $\texttt{Sample}(\lambda, d, k)$ in Algorithm \ref{alg:sample} generates samples from $\mathcal{D}_k$. The indistinguishability of $\mathcal{D}_k$ from $\mathcal{D}$ follows directly from its construction. First, the codeword $C$ is constructed by concatenating two sequences that are computationally indistinguishable from a truly random binary sequence (Line \ref{alg_line_random_bin}). Next, a sample from $\mathcal{D}_k$ is generated by drawing a base sample from $\mathcal{D}$, preserving its magnitude, but assigning its signs according to $C$ (Line \ref{alg_line_embed}).

Consequently, any PPT adversary which distinguishes $\mathcal{D}_k$ from $\mathcal{D}$ could trivially distinguish $C$ from a uniformly random sequence. Therefore, $\mathcal{D}_k$ is computationally indistinguishable from $\mathcal{D}$, and the scheme is IND1-secure.

\subsection{Watermark Removal and Adversary's Advantage}

Given a watermarked sample $Y$, the adversary's goal is to compute a perturbed sample $\widetilde{Y}= Y + \eta$ for some sufficiently small $\eta$ (e.g., $\|\eta\|_2 \leq \epsilon$ for a predetermined budget $\epsilon$) such that $\mathsf{Decode}(\lambda, \widetilde{Y}, k, t)$ returns \text{False}.

Let $\mathcal{A}$ be a probabilistic polynomial-time (PPT) adversary operating under the REM2 threat model. We evaluate $\mathcal{A}$ at the challenge phase, meaning it has already completed Steps (1)--(3) of the REM2 security game (Section \ref{subsubsection:rem2}). First, we define the success probability of any adversary $\mathcal{A}$, bounded by a perturbation budget $\epsilon$, as:
\begin{equation} \label{eqn:success}
\begin{aligned} 
\text{Succ}(\mathcal{A}, \epsilon)
= \Pr_{\substack{k \leftarrow \mathsf{KeyGen}(\lambda) \\ Y \leftarrow \mathsf{Sample}(\lambda, d, k) \\ \widetilde{Y} \leftarrow \mathcal{A}(Y, \epsilon)}} & \Big[ \mathsf{Decode}(\lambda, \widetilde{Y}, k, t) \\
& \quad = \big( \text{False} \;\land\; \|\widetilde{Y} - Y\|_2 \leq \epsilon \big) \Big]. 
\end{aligned}
\end{equation}

Let $\mathcal{A}_w$ denote a baseline adversary limited to adding isotropic white-noise of magnitude $\epsilon$. We define the advantage of $\mathcal{A}$ w.r.t. the baseline $\mathcal{A}_w$ as:
\begin{equation} \label{eqn:advantage}
\text{Adv}(\mathcal{A}, \epsilon) =  \text{Succ}(\mathcal{A}, \epsilon) - \text{Succ}(\mathcal{A}_w, \sqrt{2} \epsilon) .
\end{equation}

The watermarking scheme is said to be $\sqrt{2}$-REM2-secure if and only if, for all PPT adversaries $\mathcal{A}$,
\[
\text{Adv}(\mathcal{A}, \epsilon) \leq {negl}(\lambda)
\]
where $negl(\cdot)>0$ is a negligible function. 

\subsubsection{The $\sqrt{2}$ factor}
The $\sqrt{2}$ factor establishes a strict upper bound on the adversarial efficiency operating under the REM2 threat model. Our subsequent security analysis reveals that while an adversary can isolate the $d/2$-dimensional subspace associated with either the fixed or dynamic transformation—and restrict their perturbations accordingly—they cannot recover its specific orthonormal basis. By keeping the exact decision boundaries obfuscated, the scheme successfully thwarts targeted boundary attacks (e.g. the sign-flip attack  in \cite{ccs}). As shown later in Theorem \ref{thm:rem2-secure}, this $\sqrt{2}\epsilon$ factor is a tight constant.

\subsection{Security Analysis under REM2} \label{subsection:secanalysis_non_adaptive}

Under the REM2 threat model, an adversary observes a sequence of watermarked samples $\mathbf{Y} = ({Y}_1, \dots, {Y}_t)$ and submits polynomial queries $\mathbf{Q} = (q_1, \dots, q_s)$ to the detector. Let us express each query as $q_j = Y + \eta_j$ for some $Y \in \mathbf{Y}$ and perturbation $\eta_j$. The detector evaluates $\mathcal{O}(q_j)$ and returns $0$ if and only if $\alpha_j = 1$ or $\beta_j = 1$ (as defined in (\ref{eqn:O_behavior})).

To establish REM2 security, we show that the adversary gains negligible information about the static boundary from the oracle, i.e:
\[
\IP[\alpha_j \mid \mathbf{Y}, \mathcal{O}(q_1), \dots, \mathcal{O}(q_j), \eta_j] = \IP[\alpha_j \mid \mathbf{Y}, \eta_j] + negl(d),
\]
where $\IP[\cdot]$ denotes the probability taken over $k \leftarrow \texttt{KeyGen}(\lambda)$, i.e. that the distribution of $\alpha_j$ given the oracle's responses is indistinguishable from its distribution without oracle access.

Our proof proceeds in two steps:
\begin{enumerate}
    \item \textbf{Information-Theoretic Security (Theorem~\ref{thm:rem2-secure}):} We first idealize the \texttt{Sample} algorithm by replacing all pseudorandom primitives (PRC, PRF, and \texttt{SRandOrthonormal}) with perfectly random sources, proving that the scheme is statistically secure under these idealized conditions.
    \item \textbf{Computational Security (Theorem~\ref{thm:computational_rem2}):} Next, a hybrid argument is employed to reduce the security of our real-world instantiation to the computational hardness of the underlying pseudorandom primitives, proving that no PPT adversary can achieve a non-negligible advantage.
\end{enumerate}

\subsubsection{Information-Theoretic Security}

Our proof of information-theoretic security
(Theorem~\ref{thm:rem2-secure}) relies on the following lemmas.

\begin{enumerate}
    \item \textbf{Lemma~\ref{lemma:O_Ob_indist} (Oracle Indistinguishability):}
    Establishes that the real detector is statistically indistinguishable from
    an idealized detector that evaluates only the dynamic component $(\beta)$.
    Hence, the oracle responses reveal only negligible information about the
    static component $(\alpha)$.

    \item \textbf{Lemma~\ref{lemma:basis_ind} (Basis Indistinguishability):}
    Extends Lemma \ref{lemma:O_Ob_indist} to show 
    that the fixed transformation is identifiable only up to an independent
    uniformly sampled rotation. In particular, the adversary obtains no
    non-negligible information about the orientation of the fixed basis.

    \item \textbf{Lemma~\ref{lemma:hidden_basis_isotropization}
    (Hidden-Basis Isotropization):}
    Shows that if a subspace basis is hidden by a uniformly sampled rotation,
    then any perturbation chosen independently of this rotation is transformed
    into isotropic noise with the same projected magnitude. 

    \item \textbf{Lemma~\ref{lemma:sqrt2_baseline_comparison}
    ($\sqrt2\epsilon$ Baseline Comparison):}
    Applies Lemma \ref{lemma:hidden_basis_isotropization} to show that even an
    adversary who perfectly isolates the fixed and dynamic subspaces cannot
    outperform a full-space isotropic white-noise baseline of magnitude
    $\sqrt2\epsilon$, up to negligible error.

    \item \textbf{Lemma~\ref{lemma:most_powerful}
    (Advantage Implies Distinguishability):}
    Any adversary achieving non-negligible advantage over the
    $\sqrt2\epsilon$ baseline must distinguish the true orientation of the
    fixed basis from a uniformly rotated copy. This contradicts
    Lemma~\ref{lemma:basis_ind}, and therefore no such adversary can exist.
\end{enumerate}

At a high level: by Lemma~\ref{lemma:O_Ob_indist} the oracle reveals nothing about the long-term transformation (Figure~\ref{fig:intro_alpha_beta} in Appendix \ref{appendix_proof}), so it is recoverable only up to an unknown rotation (Lemma~\ref{lemma:basis_ind}), while the one-time transformation is re-randomized per sample. Any adversarial perturbation is therefore isotropized into white-noise (Lemma~\ref{lemma:hidden_basis_isotropization}), giving no advantage over the $\sqrt{2}\epsilon$
 baseline (Lemmas~\ref{lemma:sqrt2_baseline_comparison},~\ref{lemma:most_powerful}) and establishing Theorem~\ref{thm:rem2-secure}.

For readability, the detailed proofs of these lemmas are deferred to Appendix \ref{appendix_proof}. We present their formal statements below.


\begin{lemma}[Oracle Indistinguishability] \label{lemma:O_Ob_indist}
    Suppose the threshold gap $\tau_{1} - \tau_{2}$ satisfies (\ref{eqn:threshold_tau}). For a given $Y \sim \mathcal{D}_k$ and any query $q = Y + \eta$, the output distribution of the real oracle $\mathcal{O}(q) = 1((\alpha = 0) \wedge (\beta = 0))$ is statistically indistinguishable from an oracle $\mathcal{O}_3 = 1(\beta = 0)$.
\end{lemma}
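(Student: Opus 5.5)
The two oracles differ only on the event $\{\alpha = 1, \beta = 0\}$: $\mathcal{O}(q)=\mathcal{O}_3(q)$ whenever $\alpha=0$, and whenever $\alpha=1$ and $\beta=1$ both output $0$. Hence the statistical distance between their outputs on a fixed query is at most $\IP[\alpha=1 \wedge \beta=0]$. The plan is to show this probability is negligible in $d$, uniformly over every $\eta$ the adversary could choose, in the idealized setting where $A$ and each $B_i$ are Haar-uniform and independent of $\eta$. A union bound over the polynomially many batched queries then gives indistinguishability of the whole transcript.

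\textbf{Effective noise.} Write $\eta\Lambda^\top = (u, w)$, with $u = \eta A^\top$ restricted to the first $d/2$ coordinates and $w$ equal to the second half rotated by $B^\top$. Sign flips in $C_1$ are governed by $|X_i| < |u_i|$ with the wrong sign. Sign flips in $C_2$ are governed by the same rule applied to $w$. Since $A$ is Haar and (in the batched setting) $\eta$ is fixed independently of $A$, the vector $\eta A^\top$ is uniform on the sphere of radius $\|\eta\|$. Its coordinates behave like i.i.d.\ $\mathcal{N}(0,\|\eta\|^2/d)$ after normalization, and $B$ only adds further uniform rotation to the second half. So the two halves are exchangeable. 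Moreover, the flip indicators in each half are weakly dependent Bernoulli variables with a common mean $p(\|\eta\|)$ that depends only on the norm, since $X$ is Gaussian with independent magnitudes.

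\textbf{Concentration.} I would write the empirical flip rates as $\hat p_1$ and $\hat p_2$. Then I would show $|\hat p_\ell - p| \le \frac12\sqrt{\ln d\,\ln\ln d/d}$ with probability $1-\mathrm{negl}(d)$ for each half. Conditioned on the sphere radius (via the Gaussian representation $g/\|g\|$ with the norm concentrated), the flip indicators are independent given $X$'s magnitudes and the noise coordinates, so Hoeffding gives a tail of $2\exp(-2(d/2)\delta^2)$. With $\delta = \frac12\sqrt{\ln d\ln\ln d/d}$ the exponent is $-\Theta(\ln d\ln\ln d)$, so the tail is $d^{-\Theta(\ln\ln d)}$, which is superpolynomially small. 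On the good event, $\hat p_1 > \tau_1$ forces $\hat p_2 > \tau_1 - (\tau_1-\tau_2) = \tau_2$ by the gap condition (\ref{eqn:threshold_tau}). So $\alpha=1 \Rightarrow \beta=1$, and therefore $\IP[\alpha=1,\beta=0]\le \mathrm{negl}(d)$.

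\textbf{Main obstacle.} The delicate step is justifying that the effective noise is isotropic, and independent of the hidden basis, for every adversarially chosen $\eta$. In particular, queries built from earlier samples $Y_i$ (so $\eta$ may correlate with $X_i$ or with $\Lambda_i$ through $Y_i$) must be handled. I would try first to argue that $Y_i = X_i\Lambda_i$ is itself (nearly) distributed as $\mathcal{N}(0,I_d)$ independent of $A$. Then, for the query's own $B_i$, the one-time rotation $B_i$ acting on the second half decouples the distribution of $\eta$ from its coordinates relative to $X_i$. For the first half, I would argue by symmetry of $\eta A^\top$ restricted to directions orthogonal to the information revealed by $Y_i$. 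If full independence fails, the fallback is to condition on $Y_i$ and use only that the conditional law of $\Lambda_i$ is invariant under rotations fixing $Y_i$. This costs a single dimension, which does not affect the $O(\sqrt{\ln d\ln\ln d/d})$ concentration.
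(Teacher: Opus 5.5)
Your argument is correct, and it reaches the conclusion more directly than the paper does. The paper first proves an auxiliary implication: if $\Pr[\alpha=1\mid q]\ge 1/p(d)$ then $\Pr[\beta=1\mid q]=1-\mathrm{negl}(d)$, using exchangeability of the two halves, a common mean, Hoeffding and the gap. It then runs three hybrids built around a norm threshold $\eta^*$. That threshold separates the regime where $\alpha$ fires noticeably (so both oracles output $0$) from the regime where $\alpha$ essentially never fires.

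You use the same ingredients but bound the disagreement event $\{\alpha=1,\beta=0\}$ directly, by placing both empirical flip rates within half the gap of a common mean. This buys three things:
\begin{itemize}
\item a bound that is uniform in $\eta$ and holds at every $d$;
\item no negligible/noticeable dichotomy, which a given query distribution need not satisfy;
\item no $\eta^*$. The paper indexes $\eta^*$ by $\|\eta\|$ alone, yet $\eta=tY$ with $t>0$ flips nothing at arbitrarily large norm. On such queries the paper's $\mathcal{O}_2$ (which returns $0$) disagrees with $\mathcal{O}$ (which returns $1$).
\end{itemize}

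Two points need tightening. First, do not apply Hoeffding conditionally on the noise coordinates. The conditional means then differ between the halves, and the paper's own proof makes exactly this slip by centering at the unconditional $\mu$. Instead:
\begin{itemize}
\item write $u=\|\eta\|g/\|g\|$ with the pairs $(X_k,g_k)$ i.i.d.;
\item use that a flip is monotone in the noise scale, and that $p(c)=\frac1\pi\arctan c$ with $c=\|\eta\|/\sqrt d$ satisfies $c\,p'(c)\le\frac{1}{2\pi}$;
\item sandwich each $N_\ell$ between i.i.d.\ sums at scales $c(1\pm t)$, splitting the half-gap budget between the Hoeffding deviation and the normalization error.
\end{itemize}
This is uniform in $\|\eta\|$. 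By contrast, a Lipschitz-on-the-sphere bound for the conditional mean degrades once $\|\eta\|\gg\sqrt d$.

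Second, make your fallback the main argument. The remark that $B_i$ decouples $\eta$ from $X_i$ is wrong: $\eta=tY_i$ gives $\eta\Lambda_i^\top=tX_i$ for every $B_i$. However, in the idealized model $\mathbf{Y}$ is exactly independent of $(A,B_1,\dots)$. So given the adversary's pre-query view, $\Lambda_i$ is still Haar and $(X_i,\,X_i+\eta\Lambda_i^\top)=(Y_i,\,Y_i+\eta)\Lambda_i^\top$ is a fixed pair under a common Haar rotation. Each coordinate then flips with probability exactly $\frac1\pi\angle(Y_i,Y_i+\eta)$, and the halves are exchangeable. The same Gaussian-representation argument, now with two Gram--Schmidt-orthogonalized Gaussian vectors, gives the concentration. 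The orthogonalization and normalizations only perturb the effective angle by $\tilde{O}(1/\sqrt d)$.
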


\begin{lemma}[Basis Indistinguishability]\label{lemma:basis_ind}
   Let $\Lambda := \Lambda_i$ be as defined in (\ref{eqn:mask}), and let $\Lambda^\top = [F, D]$, where
   
   $$ F = \begin{pmatrix} A_1^\top \\ A_2^\top \end{pmatrix} \quad \text{and} \quad D = \begin{pmatrix} A_3^\top B_i^\top \\ A_4^\top B_i^\top \end{pmatrix}. $$
   
   Here, $F$ and $D$ denote the fixed secret transformation and the one-time masked transformation, respectively. 
Let $\mathcal{F}$ be the set of all bases that are orthogonal to $D$. For any fixed rotation matrix $R \in Haar(d/2)$ and any measurable subset $S \subseteq \mathcal{F}$, the conditional probability over the randomized choice of the secret basis yields:
   \[
   \resizebox{\linewidth}{!}{%
  $ \Big| \IP[F \in S \mid \mathbf{Y}, \mathcal{O}(q_1), \dots] - \IP[F \in S_R \mid \mathbf{Y}, \mathcal{O}(q_1), \dots] \Big| \le \mathrm{negl}(d), $
   }
   \]
   where the set $S_R = \{ s R : s \in S\}$. 
   That is, the secret fixed transformation $F$ is only recoverable up to a random rotation.
\end{lemma}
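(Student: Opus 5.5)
Using Lemma~\ref{lemma:O_Ob_indist}, we first replace the real oracle $\mathcal{O}$ by the idealized oracle $\mathcal{O}_3 = 1(\beta=0)$ throughout the batch of queries. Since there are only polynomially many queries $s=\poly(\lambda)$, a union bound (hybrid over the $s$ responses) shows that the joint distribution of $(\mathbf{Y}, \mathcal{O}(q_1),\dots,\mathcal{O}(q_s))$ is within statistical distance $s\cdot\mathrm{negl}(d)$ of $(\mathbf{Y}, \mathcal{O}_3(q_1),\dots,\mathcal{O}_3(q_s))$. It therefore suffices to prove the rotation-invariance statement with the conditioning replaced by the $\mathcal{O}_3$ transcript, absorbing the difference into the $\mathrm{negl}(d)$ term. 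This works because for any event, conditional probabilities under two distributions that are close (and the conditioning events having non-negligible mass, or integrating over the transcript) differ by at most the statistical distance.

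Next I would work in the idealized model where $A$ is Haar on $O(d)$, each $B_i$ is Haar on $O(d/2)$ independently, and $C_1,C_2$ are uniform bits. The key structural observation is that $\beta$ depends on a query $Y_i+\eta$ only through $\eta\Lambda_i^\top$ restricted to the last $d/2$ coordinates, i.e.\ through $\eta D_i$ together with $X_i$'s second half. Similarly, the samples satisfy $Y_i = X_i^{(1)}F^\top + X_i^{(2)}D_i^\top$. I would exhibit a measure-preserving symmetry. Replace $A$ by $A' = \mathrm{diag}(R^\top, I_{d/2})A$ and simultaneously replace the first-half latents $X_i^{(1)}$ by $X_i^{(1)}R$, together with the corresponding codeword $C_1$. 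Then $F$ becomes $FR$ while $D_i$ and every $Y_i$ are unchanged, so $\mathbf{Y}$ and every $\beta_j$, hence the whole $\mathcal{O}_3$ transcript, are unchanged. To make this a genuine symmetry of the joint law, I need the law of $(A, X^{(1)})$ to be invariant under this map. Haar invariance gives this for $A$. For $X^{(1)}$, the sign-embedded Gaussian with uniform $C_1$ is exactly $\mathcal{N}(0,I_{d/2})$, and the Gaussian is rotation invariant. Crucially, in the idealized model $C_1$ is uniform and independent of $C_2$, which makes the first half a standard Gaussian independent of everything that determines $\beta$.

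Given this symmetry, for any measurable $S\subseteq\mathcal{F}$ the pushforward argument gives
\[
\IP[F\in S \mid \mathbf{Y}, \mathcal{O}_3(q_1),\dots] = \IP[FR\in S_R \mid \mathbf{Y},\mathcal{O}_3(q_1),\dots] = \IP[F\in S_R \mid \mathbf{Y},\mathcal{O}_3(q_1),\dots].
\]
This equality is exact in the idealized $\mathcal{O}_3$ world, and the first step then yields the claimed negligible bound.

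The main obstacle is the nonce coupling. In the actual scheme, $r$ determines both $B_i$ and $C_1$ (through $\mathsf{PRC}(k_1,r)$) as well as $C_2=f_{k_3}(r)$. Thus $\beta$ implicitly depends on the first-half signs via the decoded $\hat r$, and rotating $X^{(1)}$ could change $\hat r$, breaking the exact symmetry. I would handle this by noting that under $\mathcal{O}_3$ decoding of $C_1$ is assumed to succeed (the $\alpha=0$ regime up to negligible error), so $\hat r=r$ regardless of $R$. I would then treat $r$ as an independent random label, using that in the ideal model $C_1$ given $r$ is still uniform to an observer without $k_1$, and average over $k_1$ to restore the Gaussian law of $X^{(1)}$. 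Checking that this averaging is compatible with conditioning on the transcript is the delicate point. If it fails, I would instead argue directly that the posterior on $F$ depends on $\mathbf{Y}$ and the transcript only through the column space $\mathrm{span}(F)=\mathrm{span}(D_i)^\perp$, which is invariant under right-multiplication by $R$.
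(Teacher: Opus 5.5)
Your argument is correct. It rests on the same two ingredients as the paper's proof: Lemma~\ref{lemma:O_Ob_indist} to replace $\mathcal{O}$ by $\mathcal{O}_3$, and Haar invariance of $A$ under rotations of the fixed half. The difference is in how you package them.

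The paper conditions on $D_i$, observes that $F$ is then uniform over orthonormal bases of the complement of $D_i$, and finishes by saying that $\mathbf{Y}$ is isotropic Gaussian ``regardless of $F$''. That last step is where the paper is loose. What is needed is not marginal independence of $\mathbf{Y}$ and $F$. It is that the joint law of $(F, \mathbf{Y}, \text{transcript})$ is invariant under $F \mapsto FR$. Since $Y_i = X_i^{(1)}F^\top + X_i^{(2)}D_i^\top$ and the $\mathcal{O}_3$ transcript is a function of $(\mathbf{Y}, D_i)$, this requires rotating $X_i^{(1)}$ together with $F$.

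Your co-rotation $(A, X^{(1)}) \mapsto (\mathrm{diag}(R^\top, I_{d/2})A,\, X^{(1)}R)$, applied to all samples at once, is exactly what justifies that step. It fixes every $D_i$, every $Y_i$ and every $\mathcal{O}_3$ response pointwise while sending $F$ to $FR$. This buys you two things:
\begin{itemize}
\item exact equality of posteriors in the $\mathcal{O}_3$ world, rather than an informal indistinguishability;
\item an explicit statement that the only distributional input is rotation invariance of the first-half latents ($C_1$ uniform and independent of $(C_2, B_i)$).
\end{itemize}
Your fallback in the last paragraph is essentially the paper's proof.

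The nonce-coupling obstacle you flag does not arise at the level of this lemma. The paper treats $\beta$ as a second-half flip count computed with the true transformation (see Lemma~\ref{lemma:alpha_non_negl_beta_1}, and the remark in the paper's proof that the $\mathcal{O}_3$ responses depend only on $\eta D_i$). So in the idealized setting of Theorem~\ref{thm:rem2-secure}, your symmetry already finishes the argument, and no averaging over $k_1$ is needed. The coupling through $r$ is a genuine issue only for the deployed scheme, that is, for the hybrid step of Theorem~\ref{thm:computational_rem2}, which the paper also handles only loosely.

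One small point on the transfer back from $\mathcal{O}_3$ to $\mathcal{O}$. It should be done for the joint law of $(F, \mathbf{Y}, \text{transcript})$, which the natural coupling provides. The resulting bound holds on average, or with high probability over the adversary's view, rather than for every view. That is also all the paper's argument delivers.
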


\begin{lemma}[Hidden-basis isotropization]
\label{lemma:hidden_basis_isotropization}
Let $F\in\mathbb R^{d\times d/2}$ be an orthonormal basis, and
$R \sim Haar(d/2)$ be independent of a perturbation $\eta$. It follows that
\[
    \Delta(\eta F R, \|\eta F\|_2U_{d/2})=0.
\]
That is, an adversary that knows $F$ only up to a rotation $R$ sampled uniformly from $Orth(d/2)$ cannot outperform adding isotropic white-noise of magnitude $\|\eta F\|_2$.  
\end{lemma}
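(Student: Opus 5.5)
The plan is to condition on $\eta$ and use the rotation invariance of Haar measure. Interpret $U_{d/2}$ as the uniform distribution on the unit sphere $S^{d/2-1}\subset\mathbb{R}^{d/2}$, and $\Delta$ as total variation distance. First condition on a fixed value of $\eta$; this is legitimate because $R$ is independent of $\eta$, so conditioning does not change the law of $R$. With $\eta$ fixed, set $v := \eta F \in \mathbb{R}^{d/2}$, a deterministic row vector. Since $F$ has orthonormal columns, $\|v\|_2 = \|\eta F\|_2$ is the norm of the projection of $\eta$ onto the column span of $F$. The claim then reduces to showing that $vR$ is distributed as $\|v\|_2 U_{d/2}$. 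Finally, integrate over the law of $\eta$: the joint law of $(\eta, \eta F R)$ and that of $(\eta, \|\eta F\|_2 U)$, with $U$ independent of $\eta$, then coincide. Hence the total variation distance is exactly $0$, and the statement is exact rather than merely up to a negligible term.

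For the reduced claim, the case $v=0$ is trivial, since both sides are the point mass at $0$. For $v\neq 0$, write $v = \|v\|_2 u$ with $u$ a unit vector, so it suffices to show that $uR$ is uniform on the sphere. I would use the standard argument. Let $\mu$ be the law of $uR$, and let $Q\in Orth(d/2)$ be arbitrary. Right-invariance of Haar measure gives $RQ \overset{d}{=} R$, so $uRQ \overset{d}{=} uR$, i.e.\ $\mu$ is invariant under every orthogonal map. The unique $Orth(d/2)$-invariant Borel probability measure on $S^{d/2-1}$ is the normalized surface measure, because $Orth(d/2)$ acts transitively on the sphere and invariant probability measures on a homogeneous space of a compact group are unique. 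Therefore $\mu = U_{d/2}$. One should also fix the convention (row vectors acting by right multiplication) so that the invariance used is right-invariance. For Haar measure on a compact group left- and right-invariance coincide, so either convention works.

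The step I expect to need the most care is the measure-theoretic interpretation rather than any calculation. First, stating $\Delta(\cdot,\cdot)=0$ with $\eta$ random requires the independence of $R$ and $\eta$; I would make this explicit through the conditioning/disintegration step above. Second, the operational consequence should be spelled out, namely that $\mathcal{A}$ ``cannot outperform isotropic noise.'' Any success event depends only on $\eta F R$, possibly jointly with quantities independent of $R$. So for every $\eta$ chosen independently of $R$, its probability equals the probability obtained with isotropic noise of norm $\|\eta F\|_2$ in that subspace, and the norm constraint $\|\eta F\|_2\le\|\eta\|_2\le\epsilon$ is preserved. This is the form in which the lemma will later feed into Lemma~\ref{lemma:sqrt2_baseline_comparison}.
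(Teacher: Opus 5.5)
Your proposal is correct and follows the same route as the paper: condition on $\eta$, use Haar invariance of $R$ to conclude that $\eta F R$ is uniform on the sphere of radius $\|\eta F\|_2$, and then pass from each realization of $\eta$ to the unconditional statement using independence. You also supply details the paper leaves implicit: the $v=0$ case, the uniqueness of the $Orth(d/2)$-invariant probability measure on $\mathbb{S}^{d/2-1}$, and the correct radius, where the paper loosely says $\eta F R$ is uniform on the unit sphere.
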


\begin{lemma}[$\sqrt2\epsilon$ Baseline Comparison]
\label{lemma:sqrt2_baseline_comparison}
Suppose an adversary $\mathcal A$ is given the two $d/2$-dimensional
subspaces corresponding to $A_F$ and $A_D$, up to rotations sampled independently from $Haar(d/2)$. Then any perturbation
$\eta$ chosen by $\mathcal A$ with $\|\eta\|_2\leq\epsilon$ has negligible advantage over
full-space isotropic white-noise of magnitude $\sqrt2\epsilon$. That is,
$
    \mathsf{Succ}(\mathcal A,\epsilon)
    \leq 
    \mathsf{Succ}(\mathcal A_w,\sqrt2\epsilon)
$
\end{lemma}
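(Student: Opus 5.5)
We prove Lemma~\ref{lemma:sqrt2_baseline_comparison} by reducing the adversary's perturbation to its two projections, isotropizing each with Lemma~\ref{lemma:hidden_basis_isotropization}, and then comparing the result with full-space white-noise of magnitude $\sqrt2\epsilon$ using Lemma~\ref{lemma:O_Ob_indist} and the threshold gap (\ref{eqn:threshold_tau}).

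\textbf{Step 1: split and isotropize.} Write the chosen perturbation in the latent frame of the detector as $\eta\Lambda^\top = (\eta F, \eta D)$. Set $\epsilon_F = \|\eta F\|_2$ and $\epsilon_D = \|\eta D\|_2$, so that $\epsilon_F^2+\epsilon_D^2 = \|\eta\|_2^2 \le \epsilon^2$. The adversary knows each subspace only up to independent Haar rotations, and $\eta$ is chosen independently of them. Lemma~\ref{lemma:hidden_basis_isotropization}, applied to each block separately, therefore gives that $(\eta F R_F, \eta D R_D)$ has the same law as $(\epsilon_F U^{(1)}, \epsilon_D U^{(2)})$. Here $U^{(1)}, U^{(2)}$ are independent uniform vectors on the sphere $S^{d/2-1}$, and the pair $(\epsilon_F,\epsilon_D)$ may itself be random but is independent of the $U$'s. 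This reduces the adversary's strategy to choosing only how to split its budget between the $C_1$ half ($\alpha$) and the $C_2$ half ($\beta$).

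\textbf{Step 2: success depends only on the larger share.} Condition on $X$; the magnitudes $|X_i|$ are independent of the signs. Each coordinate of $\epsilon_F U^{(1)}$ behaves like $\mathcal N(0,2\epsilon_F^2/d)$ up to concentration. The flip fraction in $C_1$ therefore concentrates, within $O(\sqrt{\ln d/d})$ with overwhelming probability, around a deterministic increasing function $p(\epsilon_F)$, and likewise the $C_2$ flip fraction concentrates around $p(\epsilon_D)$. Write $\epsilon^* = \max(\epsilon_F,\epsilon_D)$. Up to negligible error, $\alpha=1$ requires $p(\epsilon_F) \gtrsim \tau_1$ and $\beta=1$ requires $p(\epsilon_D) \gtrsim \tau_2$. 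Since $\tau_2<\tau_1$ by (\ref{eqn:threshold_tau}), every event $\{\alpha=1\}$ arising from a share of size $\epsilon_F$ is dominated by $\{\beta=1\}$ under the same share. This is the same mechanism as in Lemma~\ref{lemma:O_Ob_indist}. So, up to negligible error, the success event is $\{p(\epsilon^*)\gtrsim\tau_2\}$, and it is monotone in $\epsilon^*$. Hence the adversary's best allocation is $\epsilon^* = \epsilon$, with all mass in one half.

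\textbf{Step 3: compare with the baseline.} Full-space white noise of norm $\sqrt2\epsilon$ splits into two halves each of norm $\epsilon(1\pm O(\sqrt{\ln d/d}))$ with overwhelming probability. After Step 1, each half has the law of a random rotation of that half. The baseline's $C_2$ half therefore receives an isotropic perturbation of norm $\approx \epsilon \ge \epsilon^*$. By the monotonicity from Step 2, the baseline's probability of $\beta=1$ is at least the adversary's success probability, minus a negligible term that absorbs the $O(\sqrt{\ln d/d})$ norm fluctuation.

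\textbf{Main obstacle.} The hardest step is controlling the boundary regime. When $p(\epsilon)$ sits within the concentration window of $\tau_2$, the norm fluctuation of the baseline's $C_2$ share, about $\epsilon\sqrt{\ln d/d}$, could cost non-negligible probability. My first attempt would be a coupling argument: the baseline's $C_2$ share is equal in law to $\epsilon' U^{(2)}$ with $\epsilon' = \sqrt2\epsilon\,\|g_2\|/\|g\|$. I would then use that the flip probability for a fixed $X$ is monotone in the scale $\epsilon'$, and show that the event $\epsilon'<\epsilon(1-\delta)$ has negligible probability for $\delta = \sqrt{\ln d\ln\ln d/d}$. The remaining $\delta$-slack would be absorbed into the gap condition (\ref{eqn:threshold_tau}), which is exactly where that threshold is needed.
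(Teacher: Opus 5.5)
\textbf{The gap is in Step~3.} This is where you suspected trouble, and the repair you sketch does not close it.

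First, the slack is too large. Lower-bounding the baseline's $C_2$-share by $\epsilon(1-\delta)$ with $\delta=\sqrt{\ln d\ln\ln d/d}$ shifts the mean $C_2$ flip count $N_2$ by $\Theta(d\delta)$. That is $\Theta(\sqrt{\ln d\ln\ln d})$ standard deviations, so monotonicity then gives a useless bound.

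Second, the threshold gap cannot absorb this slack. Condition (\ref{eqn:threshold_tau}) separates $\tau_1$ from $\tau_2$. Both sides of your comparison, however, are $\beta$-events at the \emph{same} threshold $\tau_2$. The gap only makes the baseline's other channel $\{\alpha=1\}$ negligible, which works against you.

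Third, even the typical $\Theta(d^{-1/2})$ relative fluctuation of $\epsilon'$ costs a constant. Let $\mathcal A$ spend its whole budget in the $D$-subspace, so its $C_2$-half receives exactly $\epsilon V$ with $V$ uniform on $\mathbb S^{d/2-1}$. Pick $\epsilon=\Theta(\sqrt d)$ so that $\mathbb{E} N_2=\tau_2 d/2+s\sqrt d$ for a fixed $s>0$. The baseline's random scale $\epsilon'$ is independent of its direction, so it adds an independent $\Theta(\sqrt d)$ term to $N_2$. By the CLT, $\mathcal A$ succeeds with probability tending to $\Phi(s/\sqrt{v_1})$. The baseline $\mathcal A_w$ succeeds with probability tending to $\Phi(s/\sqrt{v_1+v_2})$ for some $v_2>0$. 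This is a constant advantage, so the inequality at exactly $\sqrt2\epsilon$ with negligible slack is false in this window and cannot be proved as stated.

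The paper's proof has the same hole. It asserts that $d^{-1/3}$-relative concentration of the half-norms gives ``the same detector behavior up to a negligible concentration error.'' But a deviation of that size moves $\mathbb{E}N_2$ by $\Theta(d^{2/3})\gg\sqrt d$.

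What does go through is the bound against $\mathcal A_w$ with budget $\sqrt2(1+\delta)\epsilon$, which gives $c$-REM2 security for every constant $c>\sqrt2$.
\begin{itemize}
\item Both half-norms of this inflated baseline are $\ge\epsilon$ except with negligible probability.
\item Couple $X$ and the directions of the two perturbations. For fixed $X$ and direction $u$, coordinate $i$ flips iff $s|u_i|>|X_i|$ and $\operatorname{sign}u_i\neq\operatorname{sign}X_i$, which is monotone in the scale $s$.
\item The per-half domination $(\epsilon_F,\epsilon_D)\le(\epsilon,\epsilon)$ that the paper uses then finishes the argument.
\end{itemize}
This route needs neither your Step~2 nor (\ref{eqn:threshold_tau}).

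\textbf{A second issue, shared with the paper.} You take $\eta$ independent of $X$ and of the hidden rotations, as the paper does through Lemma~\ref{lemma:hidden_basis_isotropization}. This must be stated as an explicit hypothesis. In the removal game, $\eta$ is computed from $Y=X\Lambda$, and without independence the conclusion fails outright.

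Suppose the $D$-subspace is known, as the statement grants. Take $\eta=-(1+\gamma)\,YA_DA_D^\top$; note that $DD^\top=A_DA_D^\top$ does not depend on the one-time rotation. Writing $X_2$ for the $C_2$-half of $X$, this gives $\eta F=0$ and $\eta D=-(1+\gamma)X_2$. So it flips every $C_2$ sign at cost $(1+\gamma)\|X_2\|_2\approx(1+\gamma)\sqrt{d/2}$. White noise of $\sqrt2$ times that norm flips only about a $\frac{1}{\pi}\arctan(1+\gamma)\approx\frac14$ fraction. The advantage is therefore $1-\mathrm{negl}(d)$ whenever $\tau_2$ exceeds $\frac{1}{\pi}\arctan(1+\gamma)$ by a constant.
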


\begin{lemma}[Advantage Implies Basis Distinguishability] \label{lemma:most_powerful}
Continuing from the setting in Lemma \ref{lemma:basis_ind}. If there exists an adversary $\mathcal{A}$ operating under the REM2 threat model such that  $\mathsf{Adv}(\mathcal{A}, \epsilon)$ is non-negligible, then there exists a measurable subset $S \subseteq \mathcal{F}$, a rotation $R \in Haar(d/2)$, and polynomial $p(\cdot)>0$ such that:
\[
\resizebox{\linewidth}{!}{%
  $ \Big| \IP[F \in S \mid \mathbf{Y}, \mathcal{O}(q_1), \dots] - \IP[F \in S_R \mid \mathbf{Y}, \mathcal{O}(q_1), \dots] \Big| \geq \frac{1}{p(d)},$
  }
\]
where $S_R = \{ M R : M \in S\}$. That is, $\mathcal{A}$ can distinguish the specific orientation of the fixed basis $F$ from a rotated counterpart with non-negligible probability.
\end{lemma}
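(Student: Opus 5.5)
We prove the contrapositive of Lemma~\ref{lemma:most_powerful}. Assume every measurable $S\subseteq\mathcal{F}$ and every rotation $R$ satisfy the negligible bound of Lemma~\ref{lemma:basis_ind}. We then show that $\mathsf{Adv}(\mathcal{A},\epsilon)$ is negligible.

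\textbf{Step 1: success as an expectation over $F$.} After Steps (1)--(3) of the REM2 game, the adversary's view is $V=(\mathbf{Y},\mathcal{O}(q_1),\dots,\mathcal{O}(q_s))$ together with the fresh challenge $Y$. Its perturbation $\eta=\eta(V,\text{coins})$ is therefore a function of $V$ and internal randomness only. By (\ref{eqn:O_behavior}), success occurs when $\alpha=1$ or $\beta=1$ on $Y+\eta$.

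The $\beta$ part lives in the dynamic block $D$, which is re-randomized by the fresh $B$ of the challenge sample. Since $B$ is independent of $V$, Lemma~\ref{lemma:hidden_basis_isotropization} makes $\eta D$ distributed as $\|\eta D\|_2 U_{d/2}$.

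The $\alpha$ part depends on $\eta F$. Hence we can write
\[
\mathsf{Succ}(\mathcal{A},\epsilon)=\IE_V\big[\IE[\,g(F,\eta)\mid V]\big],
\]
where $g\in[0,1]$ is the conditional success probability given $F$ and $\eta$. For fixed $\eta$, define the test set $S_\eta=\{F\in\mathcal{F}: g(F,\eta)>u\}$ for a threshold $u\in[0,1]$. By the layer-cake formula, $\IE[g\mid V]=\int_0^1 \IP[F\in S_{\eta,u}\mid V]\,du$.

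\textbf{Step 2: replace $F$ by $FR$.} Take $R\sim Haar(d/2)$ independent of everything. Apply the hypothesis to each $S_{\eta,u}$ and average over $u$ and $R$. This gives
\[
\big|\IE[g(F,\eta)\mid V]-\IE[g(FR,\eta)\mid V]\big|\le \mathrm{negl}(d).
\]
A technical point arises here: the negligible bound must hold uniformly over $R$ and $S$ in order to integrate over them. We will read Lemma~\ref{lemma:basis_ind} as a total-variation statement between the conditional laws of $F$ and $FR$, which licenses the averaging.

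With $FR$ in place of $F$, Lemma~\ref{lemma:hidden_basis_isotropization} makes $\eta FR$ isotropic with magnitude $\|\eta F\|_2$. So the modified experiment is exactly the setting of Lemma~\ref{lemma:sqrt2_baseline_comparison}: both subspaces are known at best up to independent Haar rotations. That lemma yields modified success $\le \mathsf{Succ}(\mathcal{A}_w,\sqrt2\epsilon)$.

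\textbf{Step 3: conclude.} Combining Steps 1 and 2 gives $\mathsf{Adv}(\mathcal{A},\epsilon)\le\mathrm{negl}(d)$, contradicting non-negligibility. Therefore some $S_{\eta,u}$ and some $R$ in the support must violate the bound by at least $1/p(d)$. An averaging argument supplies these: if the average gap exceeds $1/p$, some instance attains it. This yields the $S$, $R$ and $p$ claimed by Lemma~\ref{lemma:most_powerful}.

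\textbf{Main obstacle.} The difficulty is that $S_{\eta,u}$ depends on $\eta$, and $\eta$ depends on the view $V$ that is conditioned on. We handle this by conditioning on $V$ first: $\eta$ is then a fixed vector, and the internal coins are independent of $F$ given $V$. The statement of Lemma~\ref{lemma:basis_ind} is conditional on the same view, so the sets are legitimately fixed. A further issue is the PPT-ness of these sets; we sidestep it because the statement here is information-theoretic.
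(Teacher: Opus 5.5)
You follow the paper's own outline. You argue the contrapositive and use rotation-invariance of the posterior of $F$ to pass from $F$ to $FR$. You then isotropize with Lemma~\ref{lemma:hidden_basis_isotropization}, dominate with Lemma~\ref{lemma:sqrt2_baseline_comparison}, and average over the view. Your layer-cake comparison of $\IE[g(F,\eta)\mid V]$ with $\IE[g(FR,\eta)\mid V]$, under the total-variation reading, is a cleaner version of the paper's one-line claim $\Delta(\eta F^\top,aU_{d/2})\le\mathrm{negl}(d)$, and that part is fine.

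\textbf{The gap.} The step that fails is the next one: ``with $FR$ in place of $F$, $\eta FR$ is isotropic, so this is exactly the setting of Lemma~\ref{lemma:sqrt2_baseline_comparison}.'' Lemma~\ref{lemma:hidden_basis_isotropization} describes only the marginal law of $\eta FR$. Whether a sign flips depends on the joint law of $(YFR,\eta FR)$. That pair is a Haar rotation of $(YF,\eta F)$, so it preserves $\langle YF,\eta F\rangle$, and the same holds on the $D$ side. Because $\eta$ may depend on the challenge $Y$, which is part of your $V$, the adversary controls this correlation without knowing any orientation. A perturbation correlated with the signal cannot be compared to white noise drawn independently of it.

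\textbf{Counterexample.} Take $\eta=-cY$ with $c=\epsilon/\|Y\|$. The decoder sees $(1-c)X$ whatever $\Lambda$ is, so for $\epsilon=(1+\delta)\sqrt d$ every sign flips with high probability and $\mathsf{Succ}(\mathcal A,\epsilon)\approx 1$. By contrast, $\sqrt2\epsilon$ white noise flips only about a $\tfrac1\pi\arctan(\sqrt2(1+\delta))\approx 0.31$ fraction in each half. For thresholds above this (e.g.\ the SDXL values $\tau_2\approx0.34$, $\tau_1\approx0.37$), $\mathsf{Succ}(\mathcal A_w,\sqrt2\epsilon)$ is negligible. This adversary makes no queries, and in the idealized model its view is independent of $\Lambda$. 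So the posterior of $F$ is exactly rotation-invariant and your hypothesis holds with gap $0$, yet the advantage is close to $1$.

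The problem is not confined to huge perturbations. The shrink-plus-noise family $\eta=-cY+\xi$ beats the baseline for every $\tau_2>1/4$ at suitable budgets with $\epsilon^2/d\in(1/2,1)$. Under Lemma~\ref{lemma:sqrt2_baseline_comparison}'s own hypothesis, shrinking $YP_D$ inside the known $D$-subspace produces strictly more flips there than the $\sqrt2\epsilon$ baseline at every budget.

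Hence the implication you set out to prove does not hold as stated. The paper's proof makes exactly the same move (``the adversary's perturbation is equivalent to an isotropic perturbation in the two hidden half-spaces''), so you inherited this gap rather than introduced it. Any repair has to track $\langle YP_F,\eta P_F\rangle$ and $\langle YP_D,\eta P_D\rangle$, and must restrict the budget or the admissible perturbations.
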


\begin{theorem} \label{thm:rem2-secure}
If the threshold gap $\tau_{1} - \tau_{2}$ satisfies (\ref{eqn:threshold_tau}), then for any PPT adversary $\mathcal{A}$ operating under the REM2 threat model, with access to a polynomial number of watermarked samples and queries to the detector oracle $\mathcal{O}$, 
$
    \mathsf{Adv}(\mathcal{A}, \epsilon) \leq {negl}(d).
$
\end{theorem}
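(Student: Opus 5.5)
The plan is a proof by contradiction that chains the five lemmas in the order the paper lists them. We work in the idealized setting, where PRC, PRF and \texttt{SRandOrthonormal} are replaced by truly random sources. Suppose some PPT REM2 adversary $\mathcal{A}$ has $\mathsf{Adv}(\mathcal{A},\epsilon)\ge 1/p(d)$ for a polynomial $p$ and infinitely many $d$. We then derive a contradiction with Lemma~\ref{lemma:basis_ind}.

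The first step bounds the information carried by the oracle transcript. For each query $q_j=Y+\eta_j$, Lemma~\ref{lemma:O_Ob_indist} lets us replace $\mathcal{O}(q_j)$ by $\mathcal{O}_3(q_j)=1(\beta_j=0)$ up to statistical distance $\mathrm{negl}(d)$. This uses the threshold gap (\ref{eqn:threshold_tau}) together with concentration of the sign-flip proportions on the two halves. Because $s$ is polynomial, a hybrid over the $s$ queries keeps the total statistical distance negligible. Under $\mathcal{O}_3$, each response depends only on $\beta_j$, that is, on the one-time $B_i$ and the second half. Since $B_i$ is fresh Haar per sample and independent of $A$, the responses leak nothing beyond what the samples $\mathbf{Y}$ already reveal about the fixed part. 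This is what Lemma~\ref{lemma:basis_ind} records: conditioned on the entire view, $F$ is determined only up to a Haar rotation $R$ on its $d/2$-dimensional column space, up to $\mathrm{negl}(d)$.

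The second step handles the challenge. Apply Lemma~\ref{lemma:most_powerful}, in contrapositive form, to $\mathcal{A}$: a non-negligible advantage would give a measurable $S\subseteq\mathcal{F}$ and a rotation $R$ such that the conditional probabilities of $F\in S$ and $F\in S_R$ differ by at least $1/p(d)$. This contradicts Lemma~\ref{lemma:basis_ind}. Lemma~\ref{lemma:most_powerful} itself rests on the following argument, which we restate to keep the chain explicit. Once the view is (negligibly close to) invariant under $F\mapsto FR$, the perturbation $\eta$ chosen by $\mathcal{A}$ is independent of the hidden rotation. By Lemma~\ref{lemma:hidden_basis_isotropization}, $\eta F$ is then distributed as $\|\eta F\|_2 U_{d/2}$. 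The dynamic half $\eta D$ is isotropized in the same way, now by the fresh $B$. Consequently $\mathsf{Succ}(\mathcal{A},\epsilon)$ is at most the success of independent isotropic noises on the two halves with $\|\eta F\|^2+\|\eta D\|^2\le\epsilon^2$. Lemma~\ref{lemma:sqrt2_baseline_comparison} bounds this by $\mathsf{Succ}(\mathcal{A}_w,\sqrt2\epsilon)$, since each half receives magnitude at most $\epsilon$, which is what full-space noise of norm $\sqrt2\epsilon$ delivers per half, and success is monotone in the per-half flip rates. Adding the negligible error terms from the first step gives $\mathsf{Adv}(\mathcal{A},\epsilon)\le\mathrm{negl}(d)$.

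The main obstacle is the accounting in the first step. Lemma~\ref{lemma:O_Ob_indist} is stated per query, but the adversary's queries are all answered under the same key. We therefore need the union bound over polynomially many queries, and possibly over the samples used in those queries, to remain negligible. This requires the per-query failure probability from the concentration bound under (\ref{eqn:threshold_tau}) to be genuinely superpolynomially small. The first thing to try is a Hoeffding bound on the difference of flip fractions between the two halves, with deviation $\sqrt{\ln d\ln\ln d/d}$. This gives failure probability $\exp(-\Omega(\ln d\ln\ln d))=d^{-\omega(1)}$, which survives the union bound.
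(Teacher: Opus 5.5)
Your chain is the paper's own: Lemma~\ref{lemma:O_Ob_indist} $\Rightarrow$ Lemma~\ref{lemma:basis_ind}, then a contradiction through Lemma~\ref{lemma:most_powerful}, whose mechanism you rebuild from Lemmas~\ref{lemma:hidden_basis_isotropization} and~\ref{lemma:sqrt2_baseline_comparison}. It inherits a genuine gap at exactly the step you spell out: ``consequently $\mathsf{Succ}(\mathcal{A},\epsilon)$ is at most the success of independent isotropic noises on the two halves.''

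Lemma~\ref{lemma:hidden_basis_isotropization} only gives the \emph{marginal} law of $\eta FR$ (and of $\eta D$ under the fresh $B$). Rejection is decided by the sign flips of $X'+\eta\Lambda^\top$, where $X'=Y\Lambda^\top$. Since $Y$ is an input to $\mathcal{A}$, what matters is the \emph{joint} law of $(X',\eta\Lambda^\top)$. A hidden orientation only makes $(YFR,\eta FR)$ a jointly rotated pair. Its Gram matrix, in particular $\langle YP_F,\eta P_F\rangle$ (and the analogue for $D$), is still chosen by the adversary, so $\eta\Lambda^\top$ need not be independent of $X'$. The paper's Lemmas~\ref{lemma:sqrt2_baseline_comparison} and~\ref{lemma:most_powerful} make the same leap, so citing them does not close it.

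The step is in fact false, and with it the inequality for all $\epsilon$. A concrete counterexample:
\begin{itemize}
\item Let $\mathcal{A}$ output $\eta=-(1+\delta)Y$. It uses no samples, no queries, and no information about $A$ or $B$.
\item Then $\widetilde Y=-\delta Y$ flips every sign of $X'$, so $\alpha=\beta=1$ with probability $1$. The cost is $\|\eta\|_2=(1+\delta)\|Y\|_2\approx(1+\delta)\sqrt d$.
\item White noise of $\sqrt2$ times that norm is, in detector coordinates, independent of $X'$ with per-coordinate variance about $2(1+\delta)^2$. It flips each sign with probability $\arctan(\sqrt2(1+\delta))/\pi\approx0.304$.
\item Whenever $\tau_2$ exceeds this, the baseline fails with overwhelming probability and $\mathsf{Adv}(\mathcal{A},\epsilon)\approx1$. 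The paper's SDXL estimates $\tau_1\approx0.37$, $\tau_2\approx0.34$ are such a case, and they satisfy (\ref{eqn:threshold_tau}).
\end{itemize}

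More generally, take $\eta=-e^2Y+W$ with $W\perp Y$ isotropic and $\|\eta\|_2\approx e\sqrt d$. This flips a fraction $\arcsin(e)/\pi$ of each half, against $\arctan(\sqrt2 e)/\pi$ for the baseline. The former is strictly larger once $e>1/\sqrt2$. For a suitable budget this breaks the claim for every $\tau_2\in(1/4,1/2)$, including $\tau_2\approx0.26$.

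So an adversary with zero orientation information already beats the $\sqrt2\epsilon$ baseline, and no contradiction with Lemma~\ref{lemma:basis_ind} can be extracted. A true statement needs a restriction on $\epsilon$, or on how $\eta$ may depend on $Y$, together with an explicit treatment of this correlation term. By comparison, the union bound you flag as the main obstacle is minor: the per-query error from the threshold gap is $d^{-\Omega(\ln\ln d)}$.
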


\begin{proof}
We proceed by contradiction. Assume there exists a PPT adversary $\mathcal{A}$ operating under the REM2 threat model with a perturbation budget $\epsilon$ s.t.  $\mathsf{Adv}(\mathcal{A}, \epsilon)$ is non-negligible.

By Lemma \ref{lemma:O_Ob_indist}, the adversary's interaction with the real oracle $\mathcal{O}$ is statistically indistinguishable from interaction with the simulated oracle $\mathcal{O}_3$. Consequently, the oracle responses leak strictly negligible information regarding the static decision boundaries. Building upon this, Lemma \ref{lemma:basis_ind} establishes that given the adversary's view, the exact orientation of the fixed basis $F$ remains statistically indistinguishable from a randomly rotated counterpart $FR$, where $R \sim \mathrm{Haar}(d/2)$. 

However, by Lemma \ref{lemma:most_powerful}, since $\mathsf{Adv}(\mathcal{A}, \epsilon)$ is non-negligible, there exists a PPT distinguisher which distinguishes between $F$ and $FR$ with non-negligible probability, which contradicts Lemma \ref{lemma:basis_ind}. Thus, such an $\mathcal{A}$ does not exist. Thus, for all PPT $\mathcal{A}$, 
$ \mathsf{Adv}(\mathcal{A}, \epsilon) \le \mathrm{negl}(d). $
\end{proof}
\begin{remark}[Tightness of the $\sqrt2$ factor]
The $\sqrt2$ factor is tight for the full-space white-noise comparison. By
Lemma~\ref{lemma:hidden_basis_isotropization}, an adversary who knows the fixed
and dynamic subspaces can choose how to split its perturbation energy
\[
    (a,b), \qquad a^2+b^2\leq \epsilon^2,
\]
but within each subspace the perturbation is still isotropized by the hidden
rotation. Lemma~\ref{lemma:sqrt2_baseline_comparison} proves that an adversary operating within these isolated subspaces has the same success probability of the $\sqrt{2}\epsilon$ white-noise adversary, up to a negligible difference. Lemma~\ref{lemma:most_powerful} then shows that any
adversary exceeding this baseline would contradict
Lemma~\ref{lemma:basis_ind}. Thus, the $\sqrt{2}$ factor is tight. 
\end{remark}

\subsubsection{Computational Security}
Having established information-theoretic security under idealized conditions, we prove in Theorem \ref{thm:computational_rem2} that the watermarking construction detailed in Algorithms \ref{alg:keygen}, \ref{alg:sample}, and \ref{alg:decode_detect} achieves computational security against any PPT adversary. Due to space constraints, we will state Theorem \ref{thm:computational_rem2} here, and the proof is deferred to Appendix \ref{appendix_proof}.
\begin{theorem} \label{thm:computational_rem2}
If the underlying PRC encoder, PRF, and orthonormal matrix generators are secure pseudorandom primitives, and the condition in (\ref{eqn:threshold_tau}) is met, then under the REM2 threat model, for all PPT adversaries $\mathcal{A}$, 
\[
{Adv}(\mathcal{A}, \epsilon) \leq {negl}(\lambda),
\]
where $\lambda$ is the security parameter and $negl(\cdot)$ is a negligible function.
\end{theorem}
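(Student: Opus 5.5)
We reduce Theorem~\ref{thm:computational_rem2} to Theorem~\ref{thm:rem2-secure} with a standard hybrid argument. Each pseudorandom primitive in Algorithms~\ref{alg:sample} and~\ref{alg:decode_detect} is replaced, one at a time, by its ideal counterpart. The chain of games is:
\begin{itemize}
\item $\mathsf{H}_0$: the real REM2 game.
\item $\mathsf{H}_1$: $f_{k_3}$ is replaced by a truly random function.
\item $\mathsf{H}_2$: $\mathsf{SRandOrthonormal}(d,k_2)$ is replaced by a Haar-random $A$.
\item $\mathsf{H}_3$: each $\mathsf{SRandOrthonormal}(d/2,r)$ is replaced by a Haar-random $B$, indexed by the nonce $r$ so that decoding can still look it up.
\item $\mathsf{H}_4$: the PRC codeword $C_1$ is replaced by uniformly random bits, while the challenger still decodes correctly via a lookup table.
\end{itemize}
$\mathsf{H}_4$ is exactly the idealized setting of Theorem~\ref{thm:rem2-secure}, so there $\mathsf{Adv}(\mathcal A,\epsilon)\le \mathrm{negl}(d)$. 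Since $d=\poly(\lambda)$, this is also $\mathrm{negl}(\lambda)$.

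For each adjacent pair $\mathsf{H}_{j},\mathsf{H}_{j+1}$, we build a PPT distinguisher $\mathcal D_j$ against the corresponding primitive. $\mathcal D_j$ runs $\mathcal A$, keeps the other keys itself, and simulates $\mathsf{Sample}$ and $\mathcal O$ using its own challenge oracle in place of the primitive. At the end it outputs $1$ iff $\mathcal A$'s final $\widetilde Y$ satisfies $\mathcal O(\widetilde Y)=0$ and $\|\widetilde Y-Y\|_2\le\epsilon$. Then $|\mathsf{Succ}_{\mathsf H_j}-\mathsf{Succ}_{\mathsf H_{j+1}}|$ equals $\mathcal D_j$'s distinguishing advantage, which is negligible by assumption.

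The baseline term $\mathsf{Succ}(\mathcal A_w,\sqrt2\epsilon)$ must be handled the same way. Its success probability changes by at most a negligible amount across the hybrids, by the same distinguishers run with $\mathcal A_w$ in place of $\mathcal A$. Summing over the constantly many hybrids, $\mathsf{Adv}$ in $\mathsf H_0$ is within $\mathrm{negl}(\lambda)$ of $\mathsf{Adv}$ in $\mathsf H_4$.

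The simulation step is the one I expect to be the main obstacle. The reduction must answer detector queries on adversarially chosen $q_j$, and decoding requires $k_1$, $k_2$ and $k_3$.
\begin{itemize}
\item \emph{PRF and orthonormal generator.} This is easy. The reduction queries its oracle on the decoded nonce $\hat r$, and for $A$ it receives a single challenge matrix up front. I will state that $\mathsf{SRandOrthonormal}$ is assumed secure in the sense that its output, under a hidden uniform seed, is indistinguishable from Haar; the nonce-indexed family in $\mathsf{H}_3$ is then treated as a PRF-style family queried at the $\hat r$ values.
\item \emph{PRC.} Here the reduction lacks $k_1$ and so cannot run $\mathsf{PRC}^{-1}$ on arbitrary queries. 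I would handle this by assuming the PRC is secure when the adversary also has access to a decoding oracle, as in robust-PRC definitions. Alternatively, I would order the hybrids so that the PRC step comes last and argue directly: the only information $\mathcal O$ leaks is $\beta$ (Lemma~\ref{lemma:O_Ob_indist}), and in the ideal world $\beta$ does not depend on $C_1$'s pseudorandomness except through successful decoding. A decoding failure then makes $\mathcal O$ output $0$ identically in both worlds, up to the negligible error of Lemma~\ref{lemma:O_Ob_indist}.
\item \emph{Fallback.} If this is too delicate, I would state the decoding-oracle-secure PRC assumption explicitly and note that it is standard for PRC watermarks.
\end{itemize}

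Finally, each simulation runs in polynomial time, because $\mathcal A$ makes polynomially many queries and requests polynomially many samples. Hence every $\mathcal D_j$ is PPT, and the theorem follows.
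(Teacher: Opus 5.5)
Your hybrid chain has the right overall shape, and it is finer than the paper's. The paper only swaps $A$ and then the $B_i$ for Haar matrices. It justifies the second swap by saying $r$ ``cannot be recovered'' from $C_1,C_2$, and it never idealizes $C_1,C_2$, even though Theorem~\ref{thm:rem2-secure} is stated for that idealized setting.

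However, your step $\mathsf{H}_2\to\mathsf{H}_3$ fails in the order you chose. $\mathsf{SRandOrthonormal}(d/2,\cdot)$ is an unkeyed deterministic function of $r$, essentially a PRG. Its only security property is that its output looks Haar when the seed is uniform and \emph{hidden}.
\begin{itemize}
\item In $\mathsf{H}_2$ the codeword $C_1=\mathsf{PRC}(k_1,r)$ still carries $r$, so the distinguisher must know $r$ to build the sample.
\item If it instead encodes its own $r'$ and plugs in the challenge matrix, then even in the pseudorandom case it simulates a game where the seed of $B$ is independent of the nonce in $C_1$. That is not $\mathsf{H}_2$, and bridging the difference is exactly a PRC-hiding step.
\item You also cannot treat $r\mapsto\mathsf{SRandOrthonormal}(d/2,r)$ as a ``PRF-style family queried at $\hat r$''. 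There is no key, and anyone holding $r$ recomputes $B$.
\end{itemize}
The fix is to reorder. Do the PRF step first, then the PRC step, so that $C_1,C_2$ become independent of $r$. Then $r$ survives only as the seed of $B_i$ and as an index in the challenger's decoding table. Only then replace the $B_i$, one sample at a time: polynomially many sub-hybrids, each a direct reduction to the single-output assumption. The $A$ step can go anywhere, since $k_2$ is used nowhere else.

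This puts the weight on the PRC step, which you left open. Your fallback (a PRC secure against a decoding oracle) strengthens the hypothesis and so proves a different theorem. The Lemma~\ref{lemma:O_Ob_indist} route is not precise either, because evaluating $\beta$ still requires knowing which nonce a query decodes to.

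What closes the gap, within the paper's own model, is your lookup-table idea. The paper treats the detector as $\mathcal{O}=1((\alpha=0)\wedge(\beta=0))$ with threshold indicators, together with the well-behaved assumption. So decoding a query returns $r_i$ iff its first-half sign pattern is within relative Hamming distance $\tau_1$ of the issued codeword $c_i$, and returns $\bot$ otherwise. The PRC reduction chooses the messages $r_i$ for its challenge oracle and receives the $c_i$. It can therefore evaluate this decoder, and hence $\mathcal{O}$ (including on the final $\widetilde{Y}$), without $k_1$. With real codewords it simulates the preceding hybrid; with random ones it simulates exactly your $\mathsf{H}_4$. If you do not want to rely on that threshold model, you need an explicit robustness and soundness property of the PRC decoder on adversarial inputs, stated as a hypothesis.

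Finally, converting $\mathrm{negl}(d)$ into $\mathrm{negl}(\lambda)$ needs a lower bound such as $d\ge\lambda$ (as the paper assumes), not $d=\poly(\lambda)$. The same lower bound is what ensures $\poly(\lambda)$ queries are $\poly(d)$, so that Theorem~\ref{thm:rem2-secure} applies.
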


\subsection{False Positive Rates}
As outlined in Algorithm \ref{alg:decode_detect}, the decoding function $\mathsf{Decode}(\lambda, \bY, k, t)$ takes a threshold parameter $t$ that governs the false positive rate (FPR) of the detection process. Based on prior work (e.g. \cite{gspp}), for a given threshold $t$ and dimension $d > 0$, the FPR for the watermarking scheme is given by
\begin{equation}
FPR(t,d) = \frac{1}{2^{d/2}} \sum_{i=0}^{t-1} \binom{d/2}{i} \sim \exp\left(-d \left( \frac{1}{2} - \frac{2t}{d}\right)^2 \right),
\end{equation}
where the latter approximation holds for sufficiently large $d$. For a fixed $d$ and predetermined FPR (e.g. $2^{-100}$), the corresponding $t$ can be computed. 

\subsection{Errors Due To Diffusion Inversion}
Existing latent-based watermarking schemes typically rely on diffusion inversion to recover the approximate starting point, and ours is no exception. This inversion is imperfect, e.g. due to null-prompt inversion or limitations of the inversion algorithm. 
Let
$\hat{s} := s + \delta$ denote the recovered latent, where $\delta$ is the
inversion error and $s$ is the original starting point. Since the inversion process never observes $\Lambda = MA$,
$\delta$ is independent of $\Lambda$; hence $\delta\Lambda^{\top}$ is
isotropic white-noise of magnitude $\|\delta\|_2$, perturbing the fixed and
dynamic subspaces symmetrically. Inversion error is therefore statistically
equivalent to the white-noise baseline in our REM2 analysis and can be
absorbed into the detection threshold without affecting
Theorems~\ref{thm:rem2-secure} and~\ref{thm:computational_rem2}. Formally,
this corresponds to adding an isotropic term independent of $\Lambda$ to the
adversary's perturbation in the REM2 game; since the argument of
Lemma~\ref{lemma:O_Ob_indist} depends only on the threshold gap
and the symmetry of the induced flips, the analysis is unchanged. Our security
games (Section~\ref{subsection:threat_model}) are defined over latent perturbations,
consistent with \cite{prc, ccs}; adversarial exploitation of the inversion map
itself lies outside this model and is evaluated empirically in
Section~\ref{section:evals}. Finally, as \texttt{Decode} is agnostic to the
inversion algorithm, one may employ more accurate inversion techniques
(e.g., \cite{cosda, roar}).

\section{Empirical Evaluation} \label{section:evals}

We empirically evaluate the following:

\begin{enumerate}
    \item \textbf{Recovery of the secret transformation:} We demonstrate the effectiveness of our proposed attack against the boundary-hiding defense by measuring the accuracy the recovered secret transformation.
    \item \textbf{Performance on generation quality:} We  demonstrate that our hybrid latent-based watermark preserves the visual quality and text alignment of the generated images compared to unwatermarked baselines.
    \item \textbf{Robustness against perturbations:} We assess the scheme's resilience by measuring watermark recovery rates under benign  perturbations (e.g., JPEG compression and Gaussian noise), and  robustness against recent latent-based watermark removal attacks.
\end{enumerate}
\subsection{Setup}

\subsubsection*{Defense Baselines}
In Section \ref{subsection:expt_attack}, we evaluate the efficacy of the transformation recovery attack (detailed in Section \ref{subsection:attack_on_defense}) against the boundary-hiding defense proposed in \cite{ccs}. Furthermore, in Section \ref{subsection:eval_latent_removal_stealthy}, we assess the robustness of our proposed watermarking scheme against the targeted latent-flipping attack introduced in \cite{ccs}, which is explicitly designed to remove undetectable latent-based watermarks, as well as the latent-space regeneration attack described in \cite{zhao2024invisible}.


\subsubsection*{Dataset and Model Used}
We employed Stable Diffusion 2.1 \cite{rombach2021highresolution} and Stable Diffusion XL models \cite{sdxl}, with their respective default parameters, downloaded from HuggingFace. Images were generated using prompts sourced from the Stable Diffusion Prompt (SDP) dataset \cite{gustavosta_sd_prompts}.
This is consistent with the evaluations performed in prior work \cite{prc, gs, ccs}.

\subsubsection*{Coding Primitives}
We utilized the PRC codeword implementation from \cite{prc}. To support multi-bit message payloads, we applied a Low-Density Parity-Check (LDPC) error-correcting code using the Python \texttt{pyldpc} package, with their suggested default parameters. Similar to \cite{prc, gs, gspp}, we set the message length to be 512-bits in the multi-bit watermarking setting. We empirically verify that these parameters meet the condition (\ref{eqn:threshold_tau}) in Appendix \ref{appendix_im_sample_wm}

\subsubsection*{PRF and \texttt{SRandOrthonormal} Functions}
We employed the HMAC-SHA256 PRF provided in Python's \texttt{hashlib} package. The $\texttt{SRandOrthonormal}$ function  is as described in Appendix \ref{appendix_srandorth}.

\subsubsection*{System Configuration}
A Linux system with the following specifications was used --  Processor: Intel Xeon CPU E5-2620;
RAM: 256GB;
GPU: NVIDIA A100. 

\subsubsection*{Evaluation Metrics}
We use three metrics: \textbf{FID and CLIP scores} for visual quality and text alignment of generated images; \textbf{watermark recovery rate} for robustness under benign perturbations; and \textbf{attack success rate (ASR)} for the effectiveness of targeted latent-based removal attacks.

For the watermark recovery rate and ASR evaluations (Sections~\ref{subsection:eval_benign} \& \ref{subsection:eval_latent_removal_stealthy}), we run the full generation and decoding pipeline: each image is generated from a freshly sampled watermarked latent, and detection follows prior work (e.g. \cite{prc,ccs, gs, gspp}) by inverting under the null prompt (i.e., imperfect inversion) and passing the inverted latent to the detector.

\subsection{Recovery of Secret Transformation} \label{subsection:expt_attack}
As established in Section~\ref{subsubsection:no_operation_basic_attack}, the attack requires $\tilde{\Omega}(d^2)$ oracle queries. We evaluate the boundary-hiding defense for $d \in \{4096, 8192, 16384\}$ and $\tau \in \{0.01, 0.1, 0.3\}$, where $d=16{,}384$ corresponds to Stable Diffusion 2.1, allocating a budget of $d^2$ queries for the non-adaptive blind probing phase (Step~2, Section~\ref{subsection:attack_on_defense}). While the query count is large, the attack's non-adaptivity allows for efficient parallelization. Over 5 GPUs, the $d=16{,}384$ case completes in $\sim 23$ hours. 

To evaluate the accuracy of our attack, we measure how closely our extracted components (the rows of $V \in \mathbb{R}^{d \times d}$) match the target axes of the true secret transformation matrix $A$. We compute the alignment matrix $U = VA^\top$, whose diagonal entries correspond to the cosine similarity between the actual and recovered components. For an ideal extraction where the target axes are recovered perfectly (up to a sign flip), every row in $U$ will contain exactly one entry of $1$ or $-1$, with all other entries being $0$.

We quantify the success of the extraction using $\mu_{align}$, the average cosine similarity between the actual and recovered components, corresponding to the average of the diagonal values of $U$. Note that this is equivalent to Tucker's congruence coefficient, where a standard measure of factor recovery in PCA, for which a value above $0.85$ indicates good recovery and a value above $0.95$ implies near-perfect recovery~\cite{lorenzo2006tucker}. A $\mu_{align}$ score approaching $1$ indicates perfect recovery, whereas an unsuccessful extraction yields a matrix uncorrelated with $A$, causing $\mu_{align}$ to collapse toward $0$. For reference, random guessing yields an expected alignment score of $1/\sqrt{d}$.

The results are summarized in Table~\ref{tab:extraction_results}. Across all evaluated configurations of $d$ and $\tau$, the attack successfully recovers the secret transformation, achieving $\mu_{align} \geq 0.85$. 

We further evaluated our transformation recovery attack on our proposed scheme, and observed $\mu_{align} \leq 0.01$ across all values of $\tau$ and $d$, empirically demonstrating that the attack fails to recover any useful information about the fixed transformation.

\begin{table}
\centering
\caption{Extraction performance of the attack detailed in Section \ref{subsection:attack_on_defense}, evaluated under a fixed budget of $d^2$ queries across various tolerances ($\tau$) and dimensionalities ($d$). The metric $\mu_{align} \in [0,1]$, rounded to 3 decimal places,  denotes the mean maximum absolute alignment score, where $1$ indicates perfect recovery of the secret transformation $A$. The rightmost column reports the proportion of target axes successfully recovered with an individual alignment score exceeding $0.85$.}
\label{tab:extraction_results}
\begin{tabular}{cccc}
\hline
\textbf{$d$} & \textbf{$\tau$} & \textbf{$\mu_{align}$} & \textbf{\begin{tabular}[c]{@{}c@{}}Recovered\\ Axes ($>0.85$)\end{tabular}} \\ \hline
4,096          & 0.01            & 0.999                  & 100\%                                                                       \\
             & 0.10            & 0.998                  & 100\%                                                                       \\
             & 0.30            & 0.996                  & 100\%                                                                       \\ \hline
8,192          & 0.01            & 0.999                  & 100\%                                                                       \\
             & 0.10            & 0.930                  & 100\%                                                                       \\
             & 0.30            & 0.899                 & 100\%                                                                       \\ \hline
16,384          & 0.01            & 0.987                  & 100\%                                                                       \\
             & 0.10            & 0.915                  & 100\%                                                                       \\
             & 0.30            & 0.871                  & 100\%                                                                       \\ \hline
\end{tabular}%
\end{table}

\subsection{Generation Quality Under Stable Diffusion}

To empirically verify whether our watermarked latents introduce any degradation or structural bias during the reverse diffusion process, we evaluate the Fréchet Inception Distance (FID) and CLIP scores against unwatermarked baselines. The initial latents are drawn using the \texttt{Sample} procedure detailed in Algorithm \ref{alg:sample}, and generation is performed using the default hyperparameters of Stable Diffusion 2.1. Due to the substantial computational cost associated with large-scale image generation and metric evaluation, we restrict this specific analysis to SD2.1. Samples of watermarked images generated using SD2.1 and SDXL are given in Appendix \ref{appendix_im_sample_wm}.

Following the evaluation protocol established in \cite{prc}, our dataset comprises 10,000 generated images, obtained by selecting 100 random prompts and generating 100 images per prompt. Table \ref{tab:fid_clip} presents our computed FID and CLIP scores alongside the baseline scores reported in \cite{prc} for both the PRC watermark and the unwatermarked baseline. These results empirically corroborate our theoretical claims: our hybrid watermarking scheme does not degrade visual quality or text alignment, yielding outputs with fidelity identical to those initialized with standard $\mathcal{N}(0, I_d)$ noise.

\begin{table}
\centering
\caption{Mean and standard deviation of FID and CLIP scores, rounded to 4 decimal places. Results demonstrate that our hybrid watermarking scheme achieves performance comparable to the baseline with no watermark, empirically verifying that the watermarking process preserves generation quality.}
\label{tab:fid_clip}
\begin{tabular}{@{}lcccc@{}}
\toprule
 & \multicolumn{2}{c}{\textbf{FID}} & \multicolumn{2}{c}{\textbf{CLIP} } \\ \cmidrule(lr){2-3} \cmidrule(l){4-5} 
\textbf{Watermark} & \textbf{Mean} & \textbf{Std} & \textbf{Mean} & \textbf{Std} \\ \midrule
None               & 63.4625       & 0.2507      & 0.6119        & 0.0018      \\
PRC                & 63.7350       & 0.3511      & 0.6146        & 0.0014      \\
Ours               & 63.8044       & 0.3070      & 0.6099        & 0.0025      \\ \bottomrule
\end{tabular}
\end{table}

\subsection{Performance Against Benign Pixel-Level Perturbations}\label{subsection:eval_benign}
We evaluate the robustness of the watermark under benign yet realistic image perturbations, specifically:
\begin{enumerate}
    \item Additive Gaussian noise with standard deviation parameter $\sigma \in \{0.01, 0.05, 0.10\}$,
    \item JPEG compression with quality factors $q \in \{90, 75, 50\}$.
\end{enumerate}
All pixel values are assumed to lie in the normalized range $[0, 1]$. The evaluation protocol is as follows:

\begin{enumerate}
\item \textbf{Generation:} We randomly sample 100 text prompts and generate a corresponding dataset of 100 watermarked images.
\item \textbf{Perturbation:} The various described perturbations are applied to each image 
\item \textbf{Inversion \& Detection:} Conditional on null prompt, we invert each generated image back into the latent space  and its approximate starting point, which is passed to the detection algorithm to detect the watermark. 
\end{enumerate}

We consider two payload settings:
\begin{enumerate}
    \item \textbf{Single-bit watermark:} Evaluating the binary detection accuracy to determine whether an image is successfully flagged as watermarked.
    \item \textbf{Multi-bit watermark:} Embedding a randomly sampled 512-bit payload into the generated images and evaluating whether the message can be successfully extracted post-perturbation.
\end{enumerate}
The results are given in Table \ref{tab:perturbation_robustness}, which confirm that the proposed watermarking scheme remains highly robust against realistic pixel-level perturbations. Notably, the SDXL model yields slightly higher robustness, which is an expected outcome given its significantly larger ($4\times$) latent space. 
\begin{table}[htbp] 
\centering
\caption{Watermark recovery rates under various image perturbation for single-bit and multi-bit payloads across Stable Diffusion 2.1 and SDXL.}
\label{tab:perturbation_robustness}
\resizebox{\columnwidth}{!}{%
\begin{tabular}{llcccc}
\toprule
\multirow{2}{*}{\textbf{Attack}} & \multirow{2}{*}{\textbf{Param.}} & \multicolumn{2}{c}{\textbf{SD2.1}} & \multicolumn{2}{c}{\textbf{SDXL}} \\
\cmidrule(lr){3-4} \cmidrule(lr){5-6}
 & & \textbf{Single} & \textbf{Multi} & \textbf{Single} & \textbf{Multi} \\
\midrule
Baseline & - & 1.00 & 1.00 & 1.00 & 1.00 \\
\midrule
\multirow{3}{*}{JPEG Comp.} 
 & $q=90$ & 1.00 & 1.00 & 1.00 & 1.00 \\
 & $q=75$ & 0.99 & 0.95 & 1.00 & 0.95 \\
 & $q=50$ & 0.93 & 0.91 & 0.94 & 0.93 \\
\midrule
\multirow{3}{*}{Gaussian Noise} 
 & $\sigma=0.01$ & 1.00 & 1.00 & 1.00 & 1.00 \\
 & $\sigma=0.05$ & 1.00 & 0.99 & 1.00 & 1.00 \\
 & $\sigma=0.10$ & 0.96 & 0.92 & 0.98 & 0.94 \\
\bottomrule
\end{tabular}%
}
\end{table}

\subsection{Evaluation of Latent-Based Watermark Removal Attacks} \label{subsection:eval_latent_removal_stealthy}

\subsubsection{Regeneration Attack  \cite{zhao2024invisible}}
To assess robustness against generative adversaries, we employ the regeneration attack. This method targets invisible watermarks by injecting noise into the latent space and reconstructing the image via the denoising process of Latent Diffusion Models (LDMs). We include this attack because its latent-based perturbation serves as an established benchmark in recent works (e.g., \cite{prc}). To rigorously evaluate our proposed scheme, we tested it across varying attack intensities, specifically applying diffusion noise steps of 50, 60 (the attack's default), and 70. Empirically, our watermark successfully survives all configurations for both single and multi-bit setting, demonstrating strong resilience against LDM-driven image reconstruction.

\subsubsection{``Stealthy'' Removal Attack  \cite{ccs}}
We selected this  attack because it is specifically formulated to target the removal of undetectable latent-based watermarks in diffusion models. Following their evaluation protocol, the adversary's perturbation budget is strictly constrained to specific $\ell_2$-norms. To empirically measure the success rate of the removal attack, we perform the following:
\begin{enumerate}
\item \textbf{Generation:} We randomly sample 100 text prompts and generate a corresponding dataset of 100 watermarked images.
\item \textbf{Inversion:} We invert the generated image back to its original starting point.
\item \textbf{Perturbation:} We apply the respective attacks to the inverted latent representation within the bounded $\ell_2$ budget, proceeding as follows:
\begin{enumerate}
    \item For the {white-noise attack}, we inject isotropic Gaussian noise scaled to match the specified $\ell_2$ magnitude.
    \item For the {``stealthy'' attack}, we iteratively flip the signs of the latent components with the smallest magnitudes until the $\ell_2$ perturbation budget is exhausted.
\end{enumerate}
\item \textbf{Detection:} Finally, we pass the starting point through the detection algorithm to detect the watermark. 
\end{enumerate}

Similar to the experimental setup in \cite{ccs}, evaluations are performed on SD2.1. 
As demonstrated in Table \ref{tab:prc_removal_attack}, the targeted ``stealthy'' attack fails to outperform the white-noise adversary. This corroborates our theoretical guarantees in Theorem \ref{thm:rem2-secure}, demonstrating that no PPT adversary operating under the REM2 threat model can achieve a non-negligible advantage over random white-noise injection.

\begin{table}
\centering
\caption{Performance of the ``stealthy'' removal attack \cite{ccs} against an isotropic white-noise baseline on the proposed watermarking scheme under varying $\ell_2$ perturbation budgets. Values represent the attack success rate.}
\label{tab:prc_removal_attack}
\small 
\setlength{\tabcolsep}{4pt} 
\begin{tabular}{@{}lcccc@{}}
\toprule
 & \multicolumn{2}{c}{\textbf{Single-bit}} & \multicolumn{2}{c}{\textbf{Multi-bit}} \\ \cmidrule(lr){2-3} \cmidrule(l){4-5} 
\textbf{Noise} & \textbf{White-} & \textbf{``Stealthy''} & \textbf{White-} & \textbf{``Stealthy''} \\ 
\textbf{Level} & \textbf{Noise}  & \textbf{Attack}       & \textbf{Noise}  & \textbf{Attack}       \\ \midrule
8              & 0.00            & 0.00                  & 0.00            & 0.00                  \\
10             & 0.01            & 0.01                  & 0.01            & 0.01                  \\ 
30             & 0.08            & 0.08                  & 0.17            & 0.19                  \\\bottomrule
\end{tabular}
\end{table}

\section{Discussion}
\subsection{Computational Limitations and Tradeoffs}

\paragraph{Computational Overhead}
Similar to the boundary-hiding defense in \cite{ccs}, our approach introduces computational overhead (see Appendix \ref{appendix_performance_overhead}). Memory requirements increase by approximately 25\% compared to the boundary-hiding baseline due to an additional $(d/2) \times (d/2)$ one-time transformation matrix. Taken over 100 trials, the sample and decode algorithms took  0.1s and 0.2s respectively in SD 2.1, and 1.2 s and 1.9s in SDXL. Currently, our theoretical security guarantees depend on dense orthonormal transformations during the \texttt{Sample} and \texttt{Decode} steps. A possible approach is to explore structured or sparse transformations (e.g \cite{cheon2003sparse}) to reduce computational cost without compromising security. Nonetheless, this overhead is incurred only once per image and remains a fraction of the generation time itself (2.5s and 7.3s for unwatermarked generation in SD2.1 and SDXL, respectively).

\paragraph{Overhead of Secret Transformation}
As noted in Section~\ref{subsection:expt_attack}, recovery requires $d^2$ samples, which is expensive but parallelizable efficiently across multiple GPUs. Thus, motivated and computationally powerful adversary may parallelize queries across multiple black-box detector accounts, and recover the secret transformation for large $d$. Since the secret transformation is long-term, it only needs to be broken once to re-expose the detector's decision boundary, and all subsequent removal attacks can be carried out cheaply. The cost of recovering the secret transformation is therefore amortized across an unlimited number of subsequent attacks, constituting a complete break of the boundary hiding defense.

\subsection{Error-Correcting Codes for Multi-Bit Watermarking}

For our multi-bit watermark implementation, we utilize Low-Density Parity-Check (LDPC) codes, with recommended parameters from prior work. Because our hybrid scheme imposes a global, one-time orthogonal transformation, any targeted perturbation crafted by an adversary in the latent space is uniformly dispersed across all $d$ dimensions during the extraction phase. Consequently, the ideal Error-Correcting Code (ECC) for our framework must be resilient to independent, uniformly distributed noise rather than localized corruptions.
Although a comprehensive comparative analysis of ECCs falls outside the primary scope of this paper, our watermarking framework is entirely code-agnostic. Exploring advanced coding schemes to further maximize payload capacity and robustness remains a promising direction for future work.

\subsection{A Stronger Adaptive Adversary: REM3 Threat Model} \label{subsection:rem3}

A natural follow-up is to consider an even more powerful adversary, with \textit{adaptive} oracle access to the detector $\mathcal{O}$, allowing them to craft queries based on past detector responses. Let us call this the \textit{REM3} threat model. 

Let us reuse the formulation from Lemma \ref{lemma:basis_ind}, denote the hybrid transformation as $[F, D]$, where $F$ and $D$ correspond to the fixed and dynamic basis vectors, respectively. The REM3 adversary could proceed as follows:
\begin{enumerate}
    \item \textbf{Dynamic Basis Recovery:} The adversary executes the baseline attack described in Section \ref{subsection:attack_on_defense} to approximate $D$.
    \item \textbf{Subspace Isolation:} With $D$ known, the adversary constrains subsequent queries strictly to the subspace orthogonal to $D$. This effectively zeroes out the dynamic component of the transformation.
    \item \textbf{Fixed Basis Deduction:} By isolating the fixed transform in this manner, the adversary with sufficient queries is likely able to approximate $F$.
\end{enumerate}

Nonetheless, the information-theoretic bound in Section \ref{subsubsection:no_operation_basic_attack} still applies: learning a dense $d/2 \times d/2$ transformation still requires $\Omega(d^2)$ queries, which although computationally expensive, is a substantial improvement over random guessing. A REM3-secure scheme likely requires eliminating the fixed transformation entirely. This introduces non-trivial complications regarding how the dynamic transformation can be reliably encoded and extracted. We leave this as future work. 

\paragraph{Deployment-Level Mitigations Against Adaptivity}
While we do not claim a cryptographic guarantee against REM3, practical deployment constraints substantially raise the bar for this adversary. The power of an adaptive strategy stems from \emph{sequential dependence}: each query $q_{j+1}$ depends on the response to $q_j$, forming a dependency chain that, unlike a non-adaptive batch, has limited parallelization. Consequently, defenses commonly used in real-world query-access APIs against adversarial perturbations, such as per-account rate limiting \cite{douceur2002sybil}, query latency, and stateful detection \cite{li2022blacklight}, though these approaches target different attack surfaces and provide different types of guarantees. A REM3-secure scheme remains an important direction for future work. 

\section{Related Works}

\subsection{AI-Generated Content Watermarking}

Current approaches to watermarking AIGC can generally be categorized into the following categories \cite{zhao2025sok}: post-processing, in-process, and pre-processing techniques.

\paragraph{Post-processing Watermarks}
These methods apply watermarks directly to the generated media, similar to traditional digital watermarking \cite{zhao2025sok, Rahman_2013}. Recent approaches adopt learning-based frameworks, utilizing trained neural networks to both inject and extract watermarks from content \cite{Jia_2021, dnn_watermark_survey}.
There has also been recent work exploring techniques to embed watermarks into the VAE latent prior to the final upscaling step \cite{lee2026phasemarkposthocoptimizationfreewatermarking, guo2024freqmarkinvisibleimagewatermarking}.

\paragraph{In-process Watermarks}
This category modifies the generative diffusion process itself. Techniques include altering specific model weights \cite{tondi2024robustlargepayloaddnnwatermarking}, injecting watermark signals directly during the denoising phase \cite{gesny2025guidancewatermarkingdiffusionmodels}, or integrating plugins (e.g. LoRA) into the diffusion model \cite{lin2025efficientwatermarkingmethodlatent}. Typically, these approaches rely on a separately trained watermark detector models.

\paragraph{Pre-processing Watermarks}
These methods operate before the core generation begins by manipulating the initial conditions. A prominent approach involves sampling from a specifically watermarked initial distribution (e.g., Tree-Ring \cite{treering}, Gaussian Shading (GS) \cite{gs}, GS++ \cite{gspp}, and PRC \cite{prc}), which is then verified via a detection algorithm to determine if a sample originated from that distribution. 
Closely related are seed-based techniques that manipulate the initial noise vector. For instance, NoisePrints derives the starting seed using a secret key \cite{goren2026noiseprintsdistortionfreewatermarksauthorship}, whereas other proof-of-authorship schemes select the seed by passing authorship related information into a cryptographic hash \cite{lee2026proofofauthorshipdiffusionbasedaigenerated}.

\subsection{Attacks on Watermarking}

In response to large number of work on watermarking, there has been increasing interest in attacking AIGC watermarks \cite{zhao2025sok}. For example, adversaries have successfully leveraged LDMs to erase pre and post-processing watermarks \cite{zhao2024invisible}, and there have been work directly targeting specific watermarking algorithms (e.g. \cite{hu2024stablesignatureunstableremoving}). On the other hand, \cite{ccs} demonstrated that undetectable latent-based watermarks are still susceptible to removal attacks, proving that undetectability does not guarantee unremovability. 

Besides removal, watermarking schemes must withstand ambiguity attacks, where adversaries exploit coincidental statistical artifacts to falsely claim ownership \cite{ambiguity_1, ambiguity_2}.  DNN-based methods mitigate this using weight-embedded "passports" \cite{fan2019extendedversionrethinkingdeep}, which act as a digital signature to authenticate watermarks. In general, the security of latent-based schemes ultimately hinges on minimizing their False Positive Rate (FPR) to prevent attackers from finding valid watermarks through random guessing \cite{ambiguity_fpr}.


\section{Conclusion}

We showed that an adversary with batched detector access can reconstruct the secret transformation underlying existing boundary-hiding defenses. In response, we proposed a hybrid long-term/one-time transformation scheme that is provably secure under this stronger threat model. That is, no PPT adversary removes the watermark with non-negligible advantage over an isotropic white-noise adversary granted a $\sqrt{2}\times$ larger budget, and the scheme remains undetectable. Empirically, it preserves generation quality, withstands benign perturbations (e.g. JPEG compression and Gaussian noise), and is robust against the evaluated latent-based removal attacks.

\section{Ethics Considerations}

This research characterizes vulnerabilities in existing latent-based watermarking schemes and proposes a new watermarking scheme. Because our work involves the disclosure of an attack capable of removing undetectable watermarks, we have conducted a stakeholder-based ethics analysis in accordance with the principles outlined in NDSS's ``call for papers'' and the Menlo  Report.

\subsection{Stakeholders} 
We identify three primary stakeholders impacted by our research and its publication:
\begin{enumerate}
    \item \textbf{AIGC providers:} Watermarking schemes and possible removal attacks are of interest in provenance and misuse tracking. Furthermore, as AIGC provenance is a rapidly developing area, watermarking AIGC may be necessary as part of regulatory compliance (e.g. EU AI Act \cite{euaiact2024}).
    \item \textbf{Model users:} They may rely on such watermarking tools to establish provenance or authorship
    \item \textbf{The general public:} The public may rely on tools that use watermarks (such as C2PA) to establish content provenance. Removal attacks may undermine public confidence in these tools.
\end{enumerate}

\subsection{Impacts}

\paragraph{Ethical Principles} We prioritize Beneficence and Respect for Law and Public Interest. We determine that the security posture of the public and developers is best served by proactively identifying the limitations of boundary-hiding defenses before they are widely adopted and simultaneously providing a provably secure alternative.

\paragraph{Potential Harms} By demonstrating that adversaries with batched detector access can reconstruct secret transformations and recover detection boundaries, we introduce a method that could theoretically  be used to remove watermarks from watermarked content, thereby complicating claims which rely on watermarking (e.g. misuse protection)

\subsection{Mitigations and Decision} 
We concurrently propose and validate a robust countermeasure. By introducing a resilient scheme based on a hybrid of long-term and one-time secret transformations, we prevent the reuse of structure across queries and effectively neutralize the attack we disclose. Since the paper provides a defense against the proposed attack, and aims to further the understanding of this topic, we are of the opinion that the benefits greatly outweigh  potential harms.

\bibliographystyle{IEEEtran}
\bibliography{sample-base}

\appendices


\section{Implementation of $\texttt{SRandOrthonormal(d, r)}$}
\label{appendix_srandorth}

Recall from Section~\ref{subsection:watermark} that the watermark $\texttt{Sample}$ 
and $\texttt{Decode}$ functions assume that given a pseudorandom string 
$r$, the output of $\texttt{SRandOrthonormal(d, r)}$ is 
computationally indistinguishable from a matrix sampled from $\mathrm{Haar}(d)$.

The algorithm of~\cite{mezzadri2007generaterandommatricesclassical} converts a 
$d \times d$ matrix of i.i.d.\ $\mathcal{N}(0,1)$ entries into an exactly 
$\mathrm{Haar}(d)$-distributed sample as follows:

\begin{enumerate}
    \item Sample a $d \times d$ matrix $M$ where each entry is i.i.d.\ 
    $\mathcal{N}(0,1)$.
    \item Perform QR decomposition $M = QR$, where $Q$ is orthogonal and $R$ is 
    upper triangular.
    \item Output $U = Q \cdot D$, where 
    $D = \mathrm{Diag}\!\left(\frac{R_{ii}}{|R_{ii}|}\right)$ and $R_{ii}$ denotes 
    the $i$-th diagonal entry of $R$.
\end{enumerate}

The Box--Muller transform~\cite{box1958note} is a deterministic algorithm that 
converts pairs of independent uniform samples on $(0,1)$ into pairs of independent 
standard Gaussian samples. Let $\mathsf{Gauss}: \{0,1\}^\lambda \to \mathbb{R}^{d^2}$ denote the function that 
interprets $r$ as a sequence of uniform floats on $(0,1)$ and applies the Box--Muller 
transform to return $d^2$ approximately $\mathcal{N}(0,1)$ samples. When $\lambda$ is 
insufficient to supply $d^2$ floats directly, $r$ is first expanded using a secure 
pseudorandom generator before applying the transform.
We implement $\texttt{SRandOrthonormal(d, r)}$ as follows:

\begin{enumerate}
    \item Compute $Z = \mathsf{Gauss}(r)$ and arrange the $d^2$ entries into a 
    $d \times d$ matrix $M$.
    \item Apply the algorithm of~\cite{mezzadri2007generaterandommatricesclassical} 
    to $M$ and output the resulting $U$.
\end{enumerate}

We now argue that this implementation satisfies the required indistinguishability 
property. Since $r$ is pseudorandom, $\mathsf{Gauss}(r)$ is computationally 
indistinguishable from $\mathsf{Gauss}(U)$ where $U \leftarrow \{0,1\}^\lambda$ is 
truly uniform, as $\mathsf{Gauss}$ is a deterministic polynomial-time function. 
Furthermore, provided the floating-point precision is sufficiently large, the 
statistical distance between the discretized output of $\mathsf{Gauss}(U)$ and the 
continuous Ginibre ensemble is negligible. Therefore, $M$ is computationally 
indistinguishable from a truly Ginibre-distributed matrix.

It then follows from the lemma below that the output $U$ is computationally 
indistinguishable from $\mathrm{Haar}(d)$, as required.

While the following result is implicit in standard arguments, we include it for 
completeness as we are not aware of a prior reference that states it explicitly.

\begin{lemma}
\label{lemma:haar}
If $M$ is computationally indistinguishable from the Ginibre ensemble, then the 
output of $\texttt{SRandOrthonormal}(d, r)$ is computationally indistinguishable 
from $\mathrm{Haar}(d)$.
\end{lemma}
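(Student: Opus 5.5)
The plan is a direct reduction: the map from $M$ to $U$ is an efficient deterministic post-processing, and post-processing cannot increase computational distinguishability. Write $\Phi$ for the deterministic procedure of~\cite{mezzadri2007generaterandommatricesclassical}: QR-decompose $M = QR$, then output $U = Q\,\mathrm{Diag}(R_{ii}/|R_{ii}|)$. By construction, $\texttt{SRandOrthonormal}(d,r) = \Phi(M)$ with $M$ built from $\mathsf{Gauss}(r)$.

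The classical result of~\cite{mezzadri2007generaterandommatricesclassical} gives $\Phi(G) \sim \mathrm{Haar}(d)$ exactly when $G$ is Ginibre. So it suffices to show that $\Phi(M)$ and $\Phi(G)$ are computationally indistinguishable.

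\textbf{Step 1 (efficiency and well-definedness of $\Phi$).} Householder QR runs in $O(d^3)$ arithmetic operations, which is polynomial in $d$. The extra normalization by $\mathrm{Diag}(R_{ii}/|R_{ii}|)$ makes the output independent of sign conventions, since the QR factorization with positive diagonal is unique for invertible $M$. $\Phi$ is undefined only when some $R_{ii}=0$, i.e.\ when $M$ is singular. We fix $\Phi$ to output $I_d$ there. This event has probability zero under Ginibre and negligible probability under the discretized $\mathsf{Gauss}(U)$.

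\textbf{Step 2 (reduction).} Suppose a PPT distinguisher $\mathcal{D}$ separates $\Phi(M)$ from $\mathrm{Haar}(d)$ with advantage $\delta$. Define $\mathcal{D}'(X) := \mathcal{D}(\Phi(X))$. On input $M$ it sees $\Phi(M)$, and on a Ginibre input it sees a Haar sample. Hence $\mathcal{D}'$ distinguishes $M$ from Ginibre with the same advantage $\delta$. By hypothesis $\delta$ is negligible, which proves the lemma.

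\textbf{Main obstacle: finite precision.} $\mathcal{D}'$ must operate on discretized matrices, while the Ginibre ensemble and Haar measure are continuous. I would handle this by interpreting ``Ginibre'' as its $p$-bit discretization, as the preceding paragraph already does via negligible statistical distance, and by comparing against $\Phi$ of that discretization.

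One then needs that floating-point QR of a discretized Ginibre matrix is statistically close to a correspondingly rounded Haar matrix. I would argue this from backward stability of Householder QR: the computed $\hat Q$ is within $O(d^{2}\,2^{-p}\kappa(M))$ of the exact factor. I would also use tail bounds on the smallest singular value of a Gaussian matrix, $\Pr[\sigma_{\min}(G)\le t/\sqrt d]\le t$ (Edelman), so that $\kappa(M)$ is polynomial except with small probability.

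Taking the precision $p$ polynomial in $\lambda$ makes both error terms negligible. A triangle inequality then combines them with the computational hop from Step 2.
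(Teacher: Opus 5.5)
Your proposal is correct and is essentially the paper's proof. The paper's reduction $B$ passes the challenge matrix through the QR-with-sign-correction map and hands the result to $D$, which is exactly your $\mathcal{D}'=\mathcal{D}\circ\Phi$, and its only ingredient is the exact statement $\Phi(G)\sim\mathrm{Haar}(d)$ for Ginibre $G$. Your Step~1 and finite-precision analysis go beyond the paper, which treats the QR step as exact; one small fix there is that Edelman's linear tail makes $\kappa(M)$ polynomial only outside an inverse-polynomial event, so take $t=2^{-\lambda}$ and absorb the resulting $2^{\lambda}$ factor by choosing precision $p=\Omega(\lambda+\log d)$.
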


\begin{proof}
The proof follows by a cryptographic reduction. Suppose for contradiction that a PPT 
adversary $D$ could distinguish the output of $\texttt{SRandOrthonormal}(d, r)$ from 
$\mathrm{Haar}(d)$ with non-negligible advantage. We construct an adversary $B$ that 
distinguishes $M$ from a truly Ginibre-distributed matrix as follows. $B$ receives a 
challenge matrix $M'$, applies the algorithm 
of~\cite{mezzadri2007generaterandommatricesclassical}, and passes the result to $D$. 
Because this transformation maps a truly Ginibre-distributed $M'$ exactly to 
$\mathrm{Haar}(d)$ by~\cite{mezzadri2007generaterandommatricesclassical}, $B$'s 
distinguishing advantage is identical to $D$'s, which is non-negligible by 
assumption. This contradicts the assumption that $M$ is computationally 
indistinguishable from a truly random Ginibre sample.
\end{proof}

\section{Deferred Proofs from Section \ref{section:sec_analysis}} \label{appendix_proof}

\subsection{Indistinguishability of Oracles}
Figure \ref{fig:intro_alpha_beta} illustrates the intuition behind Lemma \ref{lemma:O_Ob_indist}. Whenever the threshold gap condition (\ref{eqn:threshold_tau}) holds, any perturbation strong enough to trigger $\alpha = 1$ triggers $\beta = 1$ with all but negligible probability. Hence the real oracle $O$, which rejects whenever $(\alpha = 0) \wedge (\beta = 0)$ fails, is statistically dominated by the value of $\beta$ alone, and is therefore computationally indistinguishable from the simplified oracle $O_3$, which checks only $\beta$. Since $O_3$ carries no dependence on the long-term transformation $A$, this indistinguishability prevents the adversary from extracting any information about $A$ through detector queries.

\begin{figure}
\begin{tikzpicture}[
  box/.style={rectangle,draw,thick,text width=2cm,align=center,font=\small,minimum height=1cm},
  detector/.style={rounded rectangle,draw,thick,text width=2.5cm,align=center,font=\small,minimum height=1cm},
]


\node[box] (alpha) {$\alpha$\\(Long-term)};
\node[box,right=3cm of alpha] (beta) {$\beta$\\(Short-term)};

\node[detector,below=.5cm of alpha] (det_O) {$\mathcal{O}$\\Checks both $\alpha, \beta$};
\node[detector,below=.5cm of beta] (det_O3) {$\mathcal{O}_3$\\Checks only $\beta$};

\draw[->,thick] (alpha) -- (det_O);
\draw[->,thick] (beta) -- (det_O);
\draw[->,thick] (beta) -- (det_O3);

\draw[dashed,thick] (det_O.east) -- (det_O3.west);
\node[align=center,font=\footnotesize,below=0.1cm of $(det_O.east)!0.5!(det_O3.west)$] {Computationally\\Indistinguishable};

\end{tikzpicture}

\caption{The hybrid watermark masks components with long-term ($\alpha$) and one-time per-sample ($\beta$) 
transformations. The watermark detector $\mathcal{O}$ (which checks both long-term $\alpha$ and short-term $\beta$) is computationally indistinguishable 
from $\mathcal{O}_3$ (only $\beta$), preventing adversaries from learning the fixed long-term 
transformation. Consequently, no polynomial-time adversary achieves non-negligible advantage 
in removing watermarks from fresh samples.}\label{fig:intro_alpha_beta}
\end{figure}

\subsection{``Well-Behaved'' Detector} \label{subsection:well_behaved}
Let $C$ be a valid codeword (Algorithm \ref{alg:sample}, Line \ref{line:codeword}) and let $d_H$ denote the Hamming distance. A detector $\mathcal{O}$ satisfies the ``well-behaved'' property if, for all $\hat{C}$:
\[
d_H(C, \hat{C}) \leq \tau \implies \mathcal{O}(C) = \mathcal{O}(\hat{C}).
\]

\paragraph{Necessity of the ``Well-Behaved'' Property}
To illustrate the necessity of this property, let $\mathcal{O}$ be a well-behaved detector, and consider an alternative detector $\mathcal{O}'$ defined as:
\[
\mathcal{O}'(\hat{C}) = 
\begin{cases} 
    0, & \text{if a designated bit pattern } \Gamma \text{ is present in } \hat{C}, \\
    \mathcal{O}(\hat{C}), & \text{otherwise,}
\end{cases}
\]
for some pattern $\Gamma$ where $|\Gamma| \leq \tau$. 
An adversary can ``backdoor'' $\mathcal{O}'$ by embedding $\Gamma$ into $C$ using at most $|\Gamma|$ bit flips. This yields a modified codeword $\hat{C}$ satisfying $d_H(C, \hat{C}) \leq \tau$, yet $\mathcal{O}'(C) \neq \mathcal{O}'(\hat{C})$. The ``well-behaved'' property was utilized in the security analysis of \cite{ccs}, and we likewise assume that our detector $\mathcal{O}$ exhibits this property.

\subsection{Deferred Proofs }

Let $U_{d/2}$ denote a uniformly random vector on the unit sphere
$\mathbb S^{d/2-1}$, and $Orth(m)$ be the orthonormal group in $\mathbb{R}^m$. For adversary $\mathcal A$, define
\[
    Z_{\mathcal A}
    :=
    \big(
        Y_w, Y_{nw},\,
        q_1,\ldots,q_s,\,
        \mathcal O(q_1),\ldots,\mathcal O(q_s),\,
        Y
    \big),
\]
as the view of $\mathcal{A}$ at the challange phase (step 4) of the REM2 game, where the variables 
$Y_w, Y_{nw},\,
        q_1,\ldots,q_s,\,
        \mathcal O(q_1),\ldots,\mathcal O(q_s),\,
        Y$
are as defined in the game.

\begin{lemma} \label{lemma:alpha_non_negl_beta_1}
Suppose  $\tau_1 - \tau_2$ satisfies (\ref{eqn:threshold_tau}) and let $Y \sim \mathcal{D}_k$. For all $\eta \in \mathcal{R}^d$, if there exists a polynomial $p(\cdot)$ such that $\Pr[\alpha_j = 1 \mid q_j = Y + \eta_j] \geq 1/p(d)$ for sufficiently large $d$, it implies that $\Pr[\beta = 1 \mid q_j = Y + \eta_j] = 1 - negl(d)$, where ${negl}(\cdot)$ is a negligible function.
\end{lemma}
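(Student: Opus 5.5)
The plan is to show that the flip fractions on the two halves concentrate around a common mean that depends only on $\eta$, so that the two halves behave essentially identically. The threshold gap (\ref{eqn:threshold_tau}) then separates them.

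\textbf{Step 1: reduce to the perturbation seen in $\Lambda$-coordinates.} Write $Y = X'\Lambda$. The decoder's first-half coordinates are $\zeta_{1:d/2} = X'_{1:d/2} + (\eta\Lambda^\top)_{1:d/2}$. Assuming $\hat r = r$ on the relevant event, the second-half coordinates are $\hat X_{d/2+1:d} = X'_{d/2+1:d} + (\eta\Lambda^\top)_{d/2+1:d}$. So coordinate $i$ flips iff $\mathrm{sign}(X'_i + w_i) \neq \mathrm{sign}(X'_i)$, where $w := \eta\Lambda^\top$.

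Here $\eta$ is fixed and $\Lambda = MA$ with $A$ Haar in the idealized setting. Since $M$ is orthogonal and independent of $A$, $MA$ is Haar as well, so $w$ is uniform on the sphere of radius $\|\eta\|_2$. The $|X'_i|$ are i.i.d.\ half-normal and independent of $\Lambda$, because the signs are uniform random bits in the idealized game. Hence the flip indicators $Z_i = 1[\,|X'_i| < |w_i|,\ \text{sign opposite}\,]$ are exchangeable across all $d$ coordinates, with a common mean $\mu(\|\eta\|)$.

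\textbf{Step 2: concentration.} Represent $w = \|\eta\| G/\|G\|$ with $G \sim \mathcal N(0,I_d)$. Conditioned on $\|G\|$, which concentrates within $1\pm O(\sqrt{\ln d/d})$, the $Z_i$ are independent. Hoeffding then gives that each half-fraction $p_1 = d_H(\widetilde C_1, C_1)/(d/2)$ and $p_2$ lies within $\delta = \sqrt{\ln d\,\ln\ln d/(4d)}$ of $\mu$, except with probability $\exp(-\Omega(\ln d\ln\ln d)) = d^{-\Omega(\ln\ln d)}$, which is negligible in $d$. Thus $|p_1 - p_2| < 2\delta \le \tau_1 - \tau_2$ with probability $1-\mathrm{negl}(d)$.

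\textbf{Step 3: conclude.} Let $E$ be the event $|p_1 - p_2| < \tau_1 - \tau_2$. On $E$, $\alpha = 1$ means $p_1 > \tau_1$, hence $p_2 > \tau_2$, hence $\beta = 1$. If $\Pr[\beta=1] \le 1-\varepsilon$ with $\varepsilon$ non-negligible, we argue from the concentration directly rather than through the event inclusion. If $\mu \ge \tau_2 + \delta$, then $\Pr[\beta=1] = 1 - \mathrm{negl}$ and we are done. Otherwise $\mu < \tau_2 + \delta < \tau_1 - \delta$, so $\Pr[\alpha=1] \le \Pr[p_1 > \mu+\delta] = \mathrm{negl}(d)$, which contradicts $\Pr[\alpha=1] \ge 1/p(d)$. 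This dichotomy on $\mu$ is the cleanest form of the argument.

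\textbf{Main obstacle.} The difficulty lies in Step 1's handling of the case where $C_1$ decodes incorrectly, so that $\hat r \ne r$ and $\hat B$ is wrong. In that case $\hat X_{d/2+1:d}$ is essentially independent of $C_2$, and we expect $p_2 \approx 1/2 > \tau_2$, i.e.\ $\beta=1$ anyway. I would treat this as a separate case: a wrong $\hat B$ yields near-uniform signs, so $\beta = 1$ with overwhelming probability by the same Hoeffding bound.

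A second subtlety is that in the actual game the adversary's $\eta$ may depend on the view. I would handle this by conditioning on the view, under which $\Lambda$ for the fresh challenge sample retains enough rotation invariance. For this lemma, though, $\eta$ is fixed per the statement, so that issue does not arise here.
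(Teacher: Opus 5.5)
Your skeleton is the paper's. Haar invariance of $\Lambda$ makes the flip indicators exchangeable, so the two halves share a mean $\mu$; Hoeffding concentrates each half; and the gap (\ref{eqn:threshold_tau}) transfers $\alpha=1$ to $\beta=1$. Your dichotomy on $\mu$ is the contrapositive of the paper's chain: from $\IP[\alpha=1]\ge 1/p(d)$ the paper deduces $\tau_1 d/2\le \mu+O(\sqrt{d\ln d})$, so $\tau_2 d/2$ lies $\Omega(\sqrt{d\ln d\ln\ln d})$ below $\mu$, and then applies Hoeffding to $N_2$. Your worry about $\hat r\neq r$ does not arise in the paper's formalization. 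There (\ref{eqn:O_behavior}) is taken as the oracle model, and $\beta$ is counted through the flips $N_2$ of the true second-half coordinates.

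The step that fails as written is Step 2. Conditioning on $\|G\|$ decouples nothing: $G/\|G\|$ is independent of $\|G\|$ and uniform on the sphere. So given $\|G\|$, the vector $w$ has exactly its original law, and the $Z_i$ are exactly as dependent as before. Conditioning on $w$ itself does make the $Z_i$ independent, since the $X'_i$ are i.i.d.\ and independent of $w$. But then Hoeffding is centered at the random conditional mean $\sum_i \IP[Z_i=1\mid w_i]$, not at $\mu$. The paper's proof makes exactly this move (Hoeffding conditional on $U$, centered at the unconditional $\mu$) without justification, so it does not supply the missing step either. Separately, your window $\|G\|/\sqrt d\in 1\pm O(\sqrt{\ln d/d})$ fails with probability $d^{-O(1)}$, which is not negligible.

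The repair keeps your representation and adds monotonicity. Write $w_i=cG_i$ with $c=\|\eta\|/\|G\|$. For a \emph{deterministic} $c$, the indicators $Z_i(c)=1[\mathrm{sign}(G_i)\neq\mathrm{sign}(X'_i),\ c|G_i|>|X'_i|]$ are i.i.d.\ with mean $\mu(c)=\arctan(c)/\pi$, and they are nondecreasing in $c$. Take $\gamma=\sqrt{\ln d\sqrt{\ln\ln d}/d}$, $c_0=\|\eta\|/\sqrt d$ and $c_\pm=c_0/(1\mp\gamma)$. Then:
\begin{itemize}
\item $c\in[c_-,c_+]$ except with probability $d^{-\Omega(\sqrt{\ln\ln d})}$, which is negligible;
\item on that event, $Z_i(c_-)\le Z_i\le Z_i(c_+)$ for all $i$, and Hoeffding applies to both deterministic-scale sequences;
\item $\mu(c_+)-\mu(c_-)\le \frac{1}{2\pi}\ln\frac{1+\gamma}{1-\gamma}=O(\gamma)$ uniformly in $\eta$, because $c/(1+c^2)\le 1/2$.
\end{itemize}
Since $\gamma=o(\sqrt{\ln d\ln\ln d/d})$, shrink your $\delta$ to, e.g., $\frac14\sqrt{\ln d\ln\ln d/d}$. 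Every failure probability stays negligible, and your Step 3 dichotomy goes through run with $\bar\mu=\mu(c_-)$.
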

\begin{proof}
    Let $Y = XA + \eta$, where $X \sim \mathcal{N}(0, I_d)$, $A \sim Haar(d)$. For convenience, let $\alpha := \alpha_j$, $\beta := \beta_j$ and $\IP[\cdot | q_j = Y + \eta_j] = \IP[\cdot | q_j]$. Then, the detector $\mathcal{O}(X + \eta A^T)$ returns True if and only if $\alpha = 1$ or $\beta = 1$. Let $I_i = 1(\text{Sign}(X_i) \neq \text{Sign}(X_i + u_i))$, where $U:=(u_1, \dots, u_d) = \eta A^T$. Thus, the random variables 
    \[
    N_1 = \sum_{i=1}^{d/2} I_i \text{ and } N_2 = \sum_{i=d/2 + 1}^d I_i
    \]
    denote the number of sign flips in the first and second half of $X$ due to $\eta$, respectively. Let $I = \{I_1, \dots, I_d\}$. 

    First, we prove that $N_1$ and $N_2$ share the same distribution. Since $U$ is rotationally invariant (i.e., for any orthonormal matrix $R$, since $A$ and $AR$ share the same distribution, $U$ and $UR$ share the same distribution), it follows that $U$ is exchangeable. That is, for any permutation $\pi : \{1, \dots, d\} \rightarrow \{1, \dots, d\}$, both
    \[
    (u_1, \dots, u_d) \text{ and } (u_{\pi(1)}, \dots, u_{\pi(d)})
    \]
    share the same distribution. Therefore, for any partitions $I'$, $I''$ of $I$ where $|I'| = |I''| =\frac{d}{2}$, it follows that
    \[
    \sum_{i \in I'} I_i \text{ and } \sum_{i \in I''} I_i
    \] 
    share the same distribution. It thus follows that $N_1$ and $N_2$ follow the same distribution.

    With this established, we are ready to prove that 
    \begin{equation} \label{eqn:target_lemma_imply}
        \left(\IP[\alpha = 1|q_j] \geq \frac{1}{p(d)}\right) \Rightarrow \left( \IP[\beta=1|q_j] = 1-negl(d)\right)
    \end{equation}
    Since $N_1$ and $N_2$ share the same distribution, they have the same expectation $\mu := \IE[N_1 |q_j] = \IE[N_2 |q_j]$.
    
    To bound the concentration of $N_1$, we condition on the rotation $U$. Conditioned on $U$, the coordinates $X_i$ remain independent, and hence $I_1, \dots, I_{d/2}$ are independent 0-1 random variables. By Hoeffding's inequality, we have
    \begin{equation} \label{eqn:hoeffding_1}
        \begin{split}
            \IP[\alpha=1|q_j] &= \IP[N_1 > q | \eta] \\ 
            &= \IP[N_1 > \mu + t |q_j] \text { \quad where $t:= q-\mu$} \\
            &\leq \exp \left(- \frac{2t^2}{d/2} \right) 
        \end{split},
    \end{equation}
    where the first line is due to the ``well-behaved'' property (as described in Section \ref{subsection:well_behaved}). 
    Therefore, for $\IP[\alpha = 1|q_j] \geq \frac{1}{p(d)}$ to hold, the following must hold:
    \[
    \frac{1}{p(d)} \leq \exp \left(- \frac{4t^2}{d} \right).
    \]
     This implies that $t \leq \sqrt{\frac{d}{4} \ln p(d)}$. Since $\ln p(d) \in \mathcal{O}(\ln d)$, we have $q = \mu + t \leq \mu + \mathcal{O}(\sqrt{d \ln d})$.

    Next, recall that $q = \tau_1 \frac{d}{2}$ and $w = \tau_2 \frac{d}{2}$. Since  $\tau_1 - \tau_2 \geq \sqrt{\frac{\ln d \ln \ln d}{d}}$, it follows that $q - w \ge \frac{1}{2}\sqrt{d \ln d \ln \ln d}$. Therefore:
    \[
    \begin{split}
    w &\leq q - \frac{1}{2}\sqrt{d \ln d \ln \ln d}\\
    &\leq \mu + \mathcal{O}(\sqrt{d \ln d}) - \frac{1}{2}\sqrt{d\ln d \ln \ln d} \\
    &= \mu - \sqrt{d \ln d \ln \ln d},
    \end{split}
    \]
    for sufficiently large $d$. 
    By Hoeffding's inequality (similar to (\ref{eqn:hoeffding_1})), we have
    \[
    \begin{split}
        & \IP\left[N_2 \leq \mu - \sqrt{d \ln d \ln \ln d}|q_j\right]\\
        &= \IP[-N_2  \geq - \mu+  \sqrt{d \ln d \ln \ln d} | q_j]\\
        & \leq \exp \left( -\frac{2 \left(\sqrt{d \ln d \ln \ln d}\right)^2}{d/2} \right)   \\
        &= \exp \left( -4(\ln d \ln \ln d) \right) \\
        &= \frac{1}{d^{4\ln \ln d}}
    \end{split}
    \]
    which is negligible in $d$. Thus, 
    \[
      \IP[\beta=1|q_j]= \IP\left[N_2 > \mu - \sqrt{d \ln d \ln \ln d} |q_j\right] = 1 - negl(d),
    \]
    and the claim in (\ref{eqn:target_lemma_imply}) follows. 
\end{proof}

\begin{lemma*} [\emph{Lemma \ref{lemma:O_Ob_indist}}]
    Suppose the threshold gap $\tau_{1} - \tau_{2}$ satisfies (\ref{eqn:threshold_tau}). For a given $Y \sim \mathcal{D}_k$ and any adversarial query $q = Y + \eta$, the output distribution of the real oracle $\mathcal{O}(q) = 1((\alpha = 0) \wedge (\beta = 0))$ is statistically indistinguishable from an oracle $\mathcal{O}_3 = 1(\beta = 0)$.
\end{lemma*}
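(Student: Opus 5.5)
The plan is to show that the two oracles $\mathcal{O}(q) = 1((\alpha=0)\wedge(\beta=0))$ and $\mathcal{O}_3(q) = 1(\beta=0)$ disagree only with negligible probability, for every query $q = Y+\eta$. Bounding the statistical distance of their outputs by this disagreement probability then gives the claim.

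The two oracles can differ only on the event $\{\alpha = 1, \beta = 0\}$. When $\beta = 1$ both output $0$, and when $\alpha=0,\beta=0$ both output $1$. So
\[
\Delta\big(\mathcal{O}(q), \mathcal{O}_3(q)\big) \le \IP[\mathcal{O}(q)\neq\mathcal{O}_3(q)] = \IP[\alpha=1 \wedge \beta=0 \mid q],
\]
and it suffices to show $\IP[\alpha = 1 \wedge \beta = 0\mid q] \le \mathrm{negl}(d)$ for every fixed $\eta$.

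I would split on the size of $\IP[\alpha=1\mid q]$, using Lemma~\ref{lemma:alpha_non_negl_beta_1}. Fix any polynomial $p$.
\begin{itemize}
\item[\emph{(i)}] If $\IP[\alpha=1\mid q] < 1/p(d)$, then $\IP[\alpha=1\wedge\beta=0\mid q] \le \IP[\alpha=1\mid q] < 1/p(d)$.
\item[\emph{(ii)}] If $\IP[\alpha=1\mid q]\ge 1/p(d)$, Lemma~\ref{lemma:alpha_non_negl_beta_1} gives $\IP[\beta=1\mid q] = 1-\mathrm{negl}(d)$. Hence $\IP[\alpha=1\wedge\beta=0\mid q]\le \IP[\beta=0\mid q]=\mathrm{negl}(d)$.
\end{itemize}
This dichotomy, taken alone, only bounds the disagreement by $\max(1/p(d), \mathrm{negl}(d))$ for each $p$. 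That is weaker than negligibility as usually quantified.

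The main obstacle is this quantifier issue, and I would avoid the case split by going straight to the Hoeffding estimates inside the proof of Lemma~\ref{lemma:alpha_non_negl_beta_1}. Let $\mu$ be the common mean of $N_1$ and $N_2$, which is equal by the exchangeability argument given there. Put $\delta = \tfrac12\sqrt{d\ln d\ln\ln d}$, which is at most half the gap $q-w$ guaranteed by (\ref{eqn:threshold_tau}).

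If $q-\mu \ge \delta$, Hoeffding gives
\[
\IP[\alpha=1\mid q]\le \exp(-4\delta^2/d) = d^{-\ln\ln d},
\]
which is negligible.

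Otherwise $w \le q-2\delta < \mu-\delta$, and Hoeffding applied to $N_2$ gives
\[
\IP[\beta=0\mid q] = \IP[N_2\le w\mid q] \le d^{-\ln\ln d},
\]
also negligible.

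In both cases $\IP[\alpha=1\wedge\beta=0\mid q]\le d^{-\ln\ln d}$, uniformly in $\eta$. Both Hoeffding steps are applied conditionally on $U=\eta A^\top$, as in the proof of Lemma~\ref{lemma:alpha_non_negl_beta_1}, and then averaged. The remaining care point is making sure $\mu$ is handled consistently across that conditioning.

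Finally, the bound holds for each query separately, and a polynomial number $s$ of queries is answered. A union bound over all queries then shows that the joint transcripts under $\mathcal{O}$ and $\mathcal{O}_3$ differ with probability at most $s\cdot d^{-\ln\ln d}$, which is still negligible.
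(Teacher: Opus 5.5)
Your proof is correct in substance but takes a different route from the paper's. The paper runs three hybrids $\mathcal{O}\to\mathcal{O}_2\to\mathcal{O}_3$, where $\mathcal{O}_2$ rejects outright once $\|\eta\|_2$ exceeds a critical norm $\eta^*$: the smallest norm at which $\Pr[\alpha=1]\ge 1/p(d)$ for some polynomial $p$. Above $\eta^*$ it invokes Lemma~\ref{lemma:alpha_non_negl_beta_1} as a black box. Below $\eta^*$ it asserts that $\Pr[\alpha=1]$ is negligible ``by definition''. It then adds the two distances.

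You skip the intermediate oracle, observe that the only disagreement event is $\{\alpha=1,\beta=0\}$, and dispose of it with a single fixed-scale dichotomy on $q-\mu$, reopening the Hoeffding estimates inside Lemma~\ref{lemma:alpha_non_negl_beta_1}. The paper's version is more modular. Yours yields an explicit bound of the form $d^{-c\ln\ln d}$, uniform in $\eta$, which is exactly what your final union bound over queries needs. It also avoids $\eta^*$, whose definition does not fix a threshold at any given $d$, since any positive probability exceeds $1/p(d)$ for a suitable $p$. That is where the quantifier problem really lives.

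Your own split (i)/(ii) is less problematic than you suggest. The proof of Lemma~\ref{lemma:alpha_non_negl_beta_1} gives a single bound $\nu(d)$ valid for all $\eta$ once $d\ge d_0(p)$. So the disagreement is at most $\max(1/p(d),\nu(d))\le 1/p(d)$ for all large $d$, for every $p$, which is precisely negligibility. The direct route just makes that uniformity visible.

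Two repairs are needed.

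\emph{The factor of $2$.} With count thresholds $q=\tau_1 d/2$ and $w=\tau_2 d/2$, condition (\ref{eqn:threshold_tau}) gives $q-w>\tfrac12\sqrt{d\ln d\ln\ln d}$. That is your $\delta$ itself, not $2\delta$, so $w\le q-2\delta$ does not follow. Taking $\delta=\tfrac14\sqrt{d\ln d\ln\ln d}$ fixes this, and both tails become $d^{-\frac14\ln\ln d}$, which is still negligible.

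\emph{The centring of $\mu$.} The ``care point'' you flag is a genuine step. Conditional on $U$, Hoeffding centres $N_1$ and $N_2$ at $\mathbb{E}[N_1\mid U]$ and $\mathbb{E}[N_2\mid U]$. These agree only in distribution, not pointwise, so using one common $\mu$ is not justified as written. Lemma~\ref{lemma:alpha_non_negl_beta_1} glosses over the same point. Here is one clean fix, treating $\eta$ as independent of $X$ as the paper does.
\begin{enumerate}
\item Write $U=\|\eta\|_2\,G/\|G\|_2$ with $G\sim\mathcal{N}(0,I_d)$ independent of $X$.
\item Each coordinate's flip event is monotone in $|u_i|$ at fixed sign. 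Restrict to the event $\|G\|_2\in\sqrt d\,[1-\epsilon,1+\epsilon]$, with $\epsilon$ a small constant times $\sqrt{\ln d\ln\ln d/d}$; this event fails with probability $d^{-\Omega(\ln\ln d)}$.
\item On that event, both $N_1$ and $N_2$ are sandwiched between sums of $d/2$ i.i.d.\ Bernoulli variables. The means of the lower and upper sums differ by at most a small constant times $\sqrt{d\ln d\ln\ln d}$.
\item Apply Hoeffding to these sandwiching sums and run your dichotomy around their common means.
\end{enumerate}
This only shrinks the constant in the exponent; the conclusion is unchanged.
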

\begin{proof}

We proceed via a hybrid argument. Recall  from (\ref{eqn:O_behavior})
that for a given $Y \sim \mathcal{D}_k$ and query $q  = Y + \eta$
\[
\mathcal{O}(q) = 1 \iff ((\alpha = 0) \wedge (\beta = 0)).
\]
Let $\eta^*$ be the smallest value of $||\eta||_2$ s.t. there exists a polynomial $p(\cdot)$ where $\Pr[\alpha = 1 \mid \|\eta\| \ge \eta^*] \geq \frac{1}{p(d)}$. 
Consider the following 3 hybrid experiments. 
\begin{enumerate}
    \item $H_1$ (the real system): In this hybrid, the oracle behaves exactly as described above.
    \item $H_2$ (norm bounded perturbation oracle): In this hybrid, the oracle deterministically returns $0$ if $||\eta|| \geq \eta^*$. Else, the oracle behaves as described in (\ref{eqn:O_behavior}). Denote this oracle as $\mathcal{O}_2$
    \item $H_3$ (oracle which ignores $\alpha$). In this hybrid, the oracle returns $1$ if and only if $\beta = 0$, regardless of the value of $\alpha$. Denote this oracle as $\mathcal{O}_3$
\end{enumerate}
Because the adversary's view in each hybrid experiment is entirely determined by the sequence of oracle responses, it suffices to show that the statistical distance between the output distributions of $\mathcal{O}$ and $\mathcal{O}_2$, as well as between $\mathcal{O}_2$ and $\mathcal{O}_3$, is negligible in $d$.

For any query $q_i = X + \eta$, the oracles $\mathcal{O}$ and $\mathcal{O}_2$ differ strictly when $\|\eta\| \ge \eta^*$. By definition of $\eta^*$, $\Pr[\alpha = 1 \mid (\|\eta\| \ge \eta^*)] \ge 1/p(d)$. By Lemma \ref{lemma:alpha_non_negl_beta_1}, it follows that $\Pr[\beta = 0 \mid( \|\eta\| \ge \eta^*)] = negl_1(d)$. Since $\mathcal{O}_2(q_i)$ deterministically outputs $0$ in this regime and the real oracle $\mathcal{O}(q_i)$ outputs $1$ if and only if both components succeed, the probability of divergence is bounded by:
\begin{equation} \label{eqn:oracle_1_2}
\begin{split}
\Pr[\mathcal{O}(q_i)  \neq \mathcal{O}_2(q_i)]  &\le \Pr[(\alpha = 0) \land (\beta = 0) \mid (\|\eta\| \ge \eta^*)] \\
& \le \Pr[\beta = 0 \mid (\|\eta\| \ge \eta^*)]\\
&= negl_1(d) 
\end{split}
\end{equation}

Next, consider the divergence between $\mathcal{O}_2$ and $\mathcal{O}_3$. We analyze this across the two regimes of the perturbation magnitude $||\eta||_2$:

\paragraph{For $||\eta||_2 \ge \eta^*$} The oracle $\mathcal{O}_2(q_i)$ deterministically outputs $0$, while $\mathcal{O}_3(q_i)$ outputs $1$ if and only if $\beta = 0$. By Lemma \ref{lemma:alpha_non_negl_beta_1}, $\IP[\beta = 0 \mid ||\eta||_2 \ge \eta^*] = negl_1(d)$.

\paragraph{For $||\eta||_2 < \eta^*$} The oracle $\mathcal{O}_2(q_i)$ outputs $1$ if and only if $(\alpha = 0) \wedge (\beta = 0)$, whereas $\mathcal{O}_3(q_i)$ outputs $1$ if $\beta = 0$. The oracles differ strictly when the static component fails but the dynamic component does not. By the definition of $\eta^*$, we have $\IP[\alpha = 1 \mid ||\eta||_2 < \eta^*] = negl_2(d)$.

Therefore, the probability that the two oracles disagree on any query $q_i$ is bounded by the union of these negligible events:
\begin{equation} \label{eqn:oracle_2_3}
\begin{split}
    &\IP[\mathcal{O}_2(q_i) \neq \mathcal{O}_3(q_i)] \\
    &\le \IP[\beta = 0 \mid ||\eta||_2 \ge \eta^*] + \IP[\alpha = 1 \mid ||\eta||_2 < \eta^*] \le negl(d) 
\end{split}
\end{equation}
Since the probability of divergence for each query is negligible, the statistical distance between the outputs of $\mathcal{O}_2$ and $\mathcal{O}_3$ over a polynomial number of queries remains negligible in $d$.

To conclude the proof, let us denote $\Delta(\cdot, \cdot)$ as the total variation distance over the output distribution of the input oracles. From (\ref{eqn:oracle_1_2}), we have
\[
\Delta(\mathcal{O}, \mathcal{O}_2) \leq negl(d)
\]
and similarly from  (\ref{eqn:oracle_2_3})
\[
\Delta(\mathcal{O}_2, \mathcal{O}_3) \leq negl(d),
\]
for some negligible function negl. Therefore, 
\[
\Delta(\mathcal{O}, \mathcal{O}_3) \leq 2 \cdot negl(d)
\]
and the claim follows. 

\end{proof} 

\begin{lemma*} [\emph{Lemma \ref{lemma:basis_ind}}]
   Let $\Lambda := \Lambda_i$ be as defined in (\ref{eqn:mask}), and let $\Lambda^\top = [F, D]$, where
   
   $$ F = \begin{pmatrix} A_1^\top \\ A_2^\top \end{pmatrix} \quad \text{and} \quad D = \begin{pmatrix} A_3^\top B_i^\top \\ A_4^\top B_i^\top \end{pmatrix}. $$
   
   Here, $F$ and $D$ denote the fixed secret transformation and the one-time masked transformation, respectively. 
Let $\mathcal{F}$ be the set of all bases that are orthogonal to $D$. For any fixed rotation matrix $R \in Haar(d/2)$ and any measurable subset $S \subseteq \mathcal{F}$, the conditional probability over the randomized choice of the secret basis yields:
   \[
   \resizebox{\linewidth}{!}{%
  $ \Big| \IP[F \in S \mid \mathbf{Y}, \mathcal{O}(q_1), \dots] - \IP[F \in S_R \mid \mathbf{Y}, \mathcal{O}(q_1), \dots] \Big| \le \mathrm{negl}(d), $
   }
   \]
   where the set $S_R = \{ s R : s \in S\}$. 
   That is, the secret fixed transformation $F$ is only recoverable up to a random rotation.
\end{lemma*}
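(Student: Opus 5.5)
The plan is a symmetry (invariance) argument. We exhibit a measure-preserving change of secret keys that maps $F$ to $FR$ and leaves the adversary's view unchanged, up to the negligible error already controlled by Lemma~\ref{lemma:O_Ob_indist}. We work in the idealized setting, so $A\sim \mathrm{Haar}(d)$ and each $B_i\sim\mathrm{Haar}(d/2)$ independently, and the codewords $C_1, C_2$ are truly uniform bits.

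\textbf{Step 1: replace the oracle.} By Lemma~\ref{lemma:O_Ob_indist} and a union bound over the polynomially many queries, we swap every response $\mathcal{O}(q_j)$ for $\mathcal{O}_3(q_j)=1(\beta_j=0)$. This changes the joint law of the view $Z_{\mathcal A}$ (together with $F$) by at most $s\cdot\mathrm{negl}(d)$ in total variation. Hence both conditional probabilities in the statement move by at most a negligible amount. From here on it suffices to prove exact equality of the two probabilities when the view uses $\mathcal{O}_3$.

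\textbf{Step 2: the invariance map.} Write $\Lambda_i^\top=[F,D_i]$. The block $F$ acts only on the first $d/2$ coordinates $X_i^{(1)}$, and the view depends on $F$ only through the products $X_i^{(1)}F^\top$. With $\mathcal{O}_3$ in place, the oracle answers depend only on $\beta_j$, which is a function of the second half $X^{(2)}$, the perturbation projected by $D$, and $C_2$. The plan is to show that the joint law of $(F,\{X_i^{(1)}F^\top\},\{D_i\},\{X_i^{(2)}\},C_2)$ is invariant under $F\mapsto FR$, $X_i^{(1)}\mapsto X_i^{(1)}R$.

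This invariance has two parts. First, left-invariance of Haar measure on $A$ means that right-multiplying the first block by $\mathrm{diag}(R,I)$ preserves the law of $A$. It keeps $D_i$ fixed and sends $F$ to $FR$. Second, the $\mathcal{O}_3$ answers are untouched by the map, because they never involve $F$ or $X^{(1)}$. With these in hand, conditioning on the view gives $\Pr[F\in S\mid\cdot]=\Pr[FR\in S_R\mid\cdot]$, which is what the statement requires.

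\textbf{Main obstacle.} The hard step is making the substitution $X^{(1)}\mapsto X^{(1)}R$ legitimate. The vector $X^{(1)}$ is not rotation-invariant: its signs encode $C_1$, so it is a sign-embedded Gaussian rather than an isotropic one. A fixed $R$ therefore does not preserve its law. My first attempt is to use that, in the idealized model, $C_1$ is uniform and independent of everything else, including $C_2$, which is derived through an ideal random function. Given a uniform $C_1$, $X^{(1)}$ is exactly $\mathcal N(0,I_{d/2})$, and that distribution is rotation invariant.

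The difficulty is that $C_1$ is re-read by the real oracle, so this independence only holds after Step~1 has removed the $\alpha$-dependence. That is exactly why Step~1 must come first. If conditioning on the samples still breaks the argument, the fallback is to use the stated quantifier: $R$ is itself drawn from $\mathrm{Haar}(d/2)$, which can be absorbed by averaging. The final bound then combines the $s\cdot \mathrm{negl}(d)$ error from Step~1 with zero error from Step~2.
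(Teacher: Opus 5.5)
Your proposal is correct and follows essentially the paper's route. You replace $\mathcal{O}$ by $\mathcal{O}_3$ via Lemma~\ref{lemma:O_Ob_indist}, then use Haar invariance of $A$ so that $F$ and $FR$ are equally likely given $D_i$, with the idealized samples carrying no orientation information. Your explicit measure-preserving map $(A,X^{(1)})\mapsto(\mathrm{diag}(R^\top,I)A,\,X^{(1)}R)$ fixes every $X_i^{(1)}F^\top$ and hence the entire $\mathcal{O}_3$-view, so it is a cleaner version of the paper's informal ``$\mathbf{Y}$ is isotropic regardless of $F$'' step, and the averaging fallback over $R$ is not needed.

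One precision point, which the paper shares: a negligible total-variation change in the joint law of $(F, Z_{\mathcal A})$ keeps the two conditional probabilities close only outside a negligible-probability set of views, not for every view. That is the form in which Lemma~\ref{lemma:most_powerful} later uses this result.
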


\begin{proof}
    It suffices to show that $F$ is conditionally independent of both the watermarked samples $\mathbf{Y}$ and oracle responses $\mathcal{O}(q_1), \dots$. 

    By Lemma \ref{lemma:O_Ob_indist}, the outputs of the actual oracle $\mathcal{O}$ are statistically indistinguishable from an oracle $\mathcal{O}_3$ that evaluates solely on the dynamic component $\beta$. For any query $q = Y_i + \eta$, the perturbation's projection onto the effective transformation is:
    $$ \eta \Lambda^\top = (\eta F \quad \eta D_i) $$
    Because $\mathcal{O}_3$ evaluates strictly on $\beta$, the oracle responses depend exclusively on the projection $\eta D_i$. 

    Since the fixed transformation $A \sim \text{Haar}(d)$, it follows that $F$ consists of $d/2$ random $d$-dimensional components which are mutually orthonormal. Thus, for any rotation matrix $R \sim {Haar}(d/2)$ and any matrix $f \in \mathbb{R}^{d \times d/2}$ with orthonormal columns, the prior distribution of $F$ is rotationally invariant, i.e.
    $$ \IP[F \in S] = \IP[F \in S_R] = \IP[FR^T \in S]. $$
    Thus, both
    \[
    (\eta F \quad \eta D_i)  \text{\quad and \quad }(\eta F R \quad \eta D_i) 
    \]
    are statistically indistinguishable. 
    
To conclude the proof, it suffices to show that $F$ and $FR$ are conditionally indistinguishable given $D_i$. Let $M_i$ denote the orthonormal block-diagonal masking matrix such that $\Lambda_i = M_i A$. Because $A \sim \text{Haar}(d)$ and left-multiplication by a fixed orthonormal matrix preserves the Haar measure, the product $\Lambda_i$ is also $\text{Haar}(d)$-distributed. Therefore, $\Lambda_i^\top = [F, D_i]$ is a Haar-distributed orthonormal basis of $\mathbb{R}^d$.
Since for any $R \sim \text{Haar}(d/2)$, $FR$ spans the exact same orthogonal complement of $D_i$ (i.e. if $F$ and $D$ are orthogonal, then $FR$ and $D$ are also orthogonal), it follows that
    $$ \IP[F \in \mathcal{F} \mid D_i] = \IP[F \in \mathcal{F}R \mid D_i] $$
    for any set $\mathcal{F}$ orthogonal to $D_i$. That is, all choices of
    \[
    \{\mathcal{F} r : r \in Supp(Haar(d/2))\}
    \] 
    are equally likely, where $Supp(Haar(d/2))$ denotes the support of the 
    $Haar(d/2)$ distribution, i.e. all $d/2 \times d/2$ orthonormal matrices. Since the oracle responses depend only on $D_i$ (by Lemma \ref{lemma:O_Ob_indist}), and the passive samples $\mathbf{Y}$ are drawn from an isotropic Gaussian distribution $\mathcal{N}(0, I_d)$ regardless of $F$, the full observation history yields the same indistinguishability. The claim thus follows. 
\end{proof}
\begin{lemma*}[\emph{Lemma \ref{lemma:hidden_basis_isotropization}}]
Let $F\in\mathbb R^{d\times d/2}$ be an orthonormal basis, and
$R \sim Haar(d/2)$ be independent of a perturbation $\eta$. It follows that
\[
    \Delta(\eta F R, \|\eta F\|_2U_{d/2})=0.
\]
That is, an adversary that knows $F$ only up to a rotation $R$ sampled uniformly from $Orth(d/2)$ cannot outperform adding isotropic white-noise of magnitude $\|\eta F\|_2$.  
\end{lemma*}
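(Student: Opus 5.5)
We prove the identity by conditioning on the perturbation and using the invariance of the Haar measure. Write $v := \eta F \in \mathbb{R}^{d/2}$. Because $R$ is independent of $\eta$, and $F$ is fixed (or, more generally, independent of $R$), we may condition on $\eta$ and hence on $v$. It then suffices to show that, for every fixed $v$, the law of $vR$ with $R \sim Haar(d/2)$ equals the law of $\|v\|_2 U_{d/2}$. Once this holds pointwise in $v$, integrating over the distribution of $\eta$ gives equality of the two joint laws. This is because $\|\eta F\|_2 U_{d/2}$ is defined with $U_{d/2}$ independent of $\eta$, so its conditional law given $\eta$ is also $\|v\|_2 U_{d/2}$. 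The total variation distance is therefore exactly $0$.

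We handle the fixed-$v$ claim in two cases.

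\emph{Case $v = 0$.} Both sides are the point mass at $0$.

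\emph{Case $v \neq 0$.} Write $v = \|v\|_2 u$ with $u \in \mathbb{S}^{d/2-1}$. Since $R$ is orthogonal, $\|uR\|_2 = 1$, so $W := uR$ lies on the sphere. We then show that $W$ is uniformly distributed on $\mathbb{S}^{d/2-1}$. For any fixed orthogonal $Q$, right-invariance of Haar measure gives $RQ \overset{d}{=} R$, so $WQ = uRQ \overset{d}{=} uR = W$. Thus the law of $W$ is a probability measure on the sphere invariant under the full orthogonal group. The orthogonal group acts transitively on $\mathbb{S}^{d/2-1}$, and the normalized surface measure is the unique invariant probability measure there. Hence $W \overset{d}{=} U_{d/2}$, and scaling by the constant $\|v\|_2$ gives $vR \overset{d}{=} \|v\|_2 U_{d/2}$.

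The only nontrivial ingredient is the uniqueness of the rotation-invariant probability measure on the sphere. We can either cite it as standard, or verify it directly. For the direct check, take $G \sim \mathcal{N}(0, I_{d/2})$ independent of $R$. Then $G/\|G\|_2$ is both uniform on the sphere and invariant in law under every $Q$. For the measure-theoretic step, averaging the law $\mu$ of $W$ over a Haar-random $Q$ reproduces $\mu$. On the other hand, for each fixed point $x$, $xQ$ is uniform by the Gaussian argument, so the average also equals the uniform law.

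The interpretive sentence in the statement then follows. Because the adversary's view is independent of $R$ (Lemma~\ref{lemma:basis_ind}), the effective perturbation that lands in the fixed subspace is exactly $\|\eta F\|_2 U_{d/2}$. This is distributed identically to isotropic noise of that norm restricted to the subspace.

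The main subtlety is not the computation but making precise that $R$ is independent of $\eta$ jointly with everything that determines $F$. Given that hypothesis, conditioning on $\eta$ is legitimate and the argument is exact.
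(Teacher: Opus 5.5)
Your proof is correct and follows the same route as the paper: condition on $\eta$, use Haar invariance of $R$ to conclude that $\eta F R$ is uniform on the sphere of radius $\|\eta F\|_2$, then integrate over $\eta$. Your version is more careful than the paper's proof, which asserts uniformity on $\mathbb{S}^{d/2-1}$ itself and so omits the scaling by $\|\eta F\|_2$; it also skips the degenerate case $\eta F = 0$ and the uniqueness-of-invariant-measure justification that you supply.
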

\begin{proof}
Condition on any realization of $\eta \in \mathbb{R}^d$.
 Since $R\sim \mathrm{Haar}(d/2)$, the vector
$\eta F R$ is uniformly distributed on $\mathbb S^{d/2-1}$. Hence
$
 \Delta(   \eta FR , \|\eta F\|_2U_{d/2} ) = 0.
$ 
Since the equality holds for all realization of $\eta$, the total variation distance between
the two distributions is zero.
\end{proof}
\begin{lemma*}[\emph{Lemma \ref{lemma:sqrt2_baseline_comparison}}]
Suppose an adversary $\mathcal A$ is given the two $d/2$-dimensional
subspaces corresponding to $A_F$ and $A_D$, up to rotations sampled independently from $Haar(d/2)$. Then any perturbation
$\eta$ chosen by $\mathcal A$ with $\|\eta\|_2\leq\epsilon$ has negligible advantage over
full-space isotropic white-noise of magnitude $\sqrt2\epsilon$. That is,
\[
    \mathsf{Succ}(\mathcal A,\epsilon)
    \leq 
    \mathsf{Succ}(\mathcal A_w,\sqrt2\epsilon)
\]
\end{lemma*}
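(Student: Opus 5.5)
We plan to reduce the adversary's perturbation to a pair of isotropic components, one in each hidden subspace, and then dominate that pair by full-space white noise of magnitude $\sqrt2\epsilon$.

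\textbf{Step 1: decompose the perturbation.} Write the effective noise seen by the detector as $\eta\Lambda^\top=(\eta F,\ \eta D)$. Set $a:=\|\eta F\|_2$ and $b:=\|\eta D\|_2$. Since $\Lambda$ is orthonormal, $a^2+b^2=\|\eta\|_2^2\le\epsilon^2$.

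\textbf{Step 2: isotropize each block.} By hypothesis, $\mathcal A$ knows each subspace only up to an independent Haar rotation, and $\eta$ is chosen independently of these rotations. Applying Lemma~\ref{lemma:hidden_basis_isotropization} to each block gives that $\eta\Lambda^\top$ is distributed as $(aU^{(1)}_{d/2},\, bU^{(2)}_{d/2})$, with independent uniform directions $U^{(1)}, U^{(2)}$. The pair $(a,b)$ may be random, but it is independent of the directions. Hence $\mathsf{Succ}(\mathcal A,\epsilon)$ equals an average, over $(a,b)$, of the success probability $P(a,b)$ of this block-isotropic noise. The success event is $\alpha=1$ or $\beta=1$ for the sign pattern of $X$ plus this noise.

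\textbf{Step 3: monotonicity in each block.} The event $\alpha=1$ depends only on the first block and $\beta=1$ only on the second. For a uniformly oriented perturbation added to $X$, the number of sign flips is stochastically increasing in the radius. This holds because $|X_i|$ is independent of the signs, and a larger isotropic radius stochastically increases each $|u_i|$. So $P(a,b)\le P(a',b')$ whenever $a\le a'$ and $b\le b'$. Since $a,b\le\epsilon$, we get $\mathsf{Succ}(\mathcal A,\epsilon)\le P(\epsilon,\epsilon)$.

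\textbf{Step 4: compare with full-space white noise.} $\mathcal A_w$ with budget $\sqrt2\epsilon$ adds $\sqrt2\epsilon\,U_d$, where $U_d$ is uniform on the sphere. By rotation invariance, its image under $\Lambda^\top$ is again $\sqrt2\epsilon\,U_d$, which splits into blocks of norms $\sqrt2\epsilon\cos\theta$ and $\sqrt2\epsilon\sin\theta$, each with uniform direction. For $U_d$ in high dimension, each half carries squared mass $\tfrac12\pm O(\sqrt{\ln d/d})$ except with negligible probability. So both block radii equal $\epsilon(1\pm o(1))$, which matches $(\epsilon,\epsilon)$ up to lower-order terms.

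\textbf{Main obstacle: closing the gap.} A block radius of $\epsilon(1-o(1))$ only gives $\mathsf{Succ}(\mathcal A_w,\sqrt2\epsilon)\ge P(\epsilon(1-\delta),\epsilon(1-\delta))$. To conclude $P(\epsilon,\epsilon)\le\mathsf{Succ}(\mathcal A_w,\sqrt2\epsilon)+\mathrm{negl}$, one needs continuity of $P$ in the radius at scale $\delta=O(\sqrt{\ln d/d})$.

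I would first try the following. The flip count in a block is a sum of $d/2$ conditionally independent indicators, with mean shifting by $O(\delta d)$ under a radius change of relative size $\delta$. This shift should be absorbed by the Hoeffding-type slack already exploited in Lemma~\ref{lemma:alpha_non_negl_beta_1}, where the threshold gap (\ref{eqn:threshold_tau}) dominates $\sqrt{d\ln d}$ fluctuations.

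If that fails, the fallback is a \emph{coupling}. Couple $\sqrt2\epsilon U_d$ with an exact $(\epsilon U^{(1)},\epsilon U^{(2)})$ by radially rescaling each block. Then bound the probability that the rescaling changes some sign, using the anti-concentration of $|X_i|$ near zero.

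Combining Steps 2–4 yields $\mathsf{Succ}(\mathcal A,\epsilon)\le\mathsf{Succ}(\mathcal A_w,\sqrt2\epsilon)$ up to a negligible term.
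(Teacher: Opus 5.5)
Your four steps are the paper's own proof almost line for line: the energy split $a^2+b^2\le\epsilon^2$, isotropization of each half by Lemma~\ref{lemma:hidden_basis_isotropization}, monotone domination by $(\epsilon U_{d/2},\epsilon V_{d/2})$, and concentration of the split of $\sqrt2\epsilon$ white noise. Step~2, however, contains a genuine gap, and the paper's argument has the same gap. Lemma~\ref{lemma:hidden_basis_isotropization} gives only the \emph{marginal} law of $\eta F$ and $\eta D$ given $\eta$. The success event depends on the \emph{joint} law of the signal $X:=Y\Lambda^\top$ and the perturbation $\eta\Lambda^\top$. Here $\eta=\mathcal A(Y,\dots)$ sees the challenge. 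If $P_F=FR$ is the basis $\mathcal A$ holds, then $X_F:=YF=(YP_F)R^\top$ and $\eta F=(\eta P_F)R^\top$ with the \emph{same} unknown $R$. So the pair $(X_F,\eta F)$ is a jointly rotated copy of a pair $\mathcal A$ chooses: the perturbation is isotropic, but $\mathcal A$ controls its angle to the signal. Hence ``$\mathsf{Succ}$ is an average of $P(a,b)$ for $X$ plus independent block-isotropic noise'' does not follow. Your Step~3 justification needs $u$ independent of $X$, so it does not apply either.

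This cannot be patched, because the conclusion fails for $Y$-dependent $\eta$. Let $P_D=DQ$ be the dynamic basis $\mathcal A$ holds, with $X_D:=YD$ and $w:=YP_D=X_DQ$. Take $\eta:=(-cw+n)P_D^\top$ with $n\perp w$ uniform, $\|n\|_2=\rho$, and $c^2\|w\|_2^2+\rho^2=\epsilon^2$. Then $\eta F=0$ and $\eta D=-cX_D+nQ^\top$. The term $nQ^\top$ is isotropic as you claim, but the shrinkage $-cX_D$ is perfectly correlated with the signal. Optimizing $c$ gives angle $\arcsin(\epsilon/\|w\|_2)$, i.e.\ a flip rate of about $\arcsin(y)/\pi$ in the dynamic half, where $y:=\epsilon/\sqrt{d/2}$. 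White noise of norm $\sqrt2\epsilon$ flips at rate $\arctan(y)/\pi$ in each half. Since $\arcsin y>\arctan y$, any $y$ strictly inside $(\sin\pi\tau_2,\min\{1,\tan\pi\tau_2\})$ gives, by concentration, $\mathsf{Succ}(\mathcal A,\epsilon)=1-\mathrm{negl}(d)$ but $\mathsf{Succ}(\mathcal A_w,\sqrt2\epsilon)=\mathrm{negl}(d)$. Even without subspace knowledge, $\eta=-cY+n$ achieves rate $\arcsin(\epsilon/\sqrt d)/\pi$ against $\arctan(\sqrt2\epsilon/\sqrt d)/\pi$, and wins whenever $\tau_2>1/4$. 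A correct version must either restrict to $\eta$ independent of $Y$, or compare against a baseline that may also use $Y$. One such baseline is shrink-plus-isotropic-noise of norm $\sqrt2\epsilon$. It attains the per-half optimum, because a uniformly rotated pair at angle $\theta$ disagrees in sign at rate $\theta/\pi$, and budget $b$ in a half buys at most $\theta=\arcsin(b/\|X_D\|_2)$.

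Separately, your ``main obstacle'' is real. The paper's last step asserts the same detector behavior ``up to a negligible concentration error'' without proof. Neither of your remedies closes it, though. The gap condition (\ref{eqn:threshold_tau}) separates $\tau_1$ from $\tau_2$ and says nothing about where the mean of $N_2$ sits relative to $w$. The baseline's split fluctuates by a relative $\Theta(d^{-1/2})$, which moves the mean of $N_2$ by $\Theta(\sqrt d)$, the order of its standard deviation. So radial rescaling changes $\Theta(\sqrt d)$ signs, not none. Near the threshold this leaves a small but non-negligible constant, since the random split only adds variance. Even for independent noise you therefore need slack in the baseline, e.g.\ budget $\sqrt2(1+\delta_d)\epsilon$ with $\delta_d=\omega(\sqrt{\log d/d})$. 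With that slack, your monotonicity step plus concentration of the split finishes the argument directly.
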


\begin{proof}
For any perturbation $\eta$ with $\|\eta\|_2\leq\epsilon$, write the hidden
fixed and dynamic bases as
\[
    P_F = F R,
    \qquad
    D = A_D R',
\]
where $R,R'\sim {Haar}(d/2)$ are sampled independently. Define
\[
    a := \|\eta P_F\|_2,
    \qquad
    b := \|\eta D\|_2.
\]
Since $P_F$ and $D$ are orthonormal bases for orthogonal $d/2$-dimensional
subspaces,
\[
    a^2+b^2
    =
    \|\eta P_F\|_2^2+\|\eta D\|_2^2
    \leq
    \|\eta\|_2^2
    \leq
    \epsilon^2.
\]

We first consider the fixed component. Since right multiplication by an
orthogonal matrix preserves norm,
\[
    \|\eta F\|_2
    =
    \|\eta F R\|_2
    =
    \|\eta P_F\|_2
    =
    a.
\]
Moreover, since $R$ is sampled uniformly from $O(d/2)$, 
\[
    \Delta\!\left(
        \eta P_F,
        aU_{d/2}
    \right)=0.
\]
where $U_{d/2}$ is uniform on $\mathbb S^{d/2-1}$.

Similarly, since $D=A_D R'$ and $R'\sim {Haar}(d/2)$ is the
one-time rotation for the dynamic component,
\[
    \|\eta A_D\|_2
    =
    \|\eta A_D R'\|_2
    =
    \|\eta D\|_2
    =
    b,
\]
and hence
\[
    \Delta\!\left(
        \eta D,
        bV_{d/2}
    \right)=0.
\]
where $V_{d/2}$ is sampled uniformly from $\mathbb S^{d/2-1}$.
Therefore, the effective perturbation in detector coordinates satisfies
\[
    \Delta(\eta\Lambda_i^\top,
    (aU_{d/2}, bV_{d/2}))=0.
\]
That is, the adversary is limited to isotropic white-noise of magnitude $a$ and $b$ within the two subspaces. 

Since $a,b\leq \epsilon$, the perturbation $(aU_{d/2},bV_{d/2})$ is dominated
by 
\[
    (\epsilon U_{d/2},\epsilon V_{d/2}).
\]
Under the well-behaved threshold detector model, increasing the isotropic
perturbation magnitude in either half can only increase the probability that
the corresponding sign-flip threshold is exceeded. Hence,
\[
    \Pr[\textnormal{reject under }(aU_{d/2},bV_{d/2})]
    \leq
    \Pr[\textnormal{reject under }(\epsilon U_{d/2},\epsilon V_{d/2})].
\]

Furthermore,  $(\epsilon U_{d/2},\epsilon V_{d/2})$ has total squared
norm
\[
    \epsilon^2+\epsilon^2=2\epsilon^2,
\]
and hence total magnitude $\sqrt2\epsilon$. 
Let $\mathcal A_w$ output isotropic white-noise $W$ over the entire $d$-dimensions
 with $\|W\|_2=\sqrt2\epsilon$. Let $s$ be a random subset of $d/2$ elements from $W$. Standard concentration of measure results gives
\[
    \Pr\left[
        \left|\|s\|_2^2-\epsilon^2\right|
        >
        \epsilon^2 d^{-1/3}
    \right]
    \leq
    \exp(-\Omega(d^{1/3})).
\]
By a union bound, with probability at least
\[
    1-2\exp(-\Omega(d^{1/3})),
\]
the split isotropic perturbation
$(\epsilon U_{d/2},\epsilon V_{d/2})$ and the full-space isotropic perturbation
of magnitude $\sqrt2\epsilon$ induce the same detector behavior up to a
negligible concentration error. It follows that
\[
    \mathsf{Succ}(\mathcal A,\epsilon)
    \leq
    \mathsf{Succ}(\mathcal A_w,\sqrt2\epsilon)+\mathrm{negl}(d).
\]
\end{proof}

\begin{lemma*} [\emph{Lemma \ref{lemma:most_powerful}}]
Continuing from the setting in Lemma \ref{lemma:basis_ind}. If there exists a adversary $\mathcal{A}$ operating under the REM2 threat model such that  $\mathsf{Adv}(\mathcal{A}, \epsilon)$ is non-negligible, then there exists a measurable subset $S \subseteq \mathcal{F}$, a rotation $R \in Haar(d/2)$, and polynomial $p(\cdot)>0$ such that:
\[
   \Big| \IP[F \in S \mid \mathbf{Y}, \mathcal{O}(q_1), \dots] - \IP[F \in S_R \mid \mathbf{Y}, \mathcal{O}(q_1), \dots] \Big| \geq \frac{1}{p(d)},
\]
where $S_R = \{ M R : M \in S\}$. That is, $\mathcal{A}$ can distinguish the specific orientation of the fixed basis $F$ from a rotated counterpart with non-negligible probability.
\end{lemma*}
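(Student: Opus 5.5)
I would prove the statement by contraposition. The goal is to show that if the adversary's posterior over $F$ is (up to negligible error) invariant under right-rotation by every $R$, then its success probability is at most $\mathsf{Succ}(\mathcal A_w,\sqrt2\epsilon)+\mathrm{negl}(d)$. The argument therefore runs through Lemma~\ref{lemma:sqrt2_baseline_comparison}, applied to a ``rotated'' copy of the game.

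First, fix the adversary's view $Z_{\mathcal A}$ and its output perturbation $\eta=\eta(Z_{\mathcal A},\rho)$, where $\rho$ is the adversary's internal randomness. By Lemma~\ref{lemma:O_Ob_indist}, we replace $\mathcal O$ by $\mathcal O_3$ in the view at negligible cost. After this replacement, the view depends on $F$ only through the samples $\mathbf Y$.

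Second, I define a coupled ``ideal'' experiment. In it, the challenger draws a fresh $R\sim\mathrm{Haar}(d/2)$ independent of everything and evaluates the challenge detector with $FR$ in place of $F$. The dynamic part $D$ is already re-randomized per sample, since $B$ is fresh. In this ideal experiment $\eta$ is independent of $R$. Lemma~\ref{lemma:hidden_basis_isotropization} then shows that $\eta FR$ is distributed as $\|\eta F\|_2U_{d/2}$ and $\eta D$ as $\|\eta D\|_2V_{d/2}$. Hence Lemma~\ref{lemma:sqrt2_baseline_comparison} gives
\[
\mathsf{Succ}_{\mathrm{ideal}}(\mathcal A,\epsilon)\le \mathsf{Succ}(\mathcal A_w,\sqrt2\epsilon)+\mathrm{negl}(d).
\]
Consequently, a non-negligible $\mathsf{Adv}(\mathcal A,\epsilon)$ forces
\[
\bigl|\mathsf{Succ}_{\mathrm{real}}-\mathsf{Succ}_{\mathrm{ideal}}\bigr|\ge 1/p(d)
\]
for some polynomial $p$.

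Third, I convert this gap into a set $S$ and a rotation $R$. Write the success event as a function $g(F,Z_{\mathcal A},\rho)\in\{0,1\}$, namely whether the detector keyed by $F$ rejects $Y+\eta$. Then $\mathsf{Succ}_{\mathrm{real}}=\IE[g(F,\cdot)]$ and $\mathsf{Succ}_{\mathrm{ideal}}=\IE_R\IE[g(FR,\cdot)]$. By averaging, there is some fixed rotation $R_0$ in the support with
\[
\bigl|\IE[g(F,\cdot)]-\IE[g(FR_0,\cdot)]\bigr|\ge 1/p(d).
\]
A further averaging over $Z_{\mathcal A}$ and $\rho$ gives views on which the conditional gap is at least $1/(2p(d))$ with non-negligible probability. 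For such a view, set $S=\{M\in\mathcal F: g(M,Z_{\mathcal A},\rho)=1\}$, which is measurable since $g$ is a composition of measurable maps. Then $\IP[F\in S\mid\cdot]$ equals the conditional real success probability. Also $\IP[F\in S_{R_0^{-1}}\mid\cdot]=\IP[FR_0\in S\mid\cdot]$, which is the rotated success probability. Taking $R=R_0^{-1}$ yields the required inequality.

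The main obstacle is the second step: justifying that in the ideal experiment $\eta$ is genuinely independent of the hidden rotation. The adversary's view depends on $F$ through $\mathbf Y$ and through the (simulated) oracle answers, and the challenge $Y$ itself is generated with $F$. I would first try to handle this by noting that $R$ in the ideal experiment is sampled after the view is fixed, so independence holds by construction. The nontrivial content is then entirely in the real-versus-ideal gap, which is exactly what the third step extracts. I would also verify that replacing $F$ by $FR$ in the detector while keeping $Y$ fixed is consistent with $\mathcal F$, i.e.\ that $FR$ remains orthogonal to $D$, which holds by the observation used in Lemma~\ref{lemma:basis_ind}.
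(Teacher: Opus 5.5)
The step you flagged as the main obstacle is where the argument breaks, and your proposed resolution does not work. That resolution is to sample $R$ after the view and key the challenge detector by $FR$ with $Y$ fixed.

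\textbf{Why Step 2 fails.} In your ideal experiment the challenge $Y$ is still generated with $F$, but the detector reads the signs of $(Y+\eta)FR=X_1R+\eta FR$. Here $X_1=YF$ is the first half of the challenge latent, and it carries the codeword $C_1=\operatorname{sign}(X_1)$. For Haar $R$, the sign pattern of $X_1R$ is uniform and independent of $C_1$. So $\hat C_1$ disagrees with $C_1$ in about half the positions, and the detector rejects even at $\eta=0$. Hence $\mathsf{Succ}_{\mathrm{ideal}}\approx 1$ for every adversary. The trivial adversary $\eta=0$, with $\epsilon$ small enough that $\sqrt2\epsilon$ white noise fails, already violates your bound $\mathsf{Succ}_{\mathrm{ideal}}\le\mathsf{Succ}(\mathcal A_w,\sqrt2\epsilon)+\mathrm{negl}(d)$.

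\textbf{Why the lemmas do not apply.} Lemma~\ref{lemma:sqrt2_baseline_comparison}, including its use of Lemma~\ref{lemma:hidden_basis_isotropization}, concerns the detector input $X+\eta\Lambda^\top$ with the same $\Lambda$ that produced $Y$. The latent stays put and only the perturbation is isotropized. The experiment it controls reads $X_1+\eta FR$ against $\operatorname{sign}(X_1)$, and that is not $\IE_R\IE[g(FR,\cdot)]$. So Steps 2 and 3 refer to two different ideal worlds, and the gap you extract in Step 3 does not follow from $\mathsf{Adv}(\mathcal A,\epsilon)$ being non-negligible. Checking $FR\perp D$ does not address this. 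The consistency condition that fails is that the rotated detector must still accept the unperturbed challenge.

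\textbf{A secondary issue in Step 3.} Success also depends on $k_1,k_3$, $B$ and the codewords. So $g(M,Z_{\mathcal A},\rho)$ is not a function of $M$ and the view alone. You would need $h(M)=\IP[\text{success}\mid F=M,Z_{\mathcal A},\rho]$ and a level set of $h$ in place of $S$.

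\textbf{How the paper proceeds.} The paper builds neither a coupled experiment nor an explicit witness. It proves the contrapositive directly. Assuming that for all but a negligible fraction of views the posterior of $F$ is invariant under every right-rotation, it asserts that, conditional on the view, $\eta F$ is distributed as $aU_{d/2}$. It treats $\eta D$ as $bU_{d/2}$ because $B$ is fresh. It then applies Lemma~\ref{lemma:sqrt2_baseline_comparison} view by view and averages. The natural consistent repair of your Step 2 is to compare against the posterior-symmetrized success $\IE_R\IE[h(FR)\mid Z_{\mathcal A}]$, and that leads to the same place.

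\textbf{The remaining obstacle.} What is then needed is exactly the obstacle you flagged, and the paper only asserts it. Posterior invariance makes $\eta F$ isotropic marginally, not relative to $X_1=YF$. Moreover, $\langle Y,\eta\rangle=\langle X,\eta\Lambda^\top\rangle$ is visible to the adversary whatever the rotation. For instance, take $\eta=-s^2Y+s\sqrt{1-s^2}\,\|Y\|\,w$, with $w$ a uniform unit vector orthogonal to $Y$. The detector then sees $(1-s^2)X+s\sqrt{1-s^2}\,\|X\|\,u$, with $u$ uniform on the unit sphere of $X^\perp$. This gives a flip rate of about $\arcsin(s)/\pi$ in each half. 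White noise of norm $\sqrt2 s\|Y\|$ gives only $\arctan(\sqrt2 s)/\pi$, which is smaller once $s>1/\sqrt2$.

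Now suppose $\tau_2>1/4$; the paper's SDXL estimate is $0.34$. Choosing $s$ with $\arctan(\sqrt2 s)<\pi\tau_2<\arcsin s$ yields a query-free adversary with advantage close to $1$. In the idealized model its view is independent of $A$, so the posterior of $F$ is the rotation-invariant prior. In that regime the statement itself fails. Neither your argument nor the paper's can close this step without an extra restriction.
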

\begin{proof}
We prove the contrapositive. Suppose that $\mathcal A$ obtains negligible orientation information about $F$. That is, for all polynomial
$p(\cdot)>0$ and adversary's view $Z_{\mathcal A}$, 
\[
\begin{aligned}
    \Pr_{Z_{\mathcal A}} \Bigg[
        &\exists\,\mathcal S\subseteq\mathcal F,\,
        \exists\,R\in Orth(d/2)
        \text{ such that } \\
        &\left|
            \Pr[F\in\mathcal S\mid Z_{\mathcal A}]
            -
            \Pr[F\in\mathcal S_R\mid Z_{\mathcal A}]
        \right|
        \geq \frac{1}{p(d)}
    \Bigg]
\end{aligned}
\]
is negligible, where $Orth(m)$ denote the orthonormal group in $m$ dimensions

Let
\[
    \eta = \mathcal A(Z_{\mathcal A}), \qquad \|\eta\|_2\leq\epsilon,
\]
and $P_F$ and $P_D$ denote the orthogonal projectors onto the fixed and dynamic
half-subspaces, respectively.  Define
\[
    a := \|\eta P_F\|_2,
    \qquad
    b := \|\eta P_D\|_2.
\]
Since $P_F$ and $P_D$ project onto orthogonal half-subspaces,
\[
    a^2+b^2\leq \|\eta\|_2^2\leq \epsilon^2.
\]

Because the conditional distribution of $F$ remains invariant under right
multiplication by rotations in $Orth(d/2)$, the adversary does not know the
orientation of the detector basis inside the fixed subspace. Therefore. the total variation distance
\[
    \Delta(\eta F^\top, aU_{d/2}) \leq negl(d),
\]
where $U_m$ is uniform on the unit
sphere $\mathbb S^{d/2-1}$.
Similarly, for the challenge sample, the dynamic detector basis is masked by a
fresh one-time Haar rotation. Hence
\[
    \Delta(\eta D^\top,  bU_{d/2}) \leq negl(d),
\]
where $D$ denotes the dynamic basis for the challenge sample.

Thus, conditioned on this realization of $Z_{\mathcal A}$, the adversary's
perturbation is equivalent to an isotropic perturbation in the two hidden
half-spaces, with some energy split $(a,b)$ satisfying
$a^2+b^2\leq\epsilon^2$. The adversary cannot target individual detector
coordinates; it can only choose how much energy to allocate to each half.

By the subspace-isotropization and $\sqrt2\epsilon$ baseline comparison lemma,
every such rotation-hidden perturbation is dominated by the full-space
isotropic white-noise adversary of magnitude $\sqrt2\epsilon$. Hence
\[
    \mathsf{Succ}(\mathcal A,\epsilon\mid Z_{\mathcal A})
    \leq
    \mathsf{Succ}(\mathcal A_w,\sqrt2\epsilon)+negl(d)
\]
where $\mathsf{Succ}(\mathcal A,\epsilon\mid Z_{\mathcal A})$ denotes the success probability of $ \mathcal{A}$ conditional on view $Z_\mathcal{A}$

Averaging over $Z_{\mathcal A}$, and using that bad realizations occur with
negligible probability, gives
\[
    \mathsf{Succ}(\mathcal A,\epsilon)
    \leq
    \mathsf{Succ}(\mathcal A_w,\sqrt2\epsilon)+negl(d).
\]
Therefore
\[
    \mathsf{Adv}(\mathcal A,\epsilon)\leq negl(d).
\]
Taking the contrapositive proves the lemma.
\end{proof}

\begin{theorem*} [\emph{Theorem \ref{thm:computational_rem2}}]
If the underlying PRC encoder, PRF, and orthonormal matrix generators are secure pseudorandom primitives, and the condition in (\ref{eqn:threshold_tau}) is met, then under the REM2 threat model, for all PPT adversaries $\mathcal{A}$, 
\[
{Adv}(\mathcal{A}, \epsilon) \leq {negl}(\lambda),
\]
where $\lambda$ is the security parameter.
\end{theorem*}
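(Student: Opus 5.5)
We prove Theorem~\ref{thm:computational_rem2} by a standard hybrid argument that reduces the real construction (Algorithms~\ref{alg:keygen}--\ref{alg:decode_detect}) to the idealized scheme of Theorem~\ref{thm:rem2-secure}. We define a sequence of games that change the REM2 experiment one primitive at a time:
\begin{itemize}
\item $G_0$ is the real REM2 game.
\item $G_1$ replaces $A \gets \mathsf{SRandOrthonormal}(d,k_2)$ by $A \sim \mathrm{Haar}(d)$.
\item $G_2$ replaces every per-sample $B_i \gets \mathsf{SRandOrthonormal}(d/2,r_i)$ by independent $\mathrm{Haar}(d/2)$ matrices. Decoding is modified to look up $B_i$ from a table indexed by the recovered nonce $\hat r$.
\item $G_3$ replaces $f_{k_3}$ by a truly random function $\{0,1\}^{n_r}\to\{0,1\}^{d/2}$, lazily sampled.
\item $G_4$ replaces the PRC codewords $C_1$ by uniform strings. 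The decoder keeps the same functional behaviour: it succeeds iff at most a $\tau_1$ fraction of flips occur, by the well-behaved property of Appendix~\ref{subsection:well_behaved}.
\end{itemize}
$G_4$ is exactly the idealized setting of Theorem~\ref{thm:rem2-secure}. There both $\mathsf{Succ}(\mathcal{A},\epsilon)$ and $\mathsf{Succ}(\mathcal{A}_w,\sqrt2\epsilon)$ are within $\mathrm{negl}(d)$ of each other in the required direction. Because $d=\poly(\lambda)$ and $d\ge\lambda$, this gives $\mathrm{negl}(\lambda)$.

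For each transition, we build a PPT distinguisher $\mathcal{B}$ against the corresponding primitive. $\mathcal{B}$ simulates the entire REM2 game for $\mathcal{A}$ and also simulates the baseline $\mathcal{A}_w$. It outputs $1$ iff $\mathcal{A}$'s final perturbation has norm at most $\epsilon$ and the simulated $\mathsf{Decode}$ rejects. The gap $|\Pr_{G_j}[\text{win}]-\Pr_{G_{j+1}}[\text{win}]|$ then bounds $\mathcal{B}$'s advantage. The same holds for $\mathcal{A}_w$'s success, since $\mathcal{A}_w$ is itself PPT. Hence $\mathsf{Adv}$ changes by at most a negligible amount per hop.

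The individual transitions are handled as follows.
\begin{itemize}
\item $G_0\to G_1$: apply Lemma~\ref{lemma:haar} to the key $k_2$. $\mathcal{B}$ receives $A$ or Haar, samples all other keys itself, and can therefore answer every oracle query and generate every sample.
\item $G_1\to G_2$: a further hybrid over the $t+s+1$ nonces used (polynomially many), again via Lemma~\ref{lemma:haar} applied to a fresh uniform seed $r_i$. A subtlety here is that $r_i$ is also fed to the PRC and the PRF, so $\mathcal{B}$ does not know $r_i$. We therefore order the hybrids so that the PRF and PRC are idealized first. With a random function and random $C_1$, $r_i$ appears only as a table index, so each $B_i$ can be swapped for a Haar matrix tied to an abstract handle.
\item PRF hop: standard. $\mathcal{B}$ answers $C_2$ and the decoder check through its PRF oracle.
\item PRC hop: uses the PRC's pseudorandomness. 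This step requires that the decoding oracle can be simulated without $k_1$.
\end{itemize}

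I expect the main obstacle to be exactly this last point. The detector oracle must run $\mathsf{PRC}^{-1}(k_1,\cdot)$ on adversarially perturbed strings, and plain pseudorandomness does not cover decoding queries. I would first try to simulate decoding from the encoding side. For every query $q=Y_i+\eta$, $Y_i$ is one of the known samples, and $\mathcal{B}$ generated $Y_i$ and therefore knows its nonce $r_i$ and $C_1^{(i)}$. $\mathcal{B}$ then decides acceptance by checking whether the Hamming distance $d_H(\hat C_1, C_1^{(i)})\le\tau_1 d/2$. This check is valid by the error-correcting guarantee together with the well-behaved assumption. Queries not near any generated sample are rejected except with negligible probability, by PRC soundness. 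If this does not go through cleanly, the fallback is to assume a PRC secure against decoding-oracle access, i.e.\ the robust/CCA-style notion of \cite{christ2024pseudorandomerrorcorrectingcodes}, and to state it as the hypothesis ``secure pseudorandom primitive.'' Finally, summing the $O(t+s)$ hop losses, each negligible in $\lambda$, together with the $\mathrm{negl}(d)$ bound from Theorem~\ref{thm:rem2-secure}, yields $\mathsf{Adv}(\mathcal{A},\epsilon)\le\mathrm{negl}(\lambda)$.
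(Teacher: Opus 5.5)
Your argument works once your fallback PRC assumption is adopted. It follows the paper's overall strategy: a hybrid chain ending in the idealized setting of Theorem~\ref{thm:rem2-secure}, with $d\ge\lambda$ turning $\mathrm{negl}(d)$ into $\mathrm{negl}(\lambda)$. The decomposition, however, is genuinely different.

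\textbf{The paper's route.} It makes only two hops, replacing $A$ and then each $B_i$ by Haar matrices. It justifies the second hop by arguing that $r$ cannot be \emph{recovered} from $C_1=\mathsf{PRC}(k_1,r)$ and $C_2=f_{k_3}(r)$. It never idealizes the PRC or the PRF, yet declares its final hybrid to match Theorem~\ref{thm:rem2-secure}. That theorem's idealization replaces those primitives too, and its proof uses that samples are exactly Gaussian.

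\textbf{Your route.} Your ordering ($A$, then the PRF, then the PRC, and only then the $B_i$) is what makes the $B_i$ hop an honest reduction to Lemma~\ref{lemma:haar}. A reduction handed $\mathsf{SRandOrthonormal}(d/2,r_i)$ without $r_i$ can only simulate $C_1,C_2$ once they no longer depend on $r_i$. Unpredictability of $r_i$ does not by itself give pseudorandomness of a matrix derived from $r_i$ alongside correlated outputs. You also correctly require $d=\mathrm{poly}(\lambda)$ so the reductions are efficient, which the paper leaves implicit. The paper's shortcut buys brevity and never has to simulate the decoder. Your route buys reductions that can actually be written down, at the price of confronting the PRC decoding oracle.

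\textbf{The PRC hop.} Here, commit to your fallback rather than the primary argument.
\begin{itemize}
\item Standard PRC soundness only covers strings fixed independently of the key. Your queries depend on codewords the adversary has seen.
\item No worst-case error-correction guarantee can fix the decoder's output on a string within $\tau_1$ of two issued codewords. The query $(Y_i+Y_j)/2$ needs no knowledge of $A$. It disagrees with each of $C_1^{(i)}$ and $C_1^{(j)}$ in about a quarter of the positions, which is inside both $\tau_1$-balls for the paper's $\tau_1\approx 0.31$.
\end{itemize}
So agreement between your proximity-based table decoder (with a fixed tie-breaking rule) and the real $\mathsf{PRC}^{-1}(k_1,\cdot)$ is precisely a decoding-oracle (ideal, CCA-style) PRC property. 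This is stronger than the pseudorandomness described in Section~\ref{subsection:prc_code}, and it should appear in the theorem's hypothesis. The paper's threshold model of the detector assumes the same thing implicitly.

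Finally, in the $B_i$ hop, account for the negligible probability of a nonce collision. If $r_i=r_j$, the real $B_i,B_j$ coincide while the idealized ones are independent.
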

\begin{proof}
We proceed via a sequence of hybrid experiments, starting from the real-world instantiation and replacing the pseudorandom matrices with their idealized counterparts. Let $\mathcal{A}$ be a PPT adversary bounded by a perturbation budget $\epsilon$, and assume $d \geq \lambda$. We denote the advantage of $\mathcal{A}$ in hybrid $H_x$ as $\text{Adv}_{H_x}(\mathcal{A}, \epsilon)$.

\begin{itemize}
    \item {Hybrid 0 ($H_0$):} This corresponds to the real-world watermarking system defined in Algorithms 1, 2, and 3. By definition, ${Adv}_{H_0}(\mathcal{A}, \epsilon) = Adv(\mathcal{A}, \epsilon)$.

    \item {Hybrid 1 ($H_1$):} We modify $H_0$ by replacing the long-term secret matrix $A \leftarrow \text{SRandOrthonormal}(d, k_2)$ with a truly random $A^* \sim \text{Haar}(d)$, applied consistently in both sample generation and oracle detection. This transition is valid because $k_2$ is never revealed to $\mathcal{A}$. Any PPT distinguisher between $H_0$ and $H_1$ directly breaks the pseudorandomness of \texttt{SRandOrthonormal}:
    \[
    \begin{split}
    |\text{Adv}_{H_0}(\mathcal{A}, \epsilon) - \text{Adv}_{H_1}(\mathcal{A}, \epsilon)| \leq negl_1(\lambda)
    \end{split}
    \]

    \item {Hybrid 2 ($H_2$):} We modify $H_1$ by replacing each per-sample matrix $B_i = \text{SRandOrthonormal}(d/2, r)$ with a perfectly random matrix $B_i^* \sim \text{Haar}(d/2)$, consistently in both generation and detection. This transition is valid because the nonce $r$ is computationally hidden from $\mathcal{A}$; it appears in the adversary's view only through $C_1 = \text{PRC}(k_1, r)$ and $C_2 = f_{k_3}(r)$. Recovering $r$ requires breaking the computational security of the PRC or the PRF. Since $\mathcal{A}$ cannot recover $r$, the output of \texttt{SRandOrthonormal} is computationally indistinguishable from a Haar-random matrix:
    \[
    \begin{split}
    &|\text{Adv}_{H_1}(\mathcal{A}, \epsilon) - \text{Adv}_{H_2}(\mathcal{A}, \epsilon)| \\
    &\leq negl_2(\lambda) + negl_3(\lambda) + negl_4(\lambda)
    \end{split}
    \]
    where $negl_2$ is the cost of the \texttt{SRandOrthonormal} reduction, and $negl_3, negl_4$ bound the probability of $r$ being recovered from $C_1$ or $C_2$.
\end{itemize}

Note that $H_2$  matches the conditions of the information-theoretic result in Theorem \ref{thm:rem2-secure}. Therefore,
\[
\text{Adv}_{H_2}(\mathcal{A}, \epsilon) \leq negl_5(d) \leq negl_5(\lambda),
\]
where the latter is due to $\lambda \leq d$. 
By the triangle inequality,
\[
\begin{split}
&|\text{Adv}(\mathcal{A}, \epsilon) - \text{Adv}_{H_2}(\mathcal{A}, \epsilon)|\\
&=|\text{Adv}_{H_0}(\mathcal{A}, \epsilon) - \text{Adv}_{H_2}(\mathcal{A}, \epsilon)|\\
&\leq |\text{Adv}_{H_0}(\mathcal{A}, \epsilon) - \text{Adv}_{H_1}(\mathcal{A}, \epsilon)|  + |\text{Adv}_{H_1}(\mathcal{A}, \epsilon) - \text{Adv}_{H_2}(\mathcal{A}, \epsilon)| \\
&\leq  negl_1(\lambda)+negl_2(\lambda) + negl_3(\lambda) + negl_4(\lambda)
\end{split}
\]
which is negligible in $\lambda$. It follows that,
\[
\text{Adv}(\mathcal{A}, \epsilon)  \leq |\text{Adv}(\mathcal{A}, \epsilon) - \text{Adv}_{H_2}(\mathcal{A}, \epsilon)| + \text{Adv}_{H_2}(\mathcal{A}, \epsilon)
\]
which is negligible in $\lambda$. 
\end{proof}

\section{Additional Empirical Studies}

\subsection{Validating Threshold Gap Condition} \label{appendix_threshold_gap}
We empirically verify that the threshold gap condition in (\ref{eqn:threshold_tau}) is fulfilled. To empirically estimate $\tau_1$, we proceed as follows:
\begin{enumerate}
    \item Randomly sample a PRC latent codeword $C$ of length $d/2$, embedding a uniformly sampled 100-bit message.
    \item Generate 1,000 noisy copies $C_{1, p}, \dots, C_{1000, p}$, where each $C_{i, p}$ is a copy of $C$ with each entry's sign randomly flipped with probability $p$.
    \item Find the maximum value of $p'$ such that all $C_{1, p'}, \dots, C_{1000, p'}$ successfully decode. We set this $p'$ as our estimate for $\tau_1$.
\end{enumerate}
In our experiments, the value of $\tau_1$ is estimated to be 0.31 for $d=16,384$ (SD2.1), and 0.37 for $d=65,536$ (SDXL).

\subsubsection*{Single-Bit Watermarking} 
By setting $t < \frac{d}{2} \left(\tau_1 - \sqrt{\frac{\ln d \ln \ln d}{d}} \right)$, the condition in (\ref{eqn:threshold_tau}) is satisfied. 

\subsubsection*{Multi-Bit Watermarking}
For multi-bit watermarking, we use the parameters suggested in the \texttt{pyldpc} package. Similar to the PRC case, we empirically estimate the threshold $\tau_2$ using the same sign-flipping noise injection method described above. In our experiments, the value of $\tau_2$ is estimated to be 0.26 for $d=16,384$ (SD2.1), and 0.34 for $d=65,536$ (SDXL). 
This meets the threshold gap condition in (\ref{eqn:threshold_tau}).

\subsection{Performance Overhead} \label{appendix_performance_overhead}

We evaluate the memory and computational overhead of our proposed defense on both SD2.1 and SDXL. 
\begin{table}
\centering
\resizebox{\columnwidth}{!}{ 
\begin{tabular}{l c c c c}
\toprule
\textbf{Metric} & \textbf{SD2.1} & \textbf{SDXL} & \textbf{Empirical} & \textbf{Theoretical} \\
 & & & \textbf{Scaling} & \textbf{Scaling} \\
\midrule
Total Memory & $\sim$1.3 GB & $\sim$20.0 GB & 15.0$\times$ & 16.0$\times$ \\
Sample Time (s)  & 0.1    &  1.2   & 12.0$\times$ & 16.0$\times$ \\
Decode Time  (s) & 0.2    &   1.9   & 9.5$\times$ & 16.0$\times$ \\
\bottomrule
\end{tabular}
}
\caption{Average memory and time overhead for sampling and decoding phases (averaged over 100 trials), rounded to the nearest 1 decimal place. The empirical scaling closely tracks the theoretical $16\times$ increase expected from the $O(d^2)$ matrix operations. As a comparison, the average unwatermarked per-image generation times are approximately 2.5s and 7.3s for SD2.1 and SDXL, respectively}
\label{tab:performance_overhead}
\end{table}
\subsubsection*{Memory Overhead} 
For efficiency, the fixed $d \times d$ transformation matrix $A$ and the one-time $(d/2) \times (d/2)$ matrix $B$ are stored in memory. This combined storage results in a total memory footprint of approximately 1.3 GB for SD2.1 and 20.0 GB for SDXL, representing an approximate $15.4\times$ increase in memory requirements.

\subsubsection*{Time Overhead}
We measured the average execution time for the sampling and decoding phases across 100 independent trials. In both Algorithm \ref{alg:sample} (Line \ref{alg_sample_line_mult}) and Algorithm \ref{alg:decode_detect} (Lines \ref{alg_decode_line_mult1} and \ref{alg_decode_line_mult2}), the primary computational bottlenecks are dense matrix multiplications, which require $O(d^2)$ operations. 

Because the latent dimension $d$ in SDXL is 4 times larger than in SD2.1, the $O(d^2)$ complexity yields a theoretical $16\times$ increase in computational workload. As shown in Table \ref{tab:performance_overhead}, our empirical latency measurements closely align with this expected scaling factor, demonstrating $12.0\times$ and $9.5\times$ increases for sampling and decoding, respectively. The deviation from the theoretical $16 \times$ likely stems from other fixed overhead incurred during the sampling and decoding steps.

\subsection{Samples of Watermarked Images} \label{appendix_im_sample_wm}
Samples of watermarked images generated are in Figures \ref{fig:sd21_appendix_full} and \ref{fig:img/sdxl_appendix_full} for Stable Diffusion 2.1 and Stable Diffusion XL, respectively. 

\begin{figure*}[t]
    \centering


    \begin{subfigure}{0.19\linewidth}
        \includegraphics[width=\linewidth]{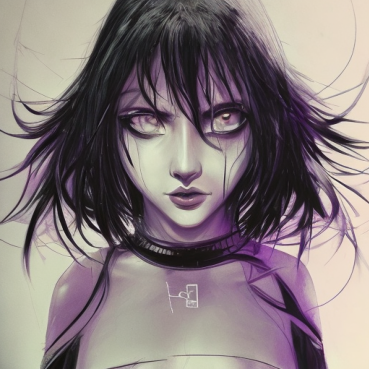}
    \end{subfigure}\hfill
    \begin{subfigure}{0.19\linewidth}
        \includegraphics[width=\linewidth]{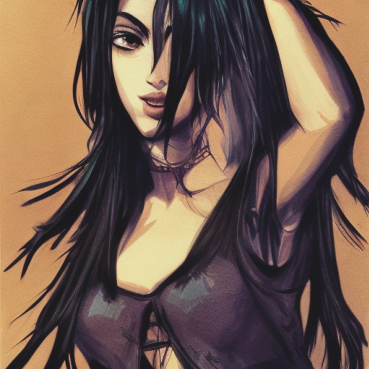}
    \end{subfigure}\hfill
    \begin{subfigure}{0.19\linewidth}
        \includegraphics[width=\linewidth]{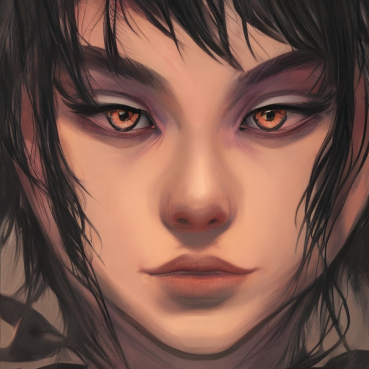}
    \end{subfigure}\hfill
    \begin{subfigure}{0.19\linewidth}
        \includegraphics[width=\linewidth]{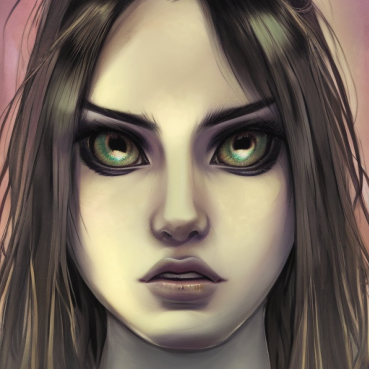}
    \end{subfigure}\hfill
    \begin{subfigure}{0.19\linewidth}
        \includegraphics[width=\linewidth]{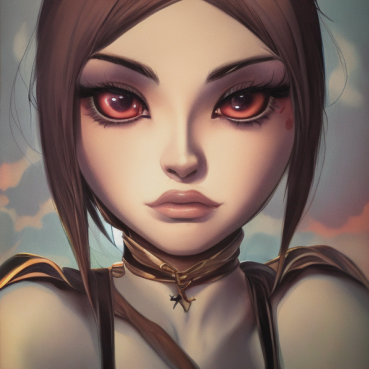}
    \end{subfigure}
    \caption*{a comic potrait of a female necromamcer with big and cute eyes, fine - face, realistic shaded perfect face, fine details. night setting. very anime style. realistic shaded lighting poster by ilya kuvshinov katsuhiro, magali villeneuve, artgerm, jeremy lipkin and michael garmash, rob rey and kentaro miura style, trending on art station}
    \vspace{1em}

    \begin{subfigure}{0.19\linewidth}
        \includegraphics[width=\linewidth]{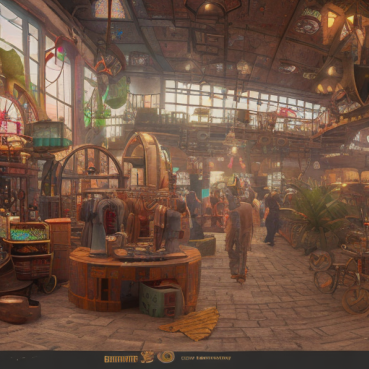}
    \end{subfigure}\hfill
    \begin{subfigure}{0.19\linewidth}
        \includegraphics[width=\linewidth]{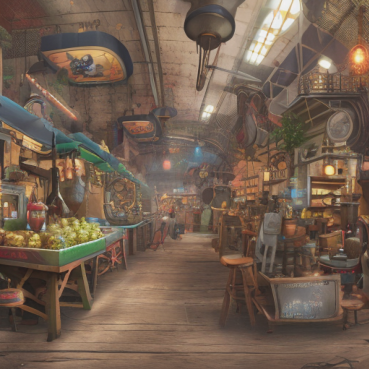}
    \end{subfigure}\hfill
    \begin{subfigure}{0.19\linewidth}
        \includegraphics[width=\linewidth]{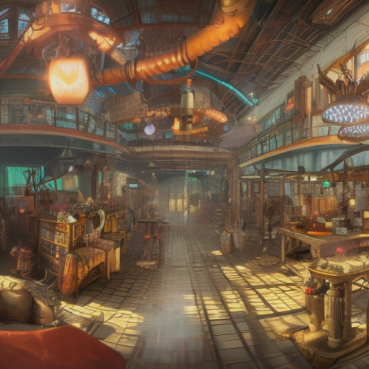}
    \end{subfigure}\hfill
    \begin{subfigure}{0.19\linewidth}
        \includegraphics[width=\linewidth]{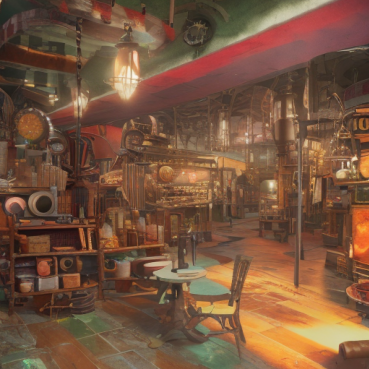}
    \end{subfigure}\hfill
    \begin{subfigure}{0.19\linewidth}
        \includegraphics[width=\linewidth]{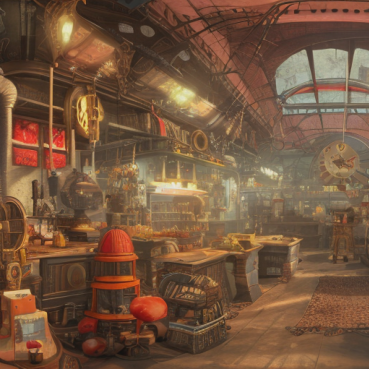}
    \end{subfigure}
    \caption*{steampunk market interior, colorful, 3 d scene, greg rutkowski, zabrocki, karlkka, jayison devadas, trending on artstation, 8 k, ultra wide angle, zenith view, pincushion lens effect}
    \vspace{1em}

    \begin{subfigure}{0.19\linewidth}
        \includegraphics[width=\linewidth]{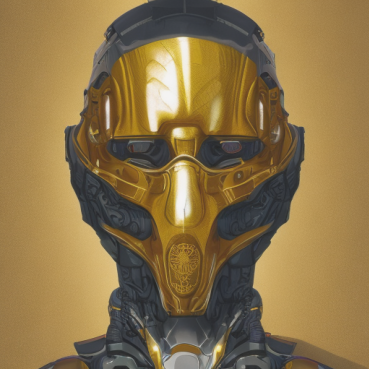}
    \end{subfigure}\hfill
    \begin{subfigure}{0.19\linewidth}
        \includegraphics[width=\linewidth]{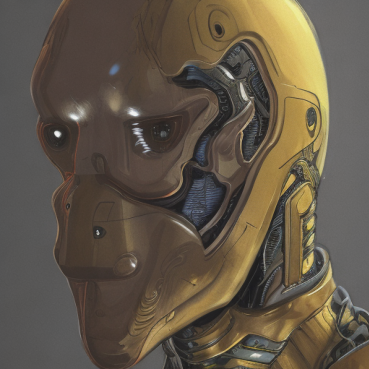}
    \end{subfigure}\hfill
    \begin{subfigure}{0.19\linewidth}
        \includegraphics[width=\linewidth]{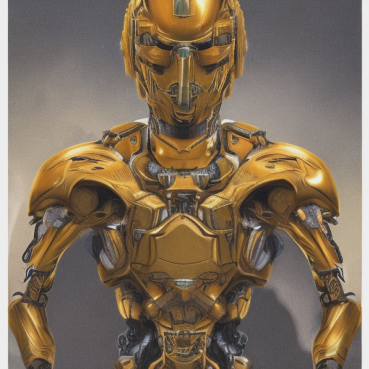}
    \end{subfigure}\hfill
    \begin{subfigure}{0.19\linewidth}
        \includegraphics[width=\linewidth]{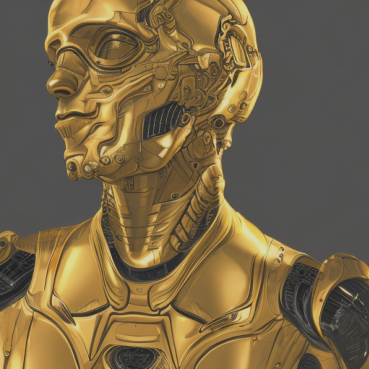}
    \end{subfigure}\hfill
    \begin{subfigure}{0.19\linewidth}
        \includegraphics[width=\linewidth]{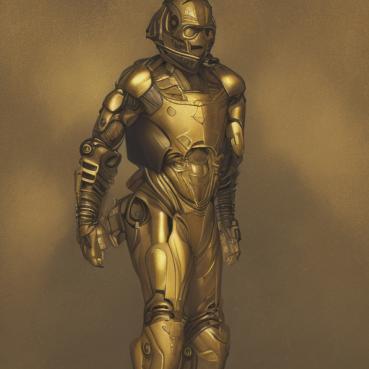}
    \end{subfigure}
    \caption*{A portrait of a cyborg in a golden suit, D\&D sci-fi, artstation, concept art, highly detailed illustration.}
    \vspace{1em}

    \begin{subfigure}{0.19\linewidth}
        \includegraphics[width=\linewidth]{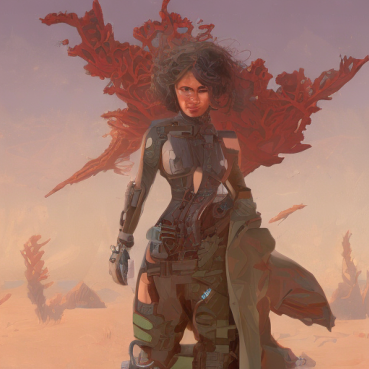}
    \end{subfigure}\hfill
    \begin{subfigure}{0.19\linewidth}
        \includegraphics[width=\linewidth]{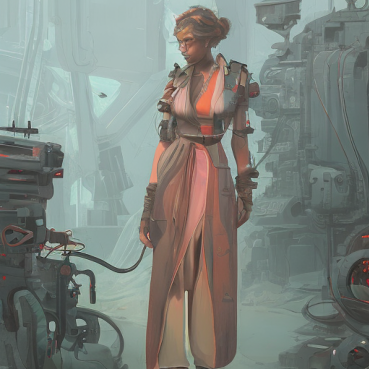}
    \end{subfigure}\hfill
    \begin{subfigure}{0.19\linewidth}
        \includegraphics[width=\linewidth]{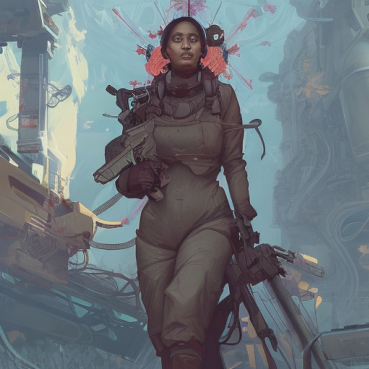}
    \end{subfigure}\hfill
    \begin{subfigure}{0.19\linewidth}
        \includegraphics[width=\linewidth]{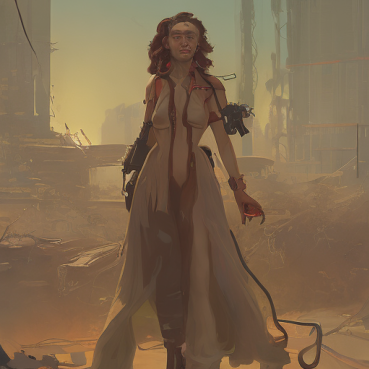}
    \end{subfigure}\hfill
    \begin{subfigure}{0.19\linewidth}
        \includegraphics[width=\linewidth]{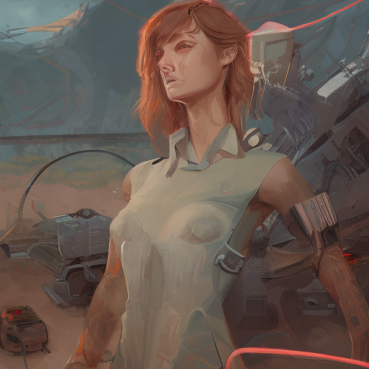}
    \end{subfigure}
    \caption*{A full portrait of a beautiful post apocalyptic offworld nanotechnician, intricate, elegant, highly detailed, digital painting, artstation, concept art, smooth, sharp focus, illustration, art by Krenz Cushart and Artem Demura and alphonse mucha}

    \vspace{0.5em}
    \caption{Watermarked images generated using Stable Diffusion 2.1 using prompts from the Stable Diffusion Prompts dataset}
    \label{fig:sd21_appendix_full}
\end{figure*}

\begin{figure*}[t]
    \centering


    \begin{subfigure}{0.19\linewidth}
        \includegraphics[width=\linewidth]{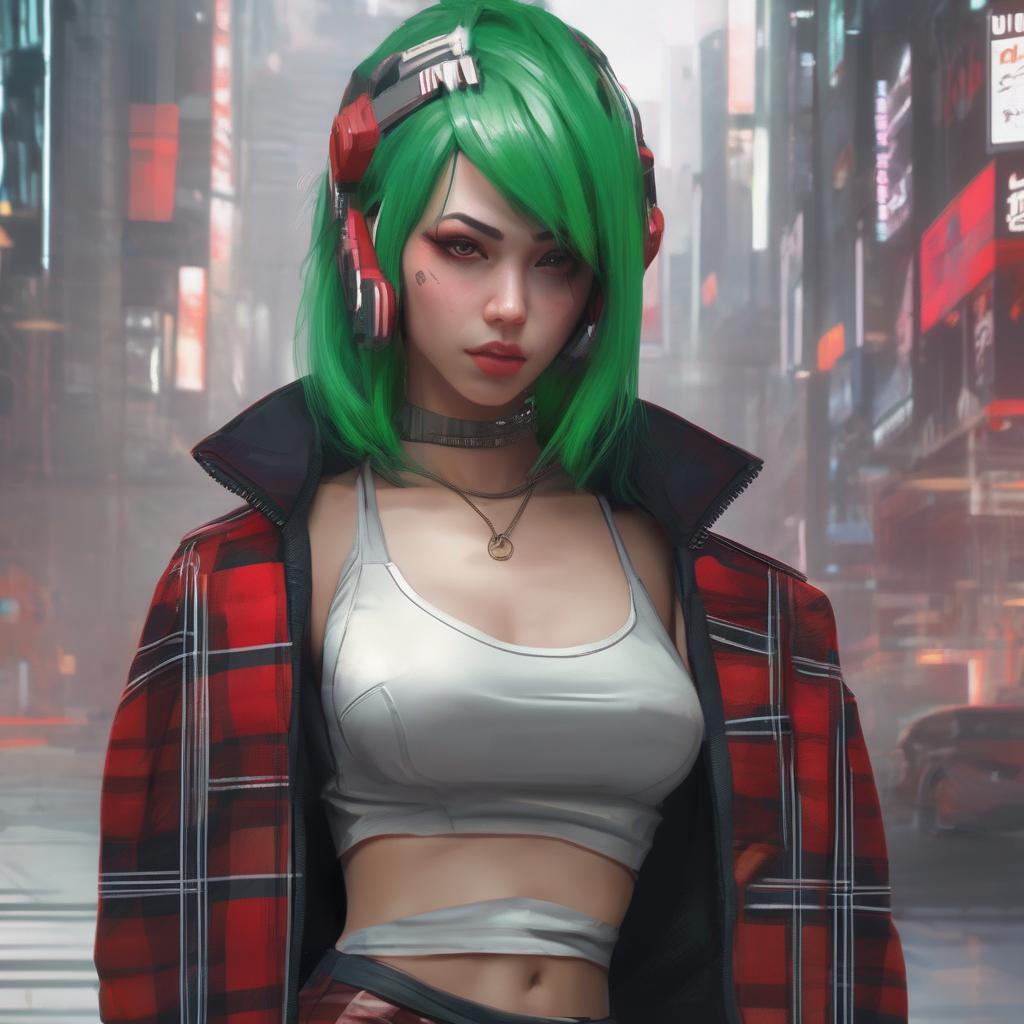}
    \end{subfigure}\hfill
    \begin{subfigure}{0.19\linewidth}
        \includegraphics[width=\linewidth]{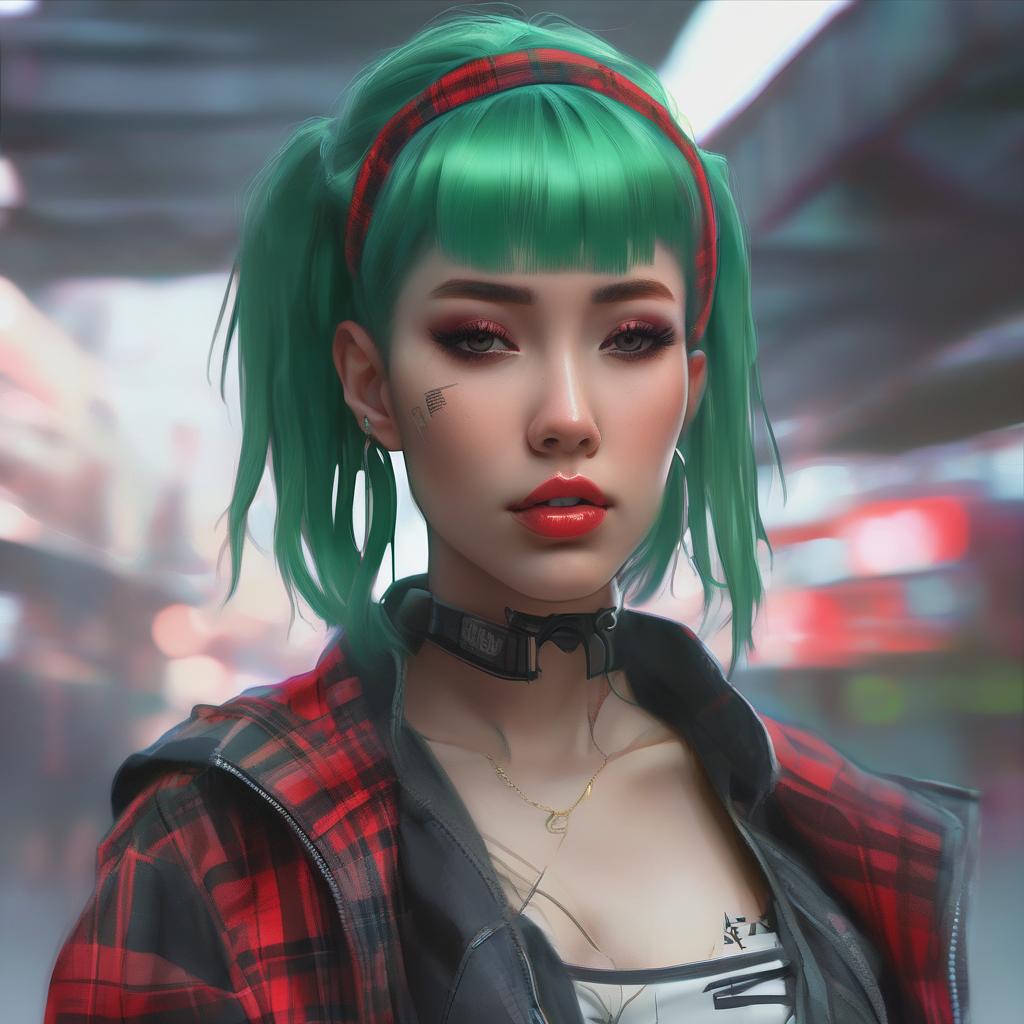}
    \end{subfigure}\hfill
    \begin{subfigure}{0.19\linewidth}
        \includegraphics[width=\linewidth]{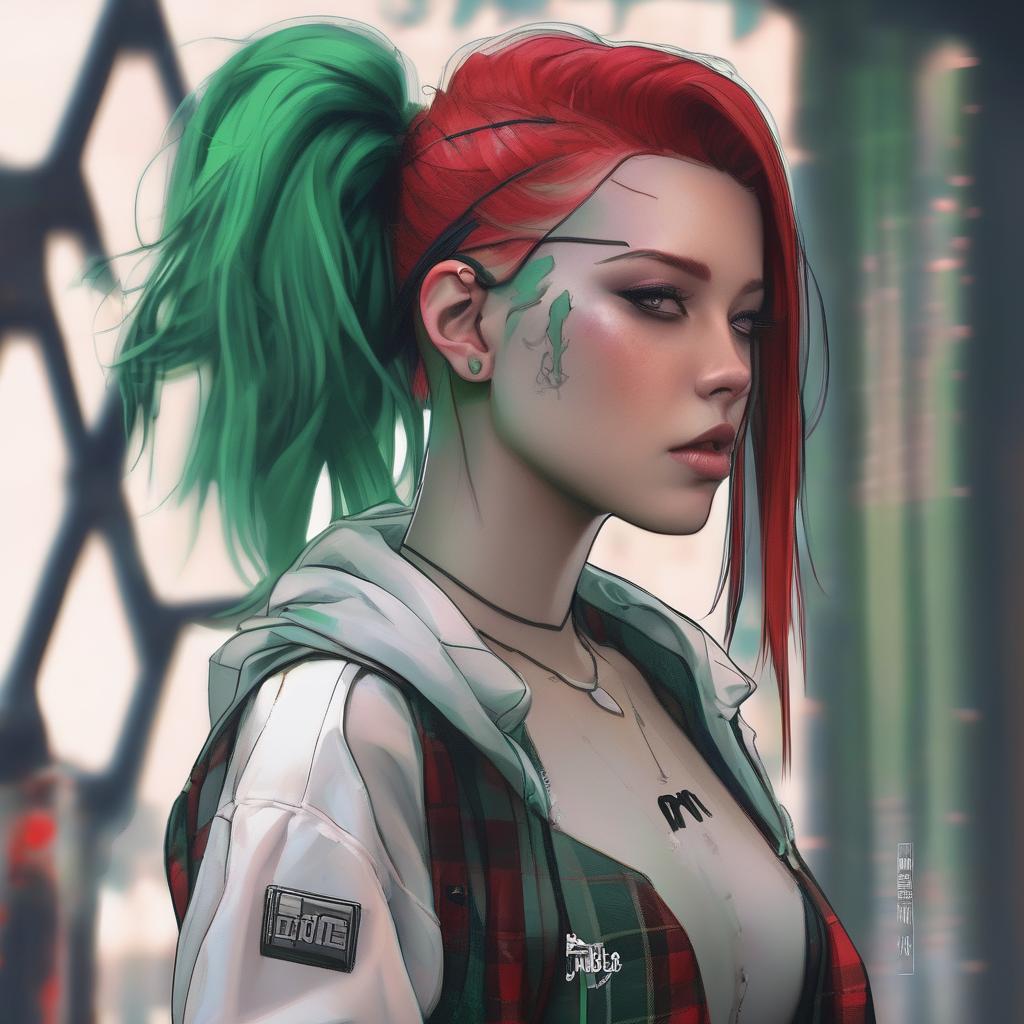}
    \end{subfigure}\hfill
    \begin{subfigure}{0.19\linewidth}
        \includegraphics[width=\linewidth]{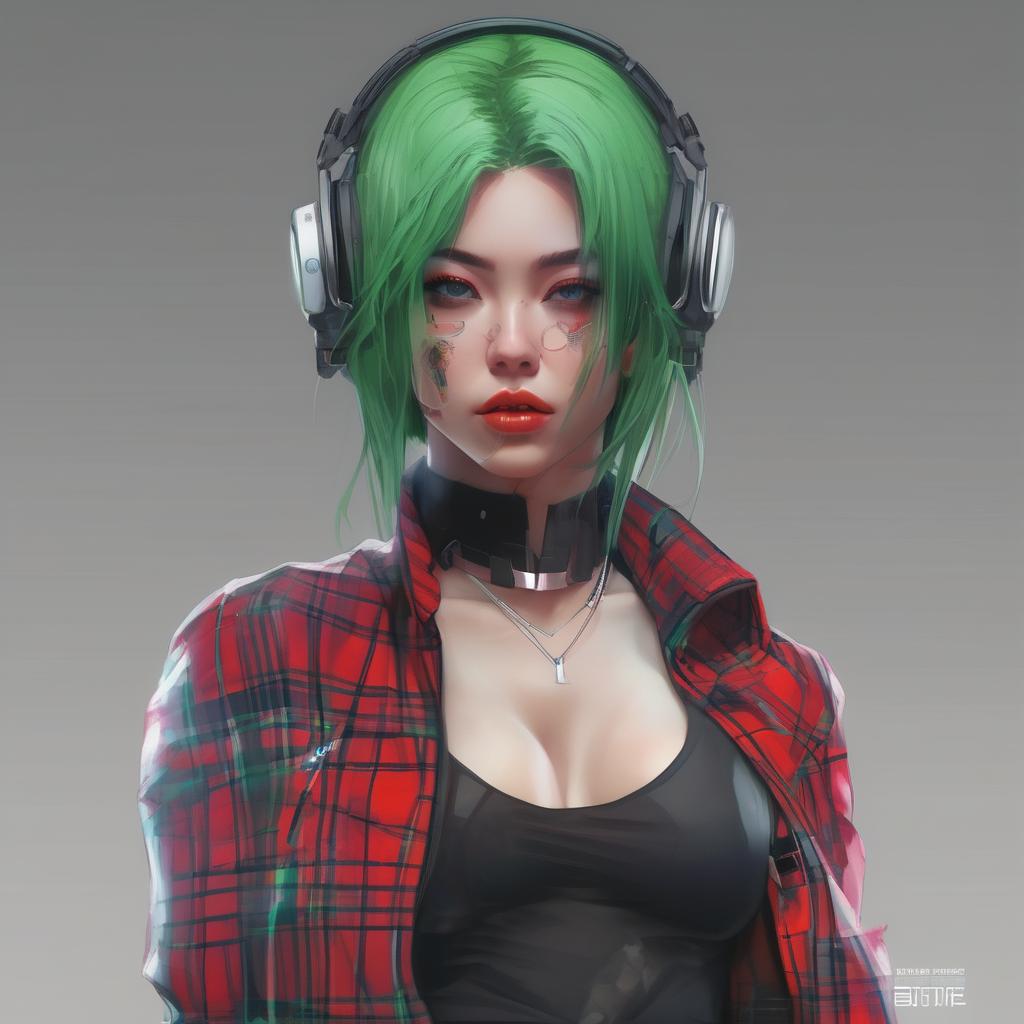}
    \end{subfigure}\hfill
    \begin{subfigure}{0.19\linewidth}
        \includegraphics[width=\linewidth]{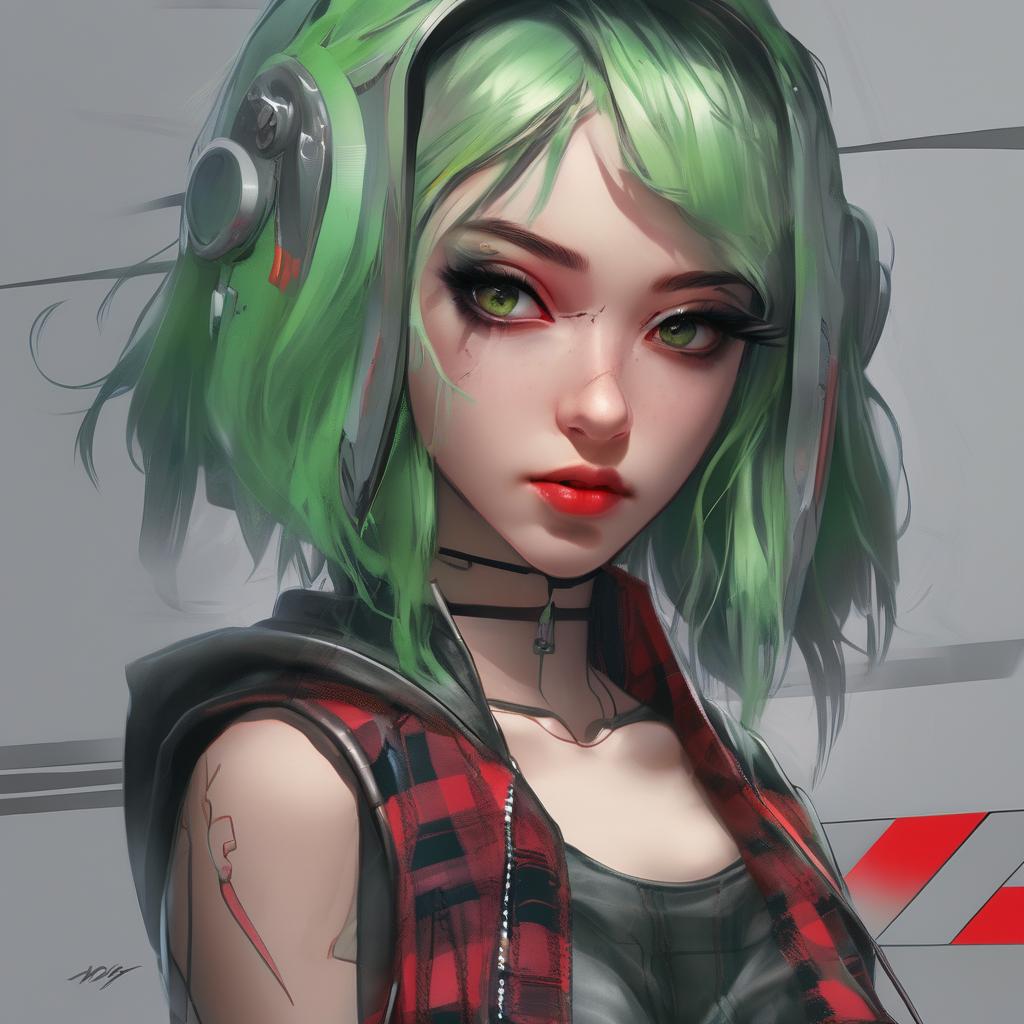}
    \end{subfigure}
    \caption*{cyberpunk woman with green hair wearing futuristic clothes and a red plaid miniskirt, beautiful face, by wlop, by artgerm, by yoshitaka amano, digital art, matte art, octane render, lineart, pop art}
    \vspace{1em}

    \begin{subfigure}{0.19\linewidth}
        \includegraphics[width=\linewidth]{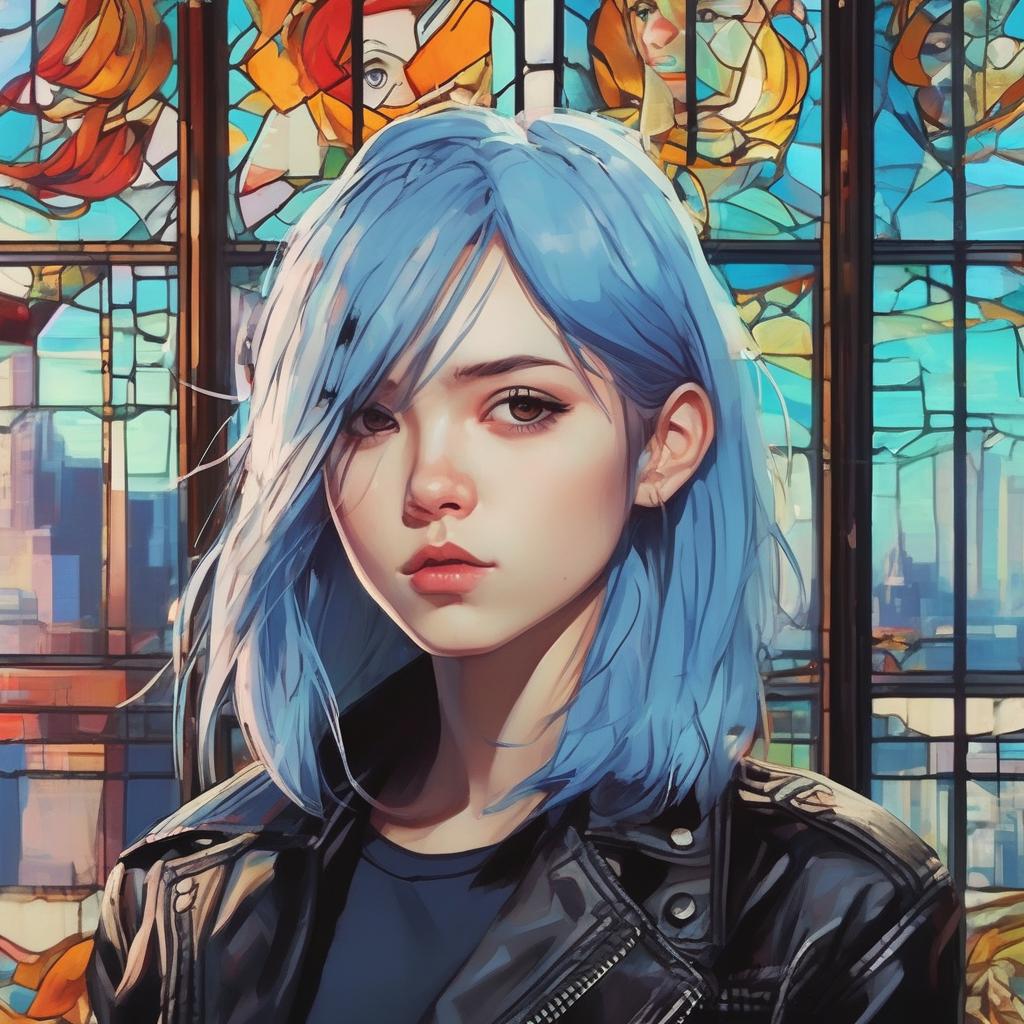}
    \end{subfigure}\hfill
    \begin{subfigure}{0.19\linewidth}
        \includegraphics[width=\linewidth]{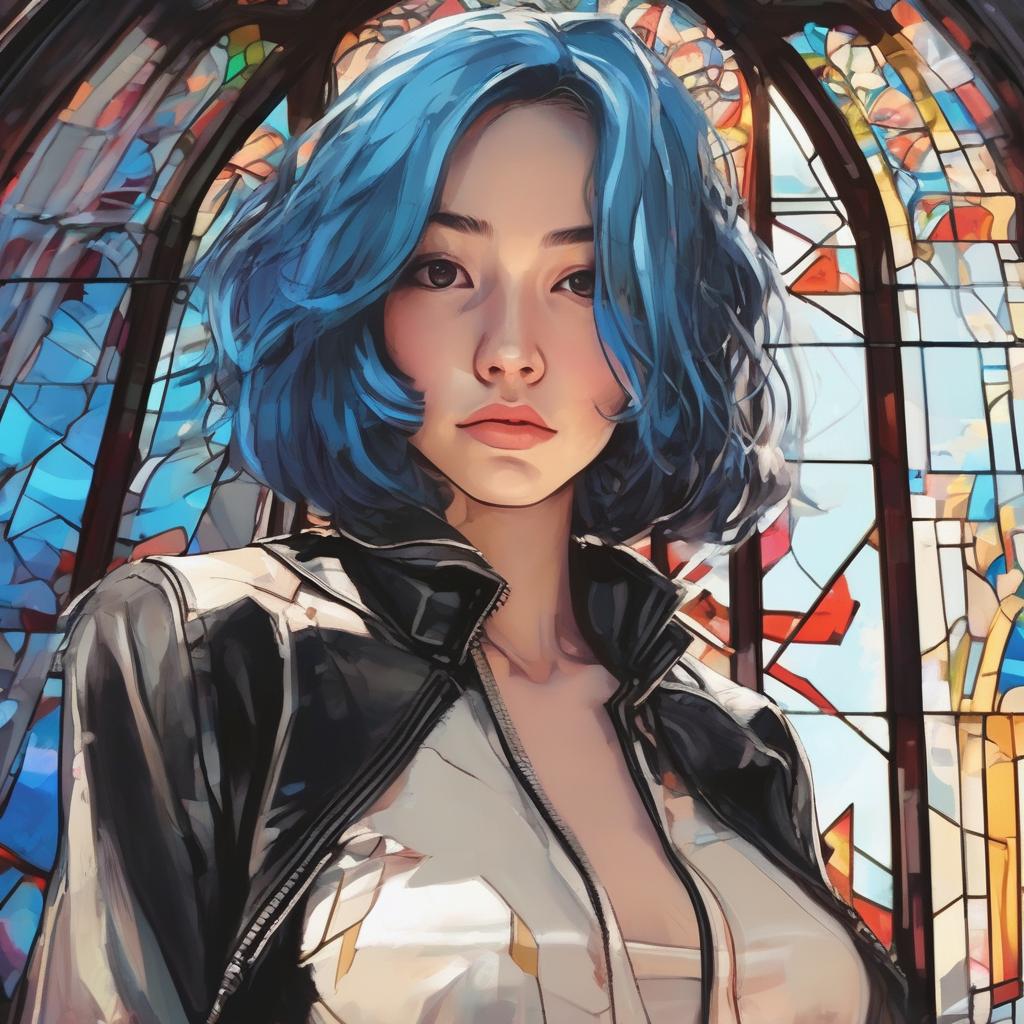}
    \end{subfigure}\hfill
    \begin{subfigure}{0.19\linewidth}
        \includegraphics[width=\linewidth]{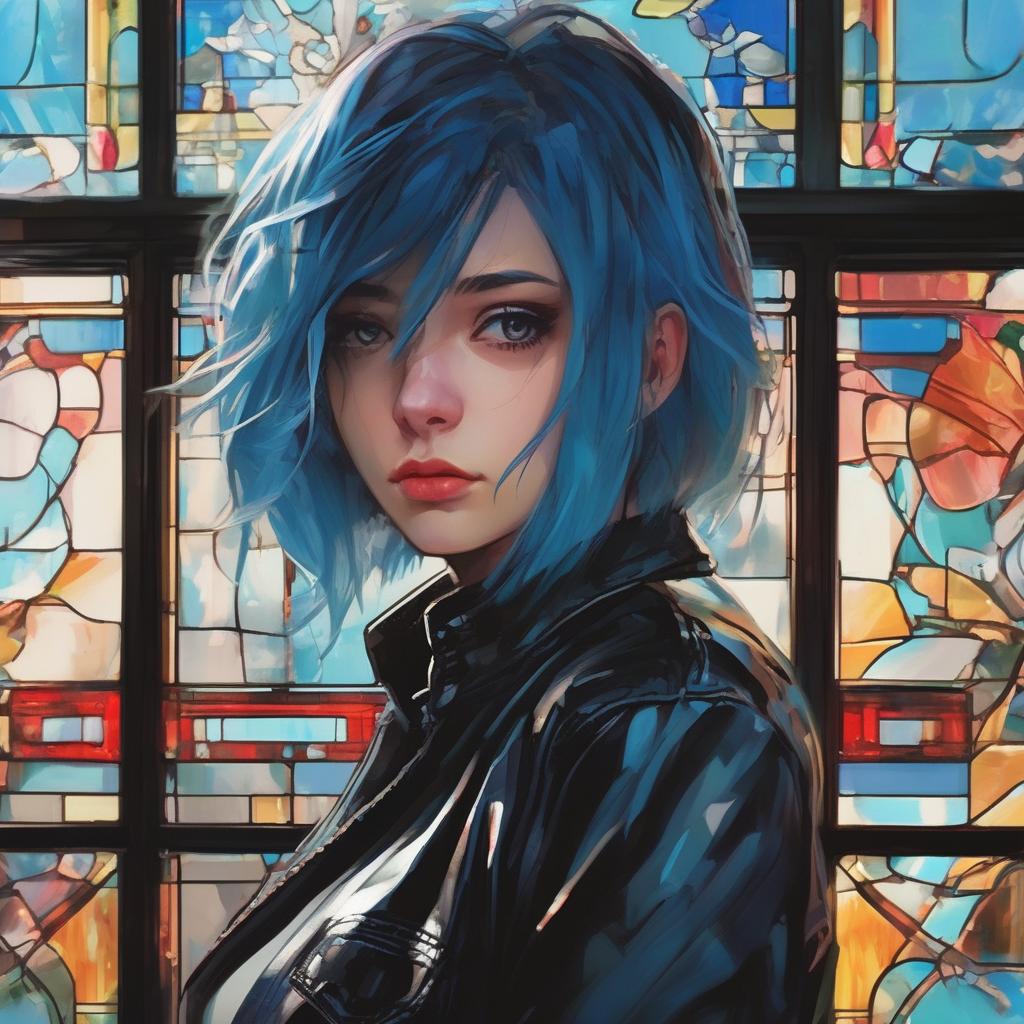}
    \end{subfigure}\hfill
    \begin{subfigure}{0.19\linewidth}
        \includegraphics[width=\linewidth]{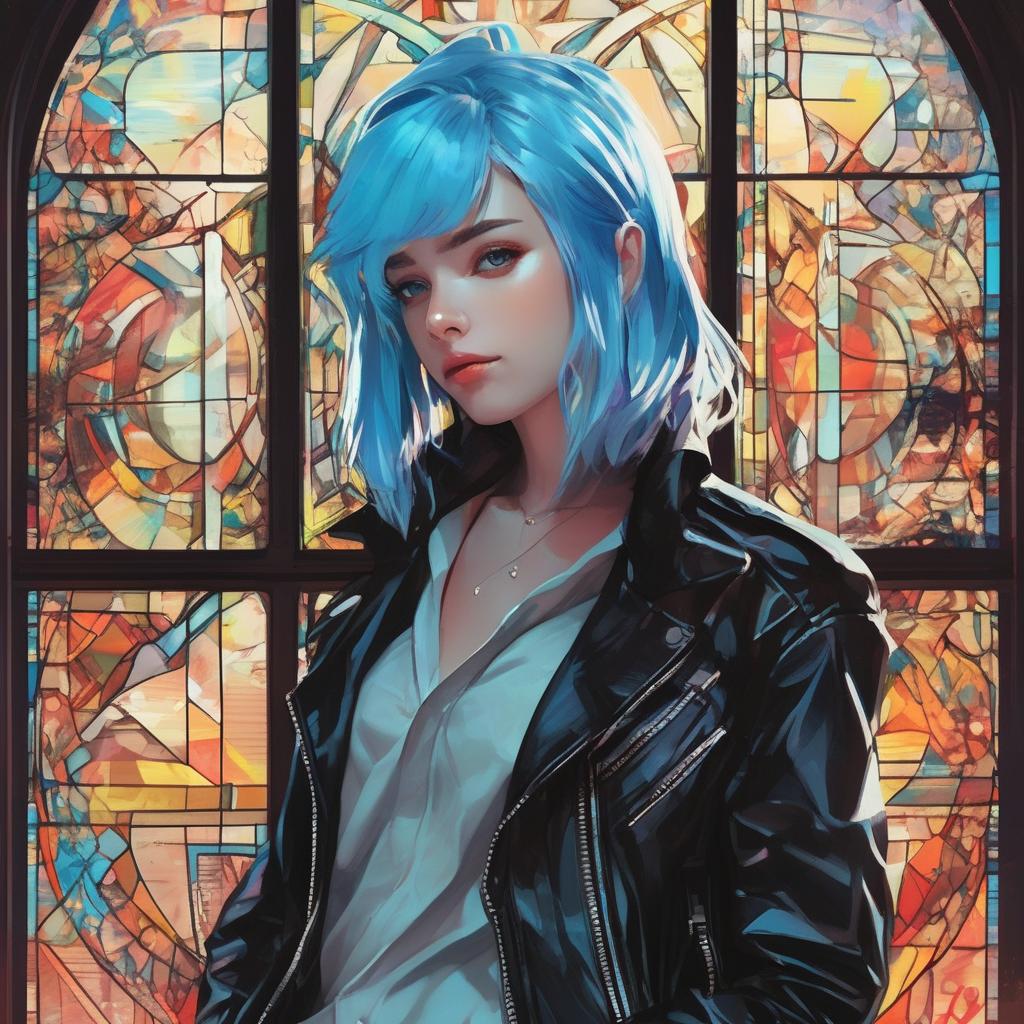}
    \end{subfigure}\hfill
    \begin{subfigure}{0.19\linewidth}
        \includegraphics[width=\linewidth]{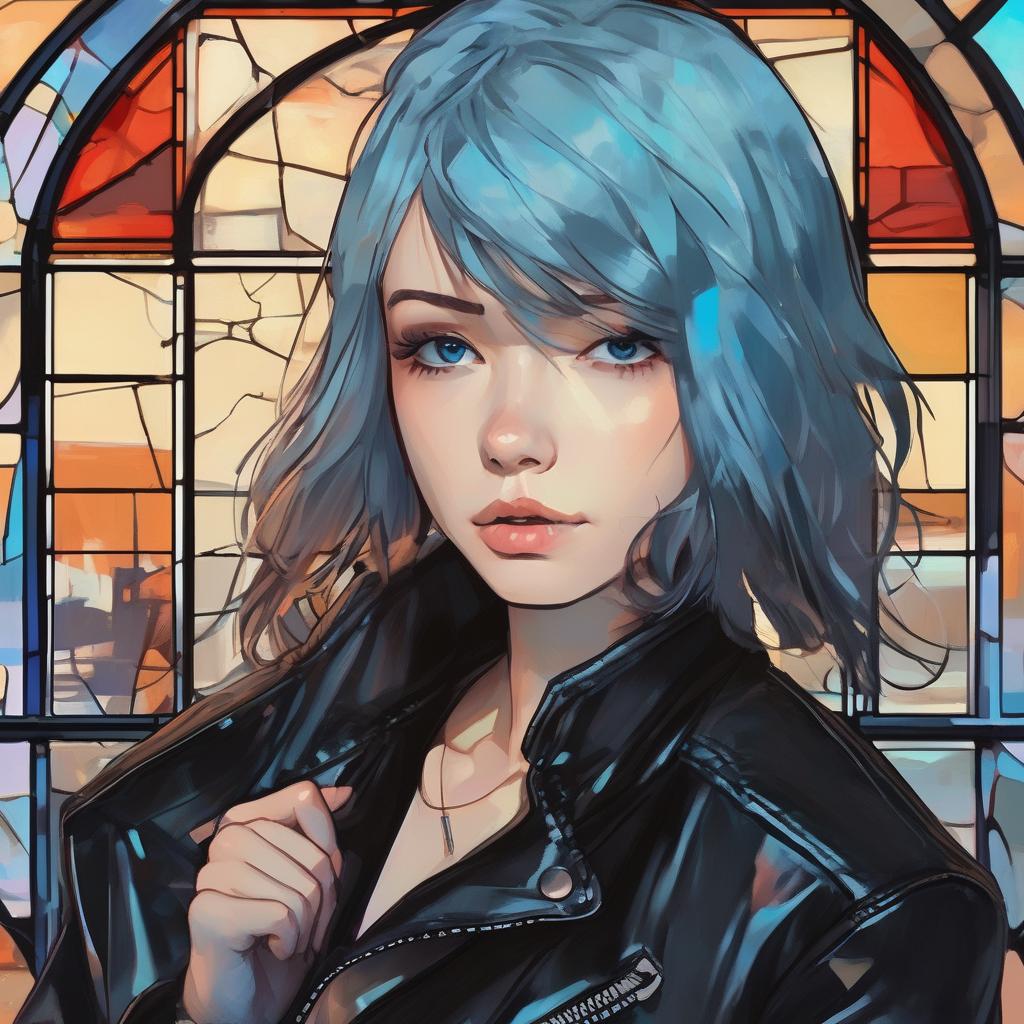}
    \end{subfigure}
    \caption*{ilya kuvshinov with long hair, sky blue hair, hazel eyes, boyish face, black leather jacket, professional digital painting, concept art, award - winning photography, cinematic, stained glass window, wlop, art by andy warhol, pixiv art, yoshitaka amano}
    \vspace{1em}

    \begin{subfigure}{0.19\linewidth}
        \includegraphics[width=\linewidth]{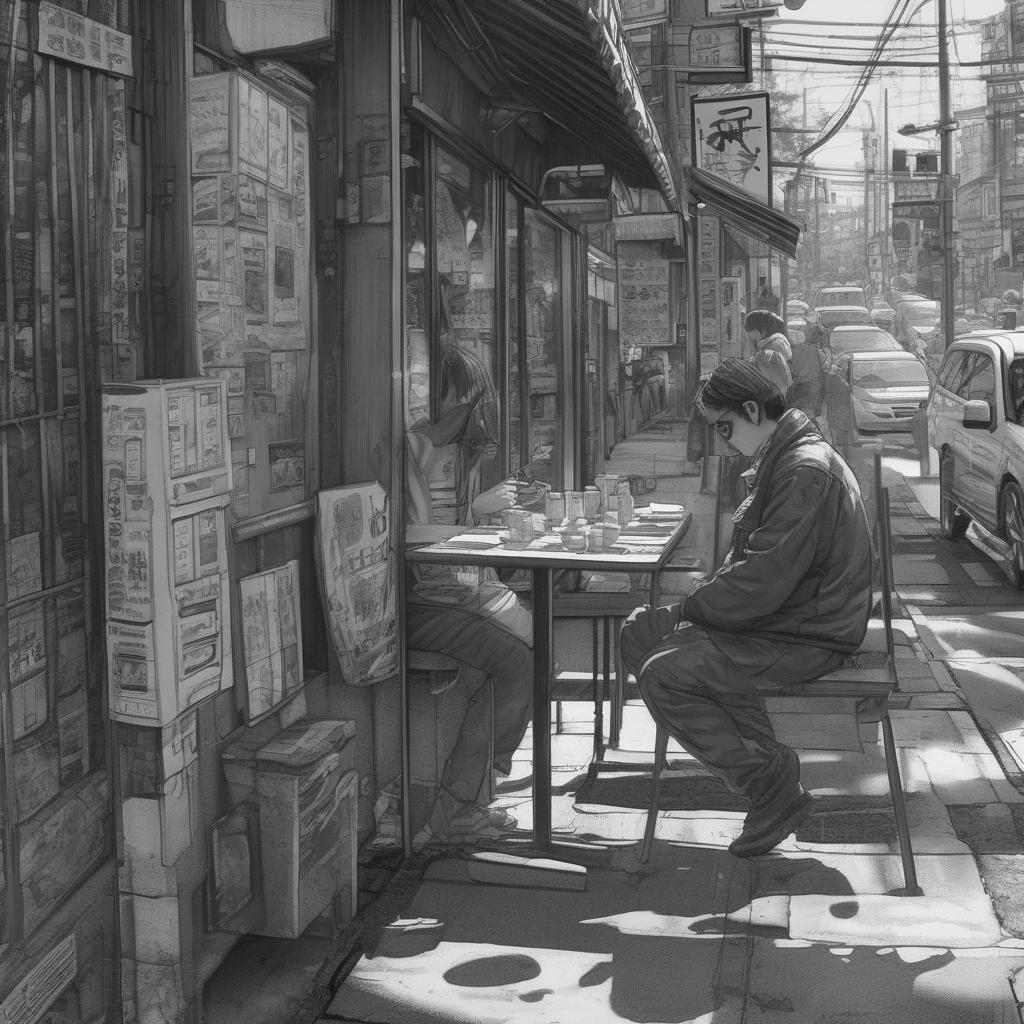}
    \end{subfigure}\hfill
    \begin{subfigure}{0.19\linewidth}
        \includegraphics[width=\linewidth]{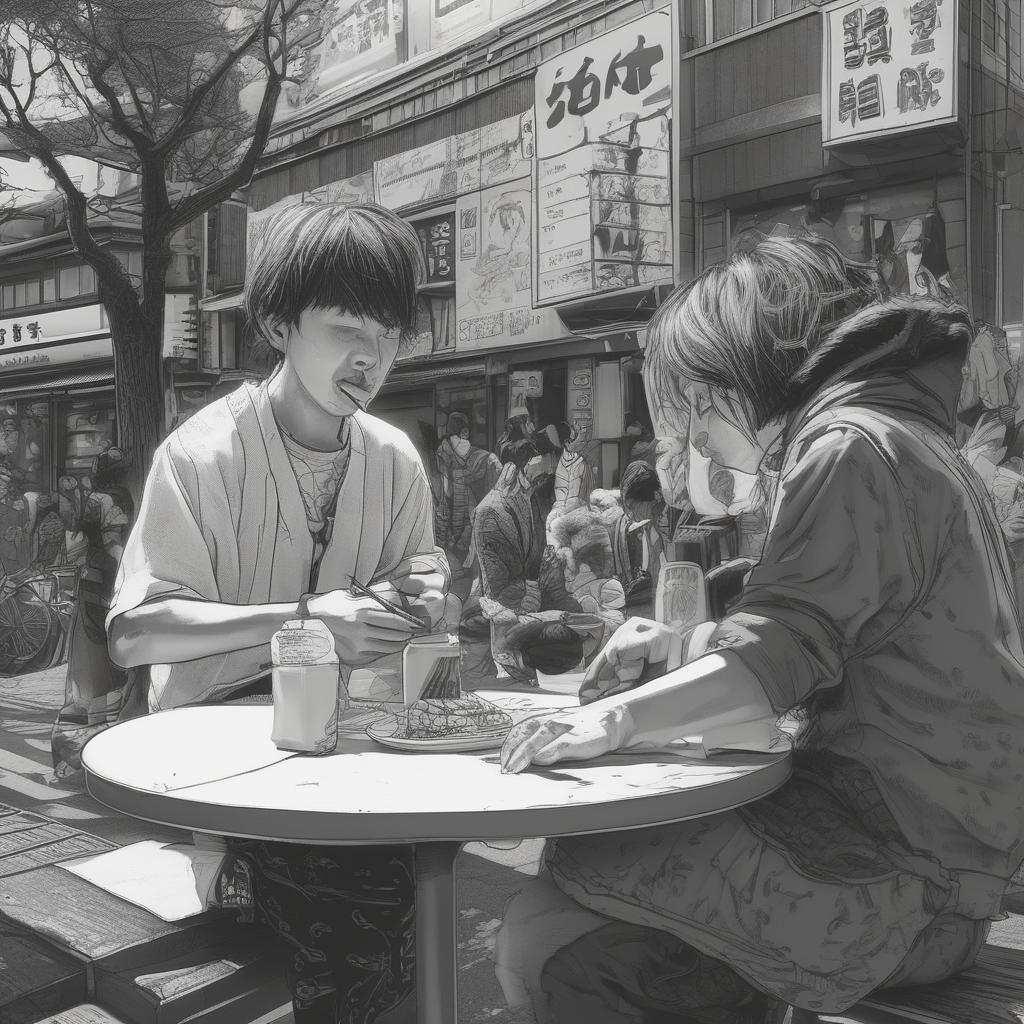}
    \end{subfigure}\hfill
    \begin{subfigure}{0.19\linewidth}
        \includegraphics[width=\linewidth]{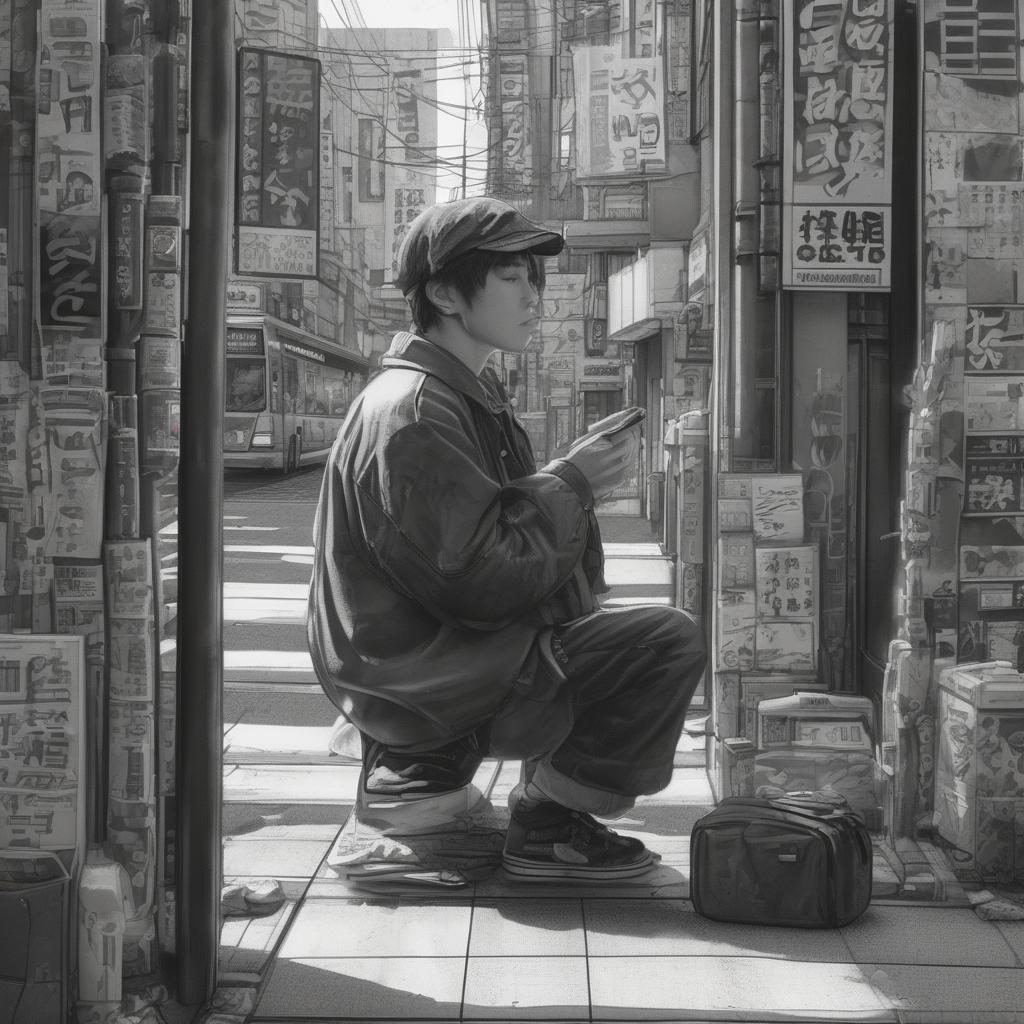}
    \end{subfigure}\hfill
    \begin{subfigure}{0.19\linewidth}
        \includegraphics[width=\linewidth]{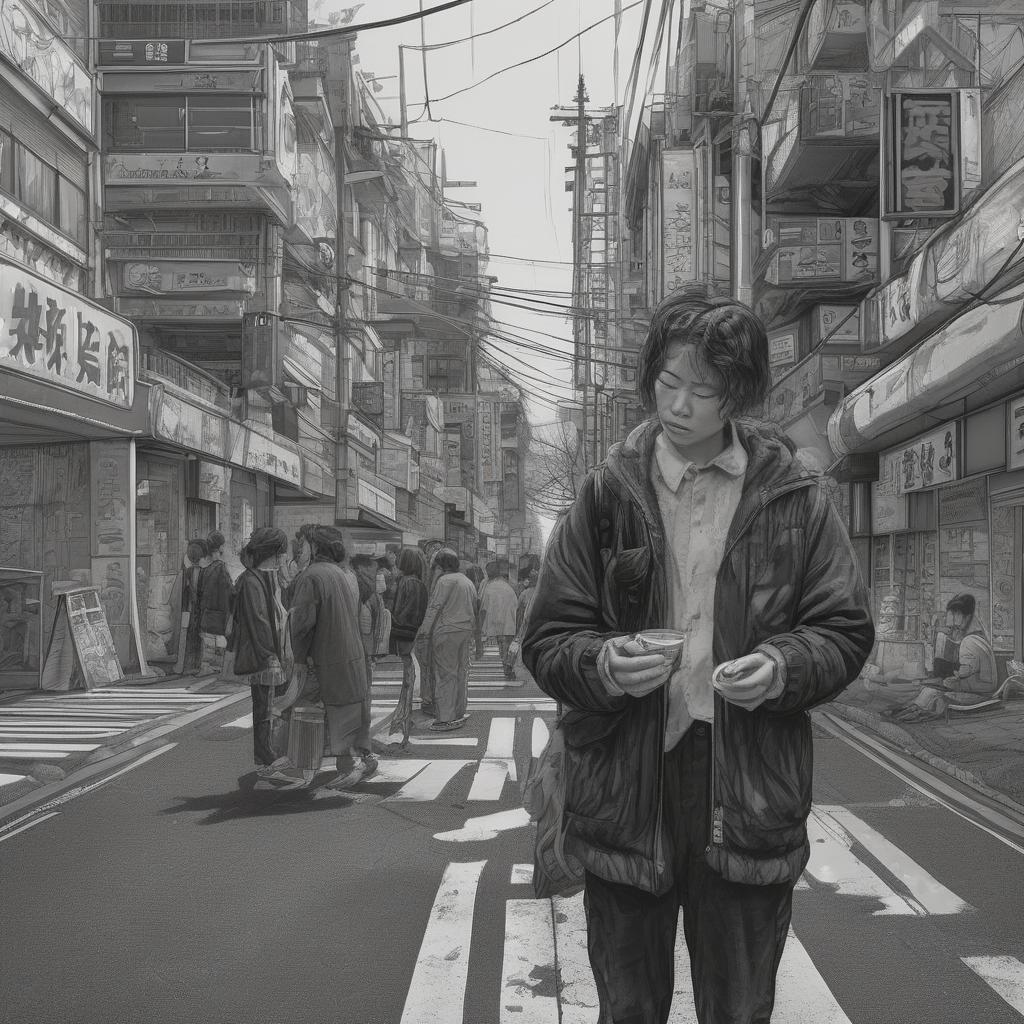}
    \end{subfigure}\hfill
    \begin{subfigure}{0.19\linewidth}
        \includegraphics[width=\linewidth]{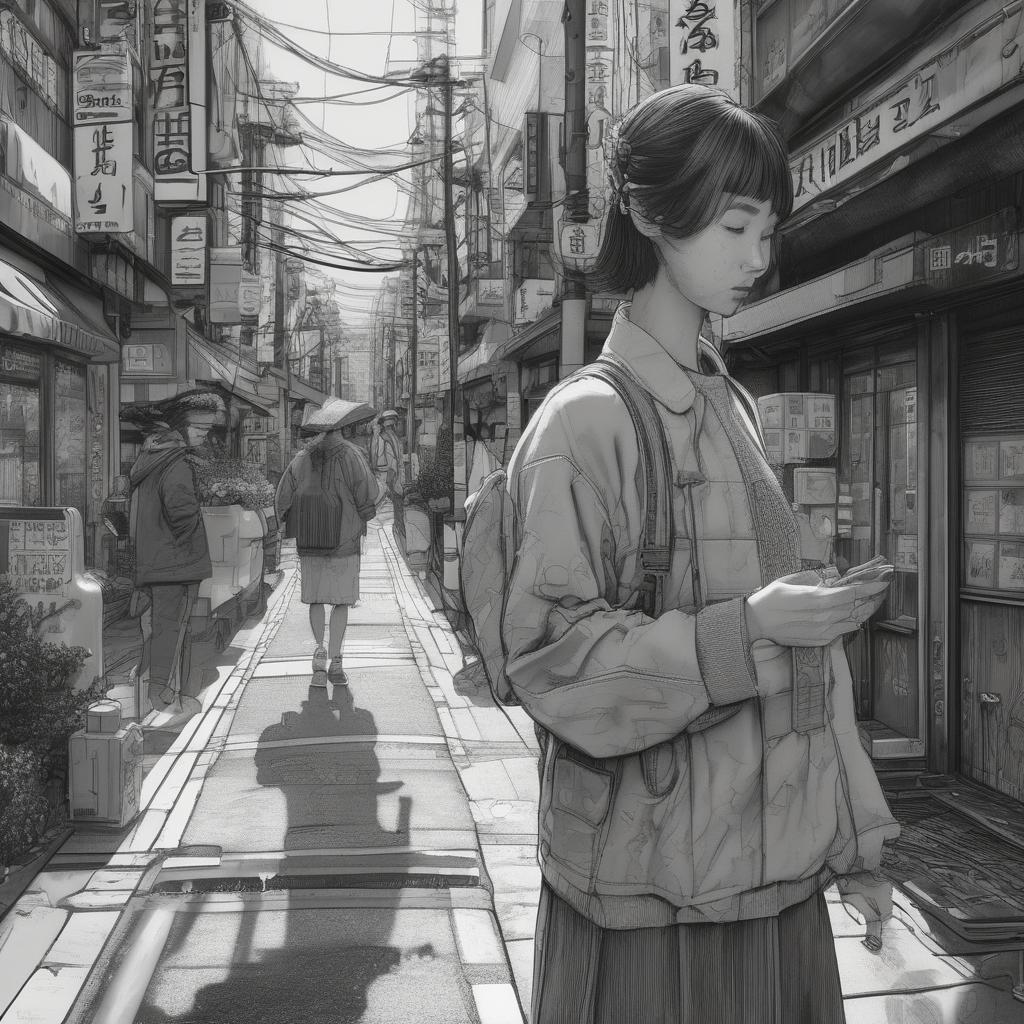}
    \end{subfigure}
    \caption*{eating crayons and being reborn in the loving warmth of jehovah on the streets of harajuku, in the style of mondo grosso killian eng kawase hasui james jean, artstation trending, 8 k, photorealistic, volumetric lighting caustics, black and white, detailed af}
    \vspace{1em}

    \begin{subfigure}{0.19\linewidth}
        \includegraphics[width=\linewidth]{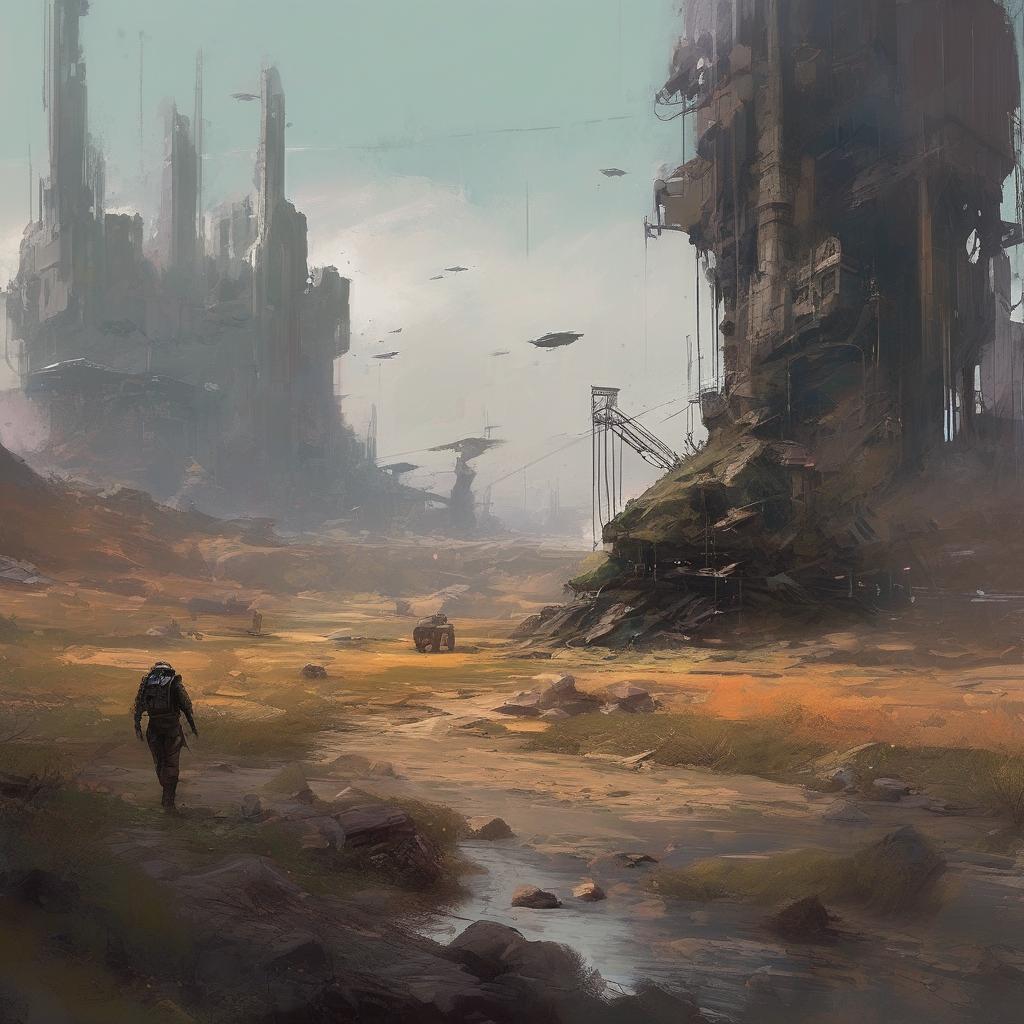}
    \end{subfigure}\hfill
    \begin{subfigure}{0.19\linewidth}
        \includegraphics[width=\linewidth]{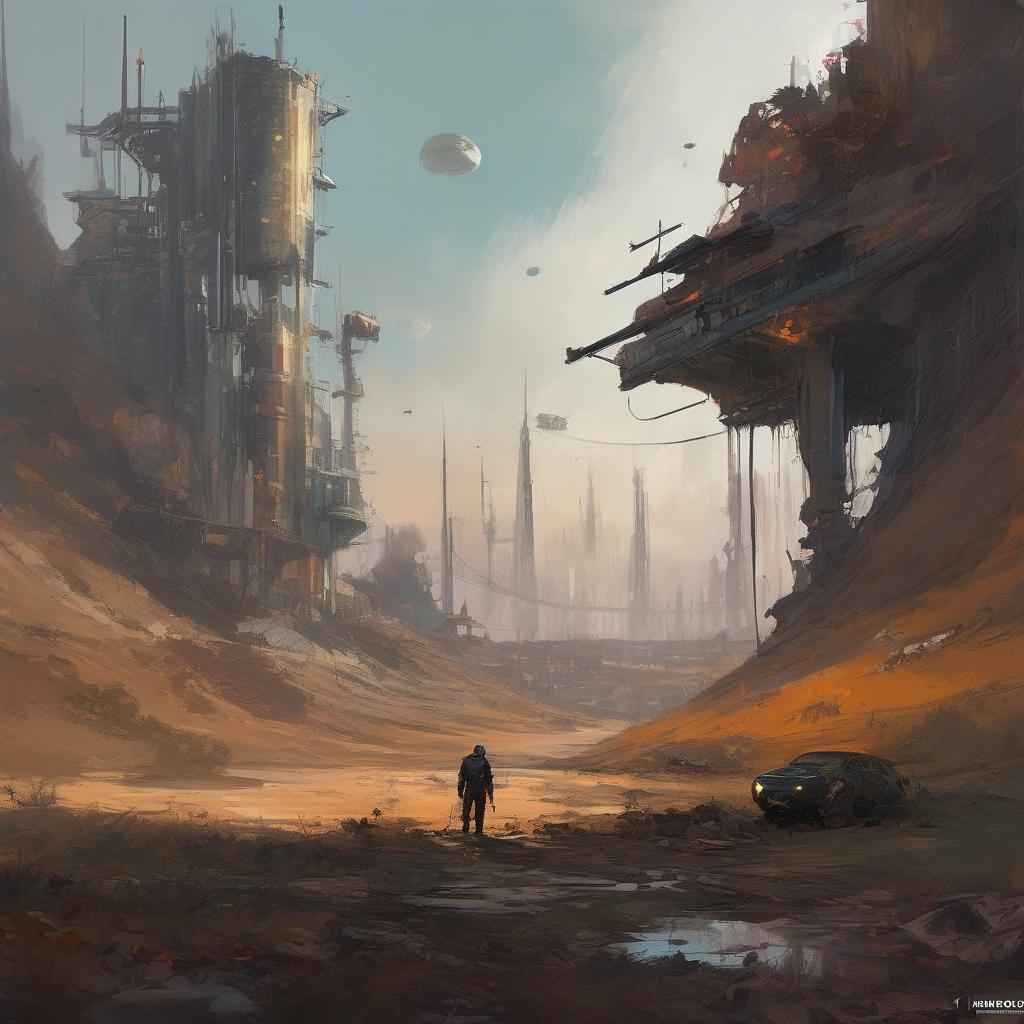}
    \end{subfigure}\hfill
    \begin{subfigure}{0.19\linewidth}
        \includegraphics[width=\linewidth]{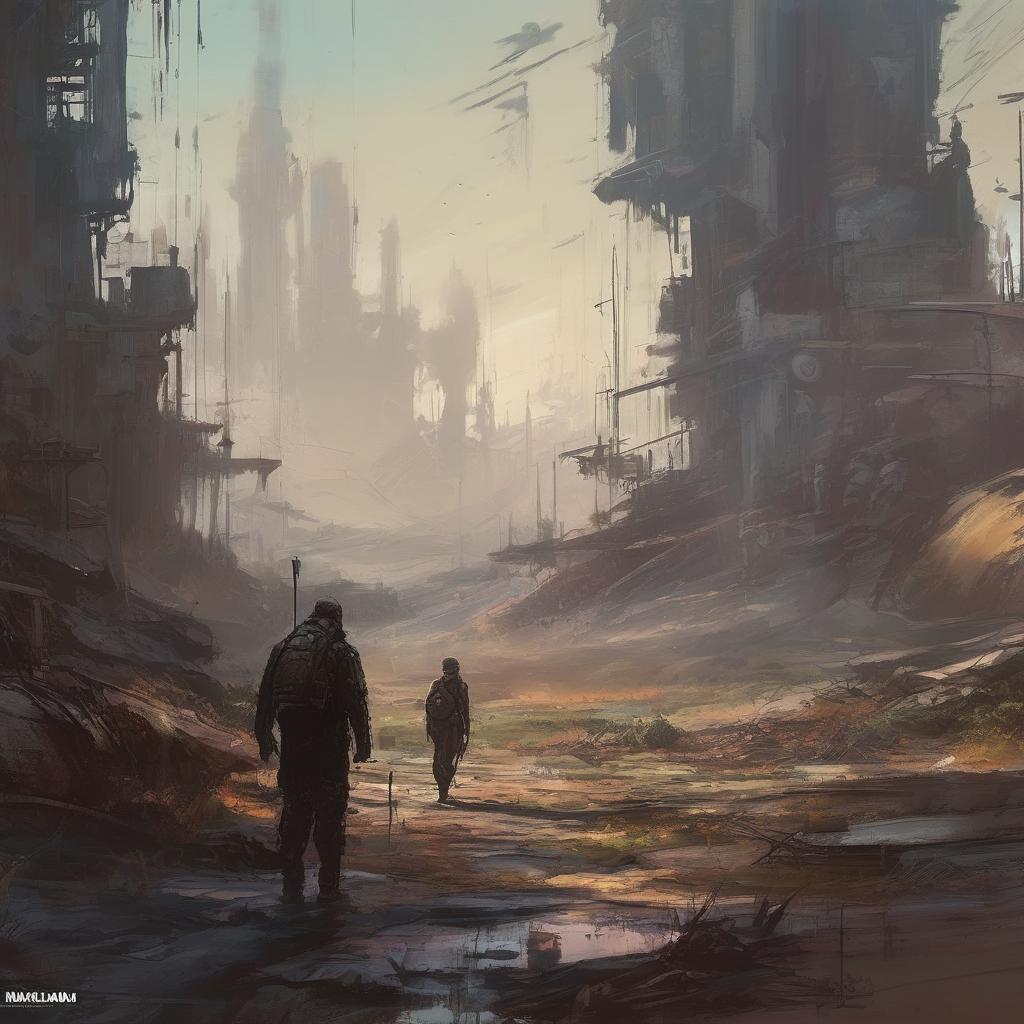}
    \end{subfigure}\hfill
    \begin{subfigure}{0.19\linewidth}
        \includegraphics[width=\linewidth]{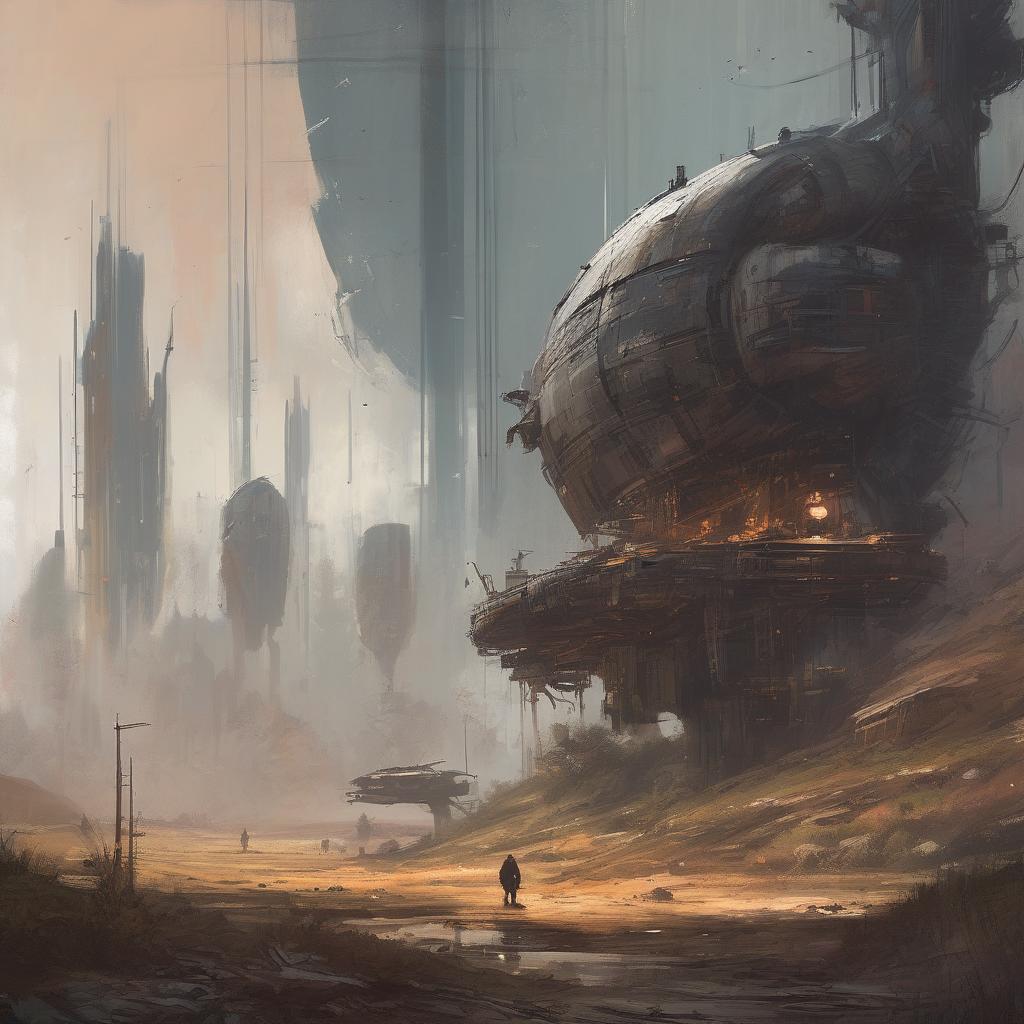}
    \end{subfigure}\hfill
    \begin{subfigure}{0.19\linewidth}
        \includegraphics[width=\linewidth]{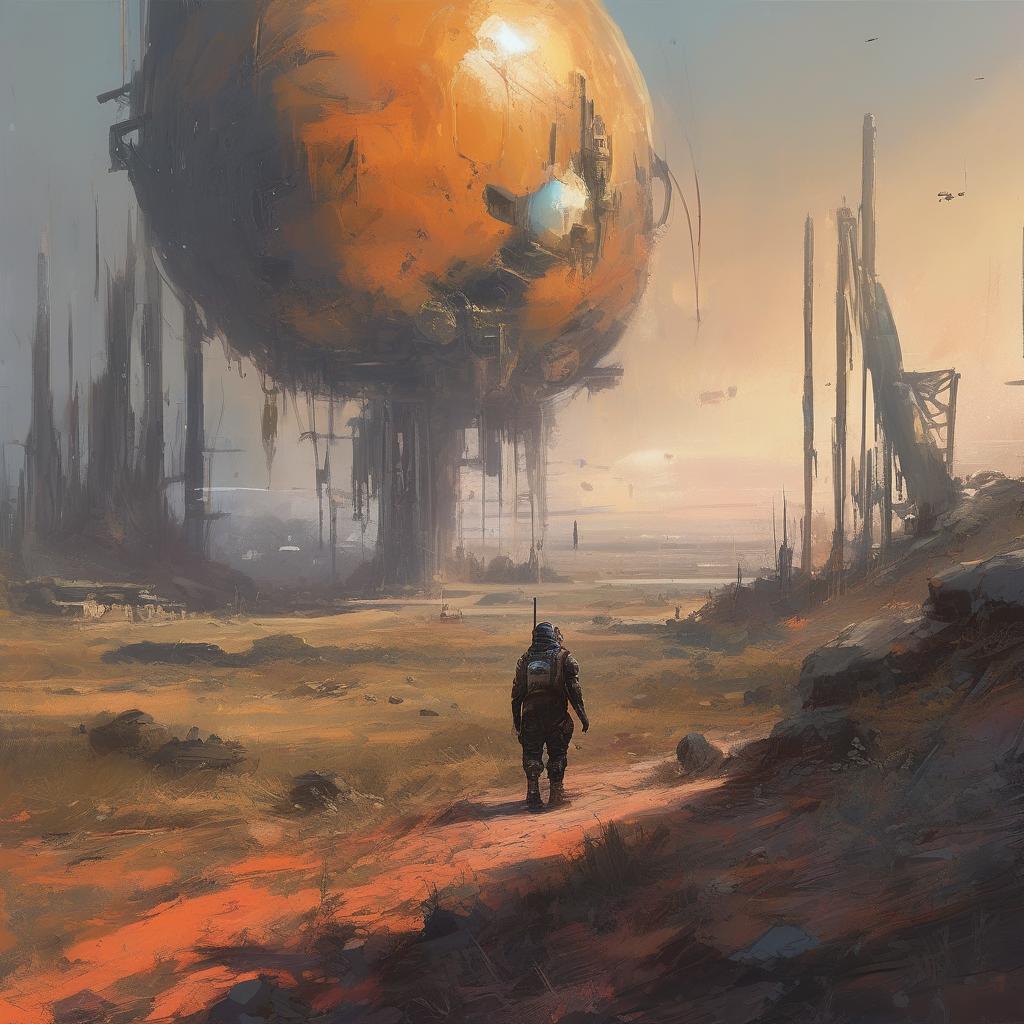}
    \end{subfigure}
    \caption*{ismail inceoglu epic oil on canvas painting of a sci - fi landscape, line art!!, ink, art concept for a book cover!!, harmonic colours, magic the gathering art, sci - fi movie still, game of thrones cardgame style, highly detailed, centered, hyperrealism, romantic, concept art, smooth, sharp focus trending on artstation, by greg manchess and by craig mullins and by kilian eng and by jake parker}

    \vspace{0.5em}
    \caption{Watermarked images generated using Stable Diffusion XL using prompts from the Stable Diffusion Prompts dataset}
    \label{fig:img/sdxl_appendix_full}
\end{figure*}

\end{document}